\theoremstyle{plain}
\newtheorem{theorem}{Theorem}
\newtheorem{proposition}{Proposition}
\newtheorem{corollary}{Corollary}
\definecolor{deeplilac}{rgb}{0.6, 0.33, 0.73}
\newcommand{\inl}{INL -- International Iberian Nanotechnology Laboratory, Av. Mestre Jos\'{e} Veiga s/n, 4715-330 Braga, Portugal} 
\newcommand{\pisa}{Department of Physics ``E. Fermi'', University of Pisa, Largo B. Pontecorvo 3, 56127 Pisa, Italy} 
\newcommand{\ifusp}{Department of Mathematical Physics, Institute of Physics, University of São Paulo,\\ Rua do Matão 1371, São Paulo 05508-090, São Paulo, Brazil} 
\newcommand{\cfum}{Centro de F\'{i}sica, Universidade do Minho, Campus de Gualtar, 4710-057 Braga, Portugal} 
\newcommand{\iip}{International Institute of Physics, Federal University of Rio Grande do Norte, 59078-970, Natal, Brazil} 
\newcommand{\ulm}{Institute of Theoretical Physics, Ulm University, Albert-Einstein-Allee 11 89081, Ulm, Germany}
\begin{document}

\title{Contextuality in anomalous heat flow}




\author{Naim E. Comar}
\email{naim.comar@usp.br}
\affiliation{\ifusp}
\affiliation{\iip}

\author{Danilo Cius}
\email{cius@if.usp.br}
\affiliation{\ifusp}

\author{Luis F. Santos}
\email{luisf@usp.br}
\affiliation{\ifusp}

\author{Rafael Wagner}
\email{rafael.wagner@inl.int}
\affiliation{\ulm}
\affiliation{\inl}
\affiliation{\cfum}
\affiliation{\pisa}
 
\author{Bárbara Amaral}
\email{bamaral@if.usp.br}
\affiliation{\ifusp}
 
\date{\today}

\begin{abstract}
In classical thermodynamics, heat must spontaneously flow from hot to cold systems. In quantum thermodynamics, the same law applies when considering multipartite product thermal states evolving unitarily. If initial correlations are present, anomalous heat flow can happen, temporarily making cold thermal states colder and hot thermal states hotter. Such an effect can happen due to entanglement, but also because of classical randomness, hence lacking a direct connection with nonclassicality. In this work, we introduce scenarios where anomalous heat flow \emph{does} have a direct link to nonclassicality, defined as the failure of noncontextual models to explain experimental data. We start by extending known noncontextuality inequalities to a setup where sequential transformations are considered. We then show a class of quantum prepare-transform-measure protocols, {characterized by a time interval $(0,\tau_c)$} for a given critical time $\tau_c$, where anomalous heat flow happens only if a noncontextuality inequality is violated. We also analyze a recent experiment from \href{https://doi.org/10.1038/s41467-019-10333-7}{Micadei \emph{et. al.} [Nat. Commun. 10, 2456 (2019)]} and find the critical time $\tau_c$ based on their experimental parameters. We conclude by investigating heat flow in the evolution of two {qudit} systems, showing that our findings are not an artifact of using two-qubit systems.
\end{abstract}

\maketitle
\section{Introduction}
The second law of thermodynamics forbids heat to flow from a colder system to a warmer system without consuming any resource, assuming the systems are isolated~\cite{callen2006thermodynamics,blundell2010concepts}. The direction of heat flow suggests the notion of an `arrow of time' \cite{eddington2018nature}. If one takes heat flow direction and the arrow of time as equivalent phenomena, given two initially correlated interacting systems, a local observer of one of these two systems could have their arrow of time deceived~\cite{Jennings_2010,vedral2016arrow}. For instance, the heat exchanged (in the absence of work) between these two systems can initially have a `backflow', i.e., heat can flow from the colder to the warmer system. Such heat flows are said to be \emph{anomalous}~\cite{lipkabartosik2024fundamental}. Crucially, this is \emph{not} a violation of Clausius' formulation of the second law of thermodynamics, since correlations are \emph{consumed} for such anomalies to happen. Their consumption is responsible for the possibility of a reversal in the heat flow.

An attempt to visualize what is happening is to imagine a `Maxwell demon'~\cite{leff1990maxwell1,leff2002maxwell2,Rex_Andrew_Rev} that has knowledge about the correlations between two physical systems and somehow wants to use this information to perform a thermodynamic task. The demon can use these correlations to make heat flow from the cold to the hot system. This thought experiment was first discussed by Lloyd~\cite{Lloyd}, and then concretely investigated for the case of pure quantum states by Partovi~\cite{Partovi2008}, and refined by others~\cite{Jennings_2010,Bera_2017,Jevtic_2012,lostaglio2020quasiprobability,Jevtic_2015,Sagawa_2012,Koski,CAI_2024129389,Henao_2018}. Recently, the prediction of heat flow anomaly in quantum theory has been tested~\cite{Micadei_2019}.

Interestingly, there is no no-go result forbidding the demon (possessing knowledge of initial correlations) to cause anomalous heat flow in microscopic models described by \emph{classical} statistical mechanics. Indeed, as shown in Ref.~\cite{Jevtic_2015}, one can construct examples of heat flow reversals caused by correlations without quantum coherence with respect to the local energy basis. Notwithstanding, there is a bound on the \emph{amount} of heat flow anomaly that can be caused by classical correlations. In such cases, it is possible to surpass this bound only with the presence of entanglement between the initial states~\cite{Jennings_2010}, this is known as the \textit{strong heat backflow}. These results show that \emph{in general} anomalous heat flow alone \emph{cannot} indicate a departure from classical explanations, in which case quantifying the anomaly is necessary to separate classical and nonclassical heat transfer. 

In this work, we show that experimental scenarios exist where any anomalous heat flow \emph{does} indicate the failure of a classical explanation. Our working definition of classicality will be the existence of a generalized noncontextual model capable of reproducing the observed data~\cite{spekkens2005contextuality}. This notion of classicality is well-motivated conceptually~\cite{spekkens2019ontological}, it emerges under quantum Darwinist dynamics~\cite{baldijao2021noncontextuality}, and subsumes other notions of classicality such as Kochen--Specker noncontextuality~\cite{budroni2022kochenspecker,kunjwal16PhD,leifer2013maximally}, ordinary classical  mechanics~\cite[Sup. Mat. A]{lostaglio2020certifying}, or non-negative quasiprobability representations~\cite{spekkens2008negativity,schmid2024structuretheorem,schmid2024kirkwooddirac}. The failure of noncontextual models to explain data can be robustly analysed~\cite{khoshbin2024alternative,selby2024linear,rossi2023contextuality,fonseca2024robustnesscontextualitydifferenttypes}, experimentally tested~\cite{mazurek2016experimental,mazurek2021experimentally,giordani2023experimental}, and   quantified~\cite{duarte2018resource,wagner2023using,catani2024resourcetheoretic}.

One of the most useful aspects of generalized noncontextuality is that, even though noncontextual models provide an intuitive classical understanding of (fragments of) physical theories, they can reproduce counterintuitive phenomena such as no-cloning~\cite{spekkens2007evidence}, teleportation~\cite{hardy1999disentangling,spekkens2007evidence}, the impossibility of discriminating nonorthogonal states~\cite{spekkens2007evidence}, and some single-photon Mach-Zehnder coherent interference patterns~\cite{catani2023whyinterference}. Even rich sub-theories of quantum theory can be framed in terms of noncontextual models such as Gaussian quantum mechanics~\cite{bartlett2012reconstruction}, or odd dimensional (single or multisystem) stabilizer subtheories~\cite{schmid2022uniqueness}. The failure of generalized noncontextuality can therefore be considered a \emph{stringent criterion} for nonclassicality. This failure is a strong and rigorous indicator that the system exhibits nonclassical behavior. Moreover, this criterion is considered stringent because contextuality is a clear, broadly applicable, and robust property of models, severely distinctive from classical behavior.

In this work, we show that anomalous heat flow implies a violation of a generalized noncontextuality inequality in an important class of experimentally meaningful scenarios. This is an indication that contextuality can be further explored as a resource for quantum thermodynamics~\cite{deffner2019quantum,binder2019thermodynamics,Goold_2016,Vinjanampathy_2016,alicki2018introduction,Kosloff_2013,Millen_2016,upadhyaya2023happens,lostaglio2018fluctuation}. In our scenario, see Fig.~\ref{fig: intro}, initial bipartite two-qubit states $\rho$, interacting via an (energy-preserving) unitary $U(t) = e^{-it H}$ during a time interval $0 < t < \tau_c$, can cause heat backflow (or increase the normal heat flow) \emph{only} when generalized noncontextual models cannot explain the statistics witnessing this property. We call $\tau_c$ the \emph{critical} time. For times $t \geq \tau_c$ outside such intervals anomaly does not necessarily imply the failure of generalized noncontextual models to reproduce the data. 

To showcase the practical relevance of our findings, we apply our results to the experiment performed in Ref.~\cite{Micadei_2019}. We also demonstrate similar results for the partial SWAP interaction between two qu\emph{dits} with dimension $d \geq 2$, showcasing that our findings do not depend on the specific Hilbert space dimension of the physical systems.

Central for demonstrating our results is the work of Ref.~\cite{lostaglio2020certifying}. We extend their findings from a specific transformation process $T$ to one where sequentially composed transformation processes $T'\circ T$ are considered. We then find that noncontextual models, restricted to obey certain operational equivalences, must be bounded by a noncontextuality inequality we construct. Our inequality recovers the one found in Ref.~\cite{lostaglio2020certifying} as a special case. We believe that our analysis of a concrete sequential scenario will have an independent interest in the program of finding noncontextuality inequalities that take into consideration the role of transformation contextuality. 

\begin{figure}[t]
    \centering
    \includegraphics[width=0.4\textwidth]{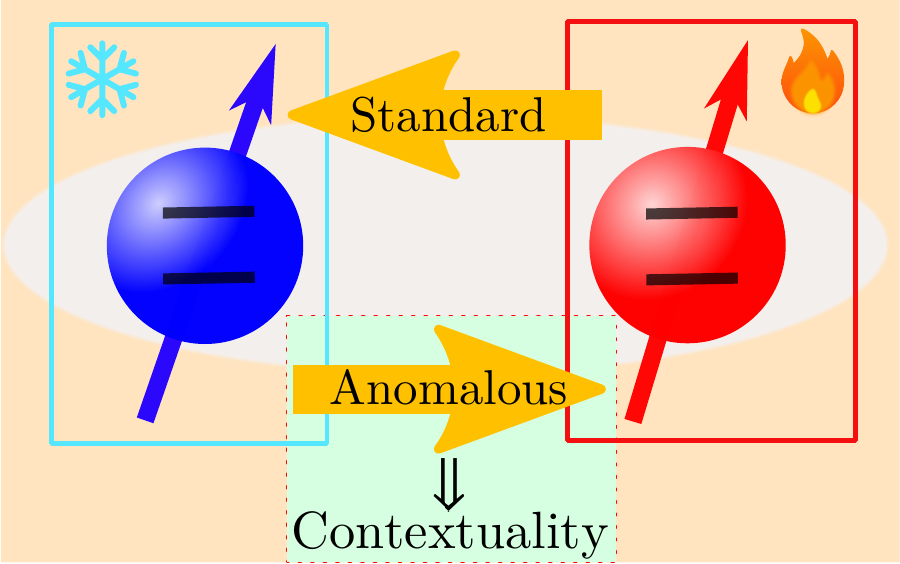}
    \caption{Sketch of our main result. A `hot' qubit is at temperature $T_A$ and a `cold' qubit is at temperature $T_B$, i.e., $T_A > T_B$. Each system individually has an associated Hamiltonian $H_A$ and $H_B$. We find that for energy preserving two-qubit interactions $U_I(t)$, i.e. satisfying $[U_I(t),H_A+H_B]=0$, any anomalous heat flow within a certain interval $0<t<\tau_c$ leads to generalized contextuality. Our results are not restricted to this simplest case of qubit systems, as we also show.}
    \label{fig: intro}
\end{figure}

This work is structured as follows: Sections~\ref{subsec_gen_setup} and~\ref{sec: clausius} describe the relevant quantum thermodynamic quantities we investigate and the possibility of reversing the usual direction of heat flow caused by initial correlations. Sec.~\ref{sec: generalized noncontextuality} reviews the concept of generalized noncontextuality and the methods developed by Ref.~\cite{lostaglio2020certifying} important to us. Section~\ref{section_3} presents our main results. We start by exposing our Theorem~\ref{theorem: sequential transformations} in which the method of Ref. \cite{lostaglio2020certifying} is extended to consider the composition of transformations. In Sec.~\ref{subsection_qubits}, we show that noncontextuality inequalities apply to a broad scenario of two interacting qubits (Theorem~\ref{theorem_3}) and show that any anomalous heat flow, for dynamical evolutions inside a critical interval $0 < t < \tau_c$, witnesses quantum contextuality. We then use our results to analyze the experiment of Ref.~\cite{Micadei_2019} and find an approximate value of $\tau_c$ for that experiment using available parameters. In Sec.~\ref{Subsection_PS_qutrits}, we consider an interaction given by a partial SWAP unitary and observe the violation of the noncontextuality inequalities for the example of heat flow between two qudits systems. In Sec.~\ref{sec: conclusion} we review our results and make our final remarks. {Unless stated otherwise, the dynamics will be evaluated in the interaction picture.}

\section{Background} \label{Sec_2}
\subsection{Average heat of local systems during an energy-conserving interaction}
\label{subsec_gen_setup}

Consider a quantum model described by the total Hamiltonian 
\begin{equation}\label{eq: main_hamiltonian}
H=H_{A}\otimes \mathbb{1}_B + \mathbb{1}_A \otimes H_{B} + H_{I},
\end{equation}
where $H_{A}$ and $H_{B}$ denote the local Hamiltonian operators for quantum systems $\mathcal{H}_A$ and $\mathcal{H}_B$, respectively, whereas $H_{I}$ is the interaction Hamiltonian operator governing the interaction between the two systems. When it is clear from our discussion, we write $H_A \equiv H_A \otimes \mathbb{1}_B$ and $H_B \equiv \mathbb{1}_A \otimes H_B$. Additionally, we suppose that the free energy is conserved during the evolution of the system~\cite{lostaglio2019introduction}. Therefore,
\begin{equation}
    [U_I(t), H_A+ H_B ] = 0, \label{eq: energy_conservation1}
\end{equation}
where $U_I(t)$ is the unitary time-evolution operator in the interaction picture, thus, given by $U_I(t) = e^{-i t H_{I}}$, setting $\hbar=1$. 

We will only consider initially correlated quantum systems $\mathcal{H}_A$ and $\mathcal{H}_B$, in a global state $\rho \in \mathcal{D}(\mathcal{H}_A \otimes \mathcal{H}_B)$, such that each local state $\rho_{i}$ ($i=A,B$) is a \emph{thermal state} (aka \emph{Gibbs states}). Here, $\mathcal{D}(\mathcal{H})$ denotes the set of all positive trace $1$ bounded operators acting on $\mathcal{H}$. This means that
\begin{align}
    \rho_i= \Tr_{\setminus \{i\}} \left\{ \rho \right\}
    = \rho_i^{\text{th}}  \equiv \frac{e^{-\beta_{i} H_{i}}}{Z_{i}}, \label{local_thermal_states}
\end{align}
where $H_{i}$ is the local Hamiltonian of $i=A,B$, $\beta_i=1/T_i$ is the inverse temperature of $i$ with Boltzmann constant $k_B=1$, and $Z_i$ is the partition function of each thermal state $Z_i = \Tr\left\{ e^{-\beta_i H_i} \right\}$.

With the above assumptions, where the Hamiltonian of the joint system is time-independent, we can define the average heat flow of the quantum system $\mathcal{H}_A$ as the total energy that it exchanges during the evolution (see ~\cite{Jennings_2010,Bera_2017,Jevtic_2012,lostaglio2020quasiprobability,Jevtic_2015,Sagawa_2012,Koski,Micadei_2019,Landi_2021} for the use of this definition)
\begin{align}
    \langle \mathcal{Q}_A \rangle & \equiv \langle {U^\dagger(t) H_A U(t)} - H_A \rangle \nonumber \\
    & = \Tr \left\{ \rho ({U^\dagger(t) H_A U(t)} - H_A) \right\}. \label{heat_A_def}
\end{align}
Notice that, with this definition, positive $ \langle \mathcal{Q}_A \rangle$ means that $\mathcal{H}_A$ \textit{receives} heat. Similarly, we can define the average heat $ \langle \mathcal{Q}_B \rangle$, of the quantum system $\mathcal{H}_B$. From energy conservation (Eq.~\eqref{eq: energy_conservation1}), we must have $ \langle \mathcal{Q}_A \rangle = -  \langle \mathcal{Q}_B \rangle$. 

\subsection{A modified Clausius inequality}\label{sec: clausius}

The dynamics described above implies that the possible average heat flow must satisfy certain inequalities, given the initial state $\rho \in \mathcal{D}(\mathcal{H}_A \otimes \mathcal{H}_B)$. If there are no initial correlations between the systems $\mathcal{H}_A$ and $\mathcal{H}_B$, so that $\rho = \rho_A^{\mathrm{th}} \otimes \rho_B^{\mathrm{th}}$ an immediate consequence is (see, for instance, Ref.~\cite[Eq. (15)]{Landi_2021}) 

\begin{equation}
   (\beta_A - \beta_B) \langle \mathcal{Q}_A \rangle  \geq 0, \label{heat_flow_uncorrelated}
\end{equation}
which is equivalent to  Clausius's statement of the second law~\cite{callen2006thermodynamics,Landi_2021,clausius1854uber,clausius1865uber} for the scenario we are considering. Eq.~\eqref{heat_flow_uncorrelated} holds because, for such a case, the left-hand side equals the entropy production, which must always be non-negative.

However, if there are initial correlations between the systems, they can be consumed for thermodynamic tasks. Consequently, the inequality above must be modified~\cite{Jennings_2010,Jevtic_2012,Jevtic_2015}. For the kind of dynamics we are interested in, just introduced in Sec.~\ref{subsec_gen_setup}, the inequality becomes 
\begin{equation}
    (\beta_A - \beta_B) \langle \mathcal{Q}_A \rangle \geq \Delta \mathcal{I}(A:B), \label{second_law_gen}
\end{equation}
where 
\begin{equation}
    \Delta \mathcal{I}(A:B) =\mathcal{I}_{U\rho U^\dagger }(A:B) - \mathcal{I}_{\rho}(A:B)
\end{equation}
is the variation of the mutual information 
\begin{equation}
    \mathcal{I}_{\rho}(A:B) = S(\rho_A) + S(\rho_B) - S(\rho),
\end{equation}
before and after the evolution, and now $\rho \in \mathcal{D}(\mathcal{H}_A \otimes \mathcal{H}_B)$ is any state for which the local states $\rho_i = \rho_i^{\mathrm{th}}$ are Gibbs states.  Albeit a well-known fact, we present a detailed derivation of the above in Appendix~\ref{Appendix_A}. 

{Inequality~\eqref{heat_flow_uncorrelated} can be derived from Ineq.~\eqref{second_law_gen}} where the parties start uncorrelated, since $\Delta \mathcal{I}(A:B)$ is always non-negative in this case. Therefore, as for the heat flow direction, both inequalities have the same meaning. For the case where the quantum system $\mathcal{H}_A$ is colder than quantum system $\mathcal{H}_B$, we \emph{must} have $\langle \mathcal{Q_A} \rangle \geq 0$, and heat flows from the hot system to the cold one, as ordinarily expected. However, for initially correlated systems, the variation of the mutual information can decrease $\Delta \mathcal{I}(A:B) < 0$, allowing for the possibility of  \textit{anomalous} heat flow $\langle \mathcal{Q_A} \rangle < 0$. This `modified Clausius inequality' includes the consumption of correlations causing heat flow inversion and still respects the second law of thermodynamics~\cite{Jennings_2010,Jevtic_2012,Jevtic_2015} (something that becomes evident when written in terms of entropy production~\cite{Landi_2021}). Note that, if we had assumed quantum system $\mathcal{H}_A$ to be hotter, then the standard heat flow would be described by $\langle \mathcal{Q}_A \rangle \leq 0$, i.e., system $\mathcal{H}_A$ loses heat, and anomaly would be characterized by $\langle \mathcal{Q}_A \rangle > 0$ indicating that system $\mathcal{H}_A$ receives heat, even {though} it is the hotter system (this will be the case considered in Subsection \ref{experimental_qubit_subsection}).

\subsection{Generalized Noncontextuality}\label{sec: generalized noncontextuality}

Generalized noncontextuality is a constraint on \emph{ontological models}~\cite{harrigan2019einstein} that attempt to explain empirical data predicted by an \emph{operational probabilistic theory} (or fragments thereof)~\cite{spekkens2005contextuality,kunjwal2019beyond,schmid2024structuretheorem}. Empirical data is obtained by acting on some system with a preparation procedure $P$, followed by a transformation $T$ and measurement $M$. Given that an outcome $k$ is obtained, the data is described statistically by some conditional probability distribution $p(k|M, T, P)$. Each procedure $P, M$, and $T$ is defined by a set of laboratory instructions to be performed. They follow a causal order (given two procedures $T_1$ and $T_2$ composed in sequence, denoted by $T_2 \circ T_1$, only the first can causally influence the second), and respect certain laws of which procedures should be applied to which systems (if a transformation $T_1$, transforms a system A to some other system $\mathrm{A}'$ and a transformation $T_2$ transforms a system B to some other system $\mathrm{B}'$ they cannot be sequentially composed as $T_2 \circ T_1$ unless $\mathrm{A}' = \mathrm{B}$; the theory needs to set rules to account for physical systems of different types)~\cite{schmid2020unscrambling,schmid2024structuretheorem}. 

Different procedures are associated with different laboratory prescriptions, but they can still yield the same data regardless of any possible operation in that theory. If this happens, these procedures lead to the same possible inferences that can be taken from the data. We then say that the two procedures are \emph{operationally equivalent}~\cite{spekkens2005contextuality} or also \emph{inferentially equivalent}~\cite{schmid2020unscrambling,rossi2023contextuality}. Formally, this defines an equivalence relation on the set of procedures in the theory: take any two transformation procedures $T_1, T_2 \in \mathcal{T}$ from the set of all possible transformations $\mathcal{T}$. These are said to be equivalent, and denoted $T_1 \simeq T_2$ if, for all conceivable preparation procedures $P \in \mathcal{P}$ and all conceivable measurements procedures $M \in \mathcal{M}$ having outcomes $k \in \mathcal{K}$,

\begin{equation}
    p(k|M,T_1,P) = p(k|M,T_2,P).
\end{equation}

We term each pair $k|M$ to be a measurement effect. Similar definitions hold for equivalent preparation procedures $P_1 \simeq P_2$ and equivalent effects $k_1|M_1 \simeq k_2|M_2$.

An ontological model~\cite{harrigan2019einstein} of an operational probabilistic theory attempts to explain the predictions of that theory in the following way: Such a model prescribes a (measurable) space $\Lambda$ of physical variables $\lambda \in \Lambda$, and assigns probability measures $\mu_P(\lambda)$ for each preparation procedure, stochastic matrices $\Gamma_T(\lambda'|\lambda)$ for each transformation procedure and response functions $\xi_{k|M}(\lambda)$ for each measurement effect $k|M$ such that they recover the empirical predictions of the operational theory from

\begin{equation}\label{eq: ontological model}
    p(k|M,T,P) = \int_\Lambda \int_{\Lambda} \xi_{k|M}(\lambda') \Gamma_T(\lambda'|\lambda)\mathrm{d}\mu_P(\lambda).
\end{equation}

An ontological model of an operational theory is said to satisfy the principle of generalized noncontextuality if it explains the operational indistinguishability of different procedures $T_1\simeq T_2$ by formally imposing that their counterparts in the model $\Gamma_{T_1}$ and $\Gamma_{T_2}$ are \emph{equal}, i.e., 
\begin{equation}
    T_1 \simeq T_2 \implies \Gamma_{T_1} = \Gamma_{T_2}.
\end{equation}
Similarly for preparation procedures and measurement effects. When there exists no generalized noncontextual model that can reproduce the data from an operational probabilistic theory, we will refer to this theory as contextual. Quantum theory, viewed as an operational theory, is contextual. This can be shown from no-go theorems~\cite{spekkens2005contextuality,banik2014ontological}, or in a quantifiable manner via the violation of generalized noncontextuality inequalities~\cite{schmid2018all,wagner2024inequalities,pusey2015robust}. 

In quantum theory, any quantum channel~\cite{nielsen2010quantum,wilde2013quantum} $\mathcal{E}: \mathcal{B}(\mathcal{H}) \to \mathcal{B}(\mathcal{H})$ defines an equivalence class of all possible physically implementable operational procedures that are described by the channel $\mathcal{E}$. $\mathcal{B}(\mathcal{H})$ denotes the bounded operators acting on $\mathcal{H}$. We write this as $\mathcal{E} = [T_\mathcal{E}]$ to represent that there can exist many different laboratory procedures $T$ described in quantum theory by the same channel $\mathcal{E}$ such that $T \simeq T_\mathcal{E}$ implying that $T \in [T_\mathcal{E}]$. Two  operationally equivalent procedures satisfy then that 
\begin{equation}
    T_1 \simeq T_2 \implies \mathcal{E}_1 = \mathcal{E}_2
\end{equation}
and the converse holds as well, but now for every element of the class represented by the quantum mechanical operators,  
\begin{equation}
    \mathcal{E}_1 = \mathcal{E}_2 \implies T_1 \simeq T_2,\, \forall T_1 \in [T_{\mathcal{E}_1}], T_2 \in [T_{\mathcal{E}_2}].
\end{equation}

Noncontextuality inequalities can be used to bound the ability of a noncontextual model to explain experimental tasks of interest, success rates of a given protocol, or other figures of merit. Commonly, such bounds are investigated in a prepare-and-measure set-up, where the role of transformations is either not considered~\cite{spekkens2009preparation,saha2019preparation,schmid2018contextual,flatt2021contextual,mukherjee2021discriminating}, or merely used to define novel preparation procedures~\cite{lostaglio2020cloningcontext,wagner2024coherence,baldijao2021noncontextuality}. In other words, most scenarios that \emph{do} consider the role of transformations do so by using transformations to define novel preparations, in the sense that if we have preparations $P \in \mathcal{P}$ and a finite set of transformations $T_1,\dots,T_n$, then we only analyze contextuality of the preparation procedures defined by new preparations $T_i(P)$, or some compositions $T_i \circ T_j (P)$, and so on. The role transformation contextuality can play in a scenario \emph{without} viewing these as defining novel preparations so far has been considered only in a handful of scenarios: to investigate nonclassicality of the stabilizer subtheory~\cite{lillystone2019contextuality,schmid2022uniqueness}, of weak values resulting from weak measurements~\cite{pusey2014anomalous,kunjwal2019anomalous} and also of quantum thermodynamics of linear response~\cite{lostaglio2020certifying}. 

\subsubsection{Noncontextuality inequalities for the average of  theory-independent observables}

In any operational theory, there is a specific transformation procedure denoted $T_{\mathrm{id}}$ that denotes the `identity' transformation. Its action is operationally characterized by
\begin{equation}
    p(k|M,T_{\mathrm{id}},P) = p(k|M,P)
\end{equation}
for all possible preparation procedures $P \in \mathcal{P}$ and all possible measurement effects $k|M \in \mathcal{K}\times \mathcal{M}$. 

In a theory-independent setup, we estimate an observable $\mathcal{A}$ by assigning some values $a_k$ to outcomes $k$ obtained once a measurement $M$ has been performed. We will then define a \emph{theory-independent observable}
\begin{equation}
    \mathcal{A} := \{(a_k,k|M)\}_k
\end{equation}
to be the \emph{finite} family of pairs of values $a_k$ and effects $k|M$, for a given measurement procedure $M$. Hence, we can \emph{define} the expectation value at an initial instant (where we haven't transformed the system of interest) given any fixed preparation $P \in \mathcal{P}$ via  
\begin{equation}
    \langle \mathcal{A}(0) \rangle := \sum_k a_k p(k|M,P)
\end{equation}
and the expectation value once a transformation took place as 
\begin{equation}
\langle \mathcal{A}(t) \rangle := \sum_k a_k p(k|M,T_t,P).
\end{equation}
Here, $t$ is merely a (suggestive) label for the transformation procedure $T_t$. Also, in all the discussion that follows, we consider, without loss of generality, that $a_k > 0$ for all $k$. The expectation for the variation between $\mathcal{A}(0)$ and $\mathcal{A}(t)$, given that preparation $P$ was performed and transformation $T_t$ proceeded it, is then given by
\begin{equation}\label{eq: average observable}
    \langle \Delta \mathcal{A}\rangle_{P,t} := \sum_k a_k \bigr(p(k|M,T_t,P) - p(k|M,P)\bigr).
\end{equation}

When it is clear which $P$ and $T_t$ are being considered we simplify the notation $\langle \Delta \mathcal{A}\rangle_{P,t} = \langle \Delta \mathcal{A}\rangle$. Ref. \cite{lostaglio2020certifying}, showed that generalized noncontextual models that attempt to reproduce the (theory-independent) averages of observables defined by Eq.~\eqref{eq: average observable} must satisfy certain inequalities that can be violated by quantum dynamics approximated by linear response perturbation theory. To be more specific, by linear response we mean that the evolution (in the interaction picture) is generated by a Hamiltonian interaction $H_I(t')$ with a weak strength parameter $g$, given in Eq.~\eqref{eq: unitary_weak_def} below, i.e satisfying $g \ll 1$. This way, the unitary evolution will have the form
\begin{equation}
    U_I(t) = \mathbb{1} - i g \int_0^t H_I(t')  dt' + O(g^2). \label{eq: unitary_weak_def}
\end{equation}
The term observable in quantum theory is usually used to denote Hermitian matrices, or more generally selfadjoint operators $A=A^\dagger$ in $\mathcal{B}(\mathcal{H})$. In this case, the theory-dependent quantum mechanical averages are given by (in the interaction picture)
\begin{equation}
    \langle \Delta A \rangle = \text{Tr}\{A(t) \rho(t)\} - \text{Tr}\{A(t)\rho\},
\end{equation}
with respect to some initial state $\rho \in \mathcal{D}(\mathcal{H})$ and where, if we are focusing on the linear response region, the interaction unitary is taken to be given by Eq.~\eqref{eq: unitary_weak_def}. Note that in quantum theory, the values $a_k$ are the \emph{eigenvalues} of the observable $A$.

An ontological model should explain the statistics of Eq.~\eqref{eq: average observable} via Eq.~\eqref{eq: ontological model}, i.e.,
\begin{align}
    \langle \Delta \mathcal{A}\rangle = \sum_k a_k \Bigr [\int_\Lambda \int_\Lambda \xi_{k|M}(\lambda')\Gamma_{T_t}(\lambda'|\lambda)\mathrm{d}\mu_P(\lambda) \nonumber \\ - \int_\Lambda \xi_{k|M}(\lambda) \mathrm{d}\mu_P(\lambda)\Bigr] \label{eq: average ontological model}
\end{align}

Ref.~\cite{lostaglio2020certifying} proved a bound for averages described in terms of  Eq.~\eqref{eq: average ontological model}, when the ontological models are generalized noncontextual, that we slightly improve in the following. 

\begin{theorem}\label{theorem: Lostaglio alpha}
Suppose that the following operational equivalence is satisfied 
\begin{equation}
     \alpha T_t + (1-\alpha)T_t^* \simeq (1-p_d)T_{\mathrm{id}} + p_d T', \label{SR_condition}
\end{equation}
where $T_t$, $T_t^*$ and $T'$ are transformation procedures, $T_{\mathrm{id}}$ is the identity procedure, and $0 \leq p_d \leq 1, 0< \alpha \leq 1$. Let $\mathcal{A}$ be an observable.  Then, any noncontextual ontological model for the variation of its average, defined by Eq.~\eqref{eq: average observable} where $a_k > 0$ for all $k$, must satisfy
\begin{equation}
    -\frac{a_{max}p_d}{\alpha^2} \leq\langle \Delta \mathcal{A} \rangle   \leq \frac{p_d a_{max}}{\alpha}. \label{main_inquality}
\end{equation}
Above, $a_{max} := \max \{a_k\}_k$ is the largest value associated with the observable  $\mathcal{A}$.
\end{theorem}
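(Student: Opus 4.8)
The plan is to push the hypothesis \eqref{SR_condition} into the ontological model and then extract from it a \emph{pointwise} ``diagonal dominance'' of the stochastic map representing $T_t$, which is precisely the extra leverage noncontextuality supplies over a generic operational theory.

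\emph{Step 1 (pass to the model).} I would first use that in a noncontextual model the convex (``coin-flip'') mixture $\alpha T_t+(1-\alpha)T_t^*$ is represented by $\alpha\,\Gamma_{T_t}+(1-\alpha)\,\Gamma_{T_t^*}$, and likewise on the right of \eqref{SR_condition}, while $T_{\mathrm{id}}$ is represented by the identity kernel $\Gamma_{T_{\mathrm{id}}}(\cdot\,|\lambda)=\delta_\lambda$. Since operationally equivalent procedures receive equal representations, \eqref{SR_condition} becomes the kernel identity
\begin{equation}
 \alpha\,\Gamma_{T_t}(\cdot\,|\lambda)+(1-\alpha)\,\Gamma_{T_t^*}(\cdot\,|\lambda)=(1-p_d)\,\delta_\lambda+p_d\,\Gamma_{T'}(\cdot\,|\lambda)\qquad\forall\lambda\in\Lambda .
\end{equation}
I would also record the ontic observable $f_\mathcal{A}(\lambda):=\sum_k a_k\,\xi_{k|M}(\lambda)$: since $a_k>0$, $\xi_{k|M}\ge0$ and $\sum_k\xi_{k|M}(\lambda)\le1$, one has $0\le f_\mathcal{A}(\lambda)\le a_{max}$, and by \eqref{eq: average ontological model} the model's values are $\langle\mathcal{A}(0)\rangle=\int_\Lambda f_\mathcal{A}\,\mathrm{d}\mu_P$ and $\langle\mathcal{A}(t)\rangle=\int_\Lambda\bigl(\int_\Lambda f_\mathcal{A}(\lambda')\,\Gamma_{T_t}(\mathrm{d}\lambda'|\lambda)\bigr)\mathrm{d}\mu_P(\lambda)$.

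\emph{Step 2 (diagonal dominance).} This is the crux. For a fixed $\lambda$ and any measurable $B$ with $\lambda\notin B$, evaluating the kernel identity on $B$ and discarding the nonnegative term $(1-\alpha)\Gamma_{T_t^*}(B|\lambda)$ gives $\alpha\,\Gamma_{T_t}(B|\lambda)\le p_d\,\Gamma_{T'}(B|\lambda)\le p_d$; taking $B=\Lambda\setminus\{\lambda\}$ yields $\Gamma_{T_t}(\Lambda\setminus\{\lambda\}\,|\,\lambda)\le p_d/\alpha$, so the atom $c_\lambda:=\Gamma_{T_t}(\{\lambda\}|\lambda)$ satisfies $1-c_\lambda\le\min\{1,p_d/\alpha\}\le p_d/\alpha$. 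In words, noncontextuality forces the ontic map to be ``almost the identity'' in a state-by-state sense. (Integrating \eqref{SR_condition} against $f_\mathcal{A}$ and $\mu_P$ directly yields only an average constraint on $\langle\Delta\mathcal{A}\rangle$, which one can check is far too weak for the claimed bound — so the pointwise step is where noncontextuality genuinely bites.)

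\emph{Step 3 (conclude).} Decomposing $\Gamma_{T_t}(\cdot|\lambda)=c_\lambda\,\delta_\lambda+(1-c_\lambda)\,\nu_\lambda$ with $\nu_\lambda$ a probability measure, one gets $\int_\Lambda f_\mathcal{A}\,\mathrm{d}\Gamma_{T_t}(\cdot|\lambda)=f_\mathcal{A}(\lambda)+(1-c_\lambda)\bigl(\int f_\mathcal{A}\,\mathrm{d}\nu_\lambda-f_\mathcal{A}(\lambda)\bigr)$, and since $f_\mathcal{A}(\lambda),\,\int f_\mathcal{A}\,\mathrm{d}\nu_\lambda\in[0,a_{max}]$ and $1-c_\lambda\le p_d/\alpha$, the correction has absolute value at most $(p_d/\alpha)\,a_{max}$. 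Integrating over $\mu_P$ and subtracting $\langle\mathcal{A}(0)\rangle$ gives $-\tfrac{p_d a_{max}}{\alpha}\le\langle\Delta\mathcal{A}\rangle\le\tfrac{p_d a_{max}}{\alpha}$. The upper side is exactly \eqref{main_inquality}; since $0<\alpha\le1$, $-\tfrac{p_d a_{max}}{\alpha}\ge-\tfrac{p_d a_{max}}{\alpha^2}$, so the stated lower bound follows too (the weaker constant $\alpha^{-2}$ there being the one already in Ref.~\cite{lostaglio2020certifying}, while the ``slight improvement'' is on the upper side; the argument above in fact also tightens the lower side to $-p_d a_{max}/\alpha$).

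\emph{Main obstacle.} The algebra is short; the care lies in Steps 1--2. One must be entitled to take $\Gamma_{T_{\mathrm{id}}}(\cdot|\lambda)=\delta_\lambda$ (the usual consequence of noncontextuality on a ``sharp'' ontic space, or an explicit convention of the framework of Ref.~\cite{lostaglio2020certifying}), and one must handle the evaluation of kernels on singletons $\{\lambda\}$ and the measurability of $\lambda\mapsto c_\lambda$ on a general measurable $\Lambda$; this is routine for finite or standard Borel $\Lambda$ but is the place to argue carefully.
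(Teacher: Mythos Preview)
Your argument is correct and shares the paper's core mechanism: noncontextuality turns \eqref{SR_condition} into a kernel identity, and the pointwise bound on how much $\Gamma_{T_t}(\cdot|\lambda)$ can move mass off $\lambda$ is what does the work. The paper does not spell out a proof of Theorem~\ref{theorem: Lostaglio alpha} but refers to the method of Appendix~\ref{Thorem_2_proof_sec}; there the calculation is done per outcome $k$, bounding the off-diagonal transition $\Gamma_T(\lambda'|\lambda)$ for $\lambda\neq\lambda'$ (in the discrete case) or integrating over the level sets $\bar\Lambda_k(\lambda)=\{\lambda':\xi(k|\lambda')>\xi(k|\lambda)\}$ (in the continuous case), and the lower bound is obtained by substituting the kernel identity back into $-\langle\Delta\mathcal{A}\rangle$ and controlling the resulting $T_t^*$-term with the already-proved upper bound. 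Your route is genuinely more streamlined: packaging $\sum_k a_k\xi_{k|M}$ into a single bounded ontic observable $f_\mathcal{A}$ and reading the kernel identity as a bound $1-c_\lambda\le p_d/\alpha$ on the atomic defect handles discrete and continuous $\Lambda$ in one stroke and avoids the level-set bookkeeping. As you note, it also yields the symmetric bound $|\langle\Delta\mathcal{A}\rangle|\le p_d a_{max}/\alpha$, which is strictly sharper on the lower side than the stated $-p_d a_{max}/\alpha^2$; the paper's asymmetry is an artifact of the substitution-and-reuse step rather than a fundamental feature. The only caveat you correctly flag --- that $\Gamma_{T_{\mathrm{id}}}(\cdot|\lambda)=\delta_\lambda$ and that one can evaluate kernels on singletons --- is exactly the same assumption the paper makes (writing $\delta_{\lambda,\lambda'}$ and $\delta(\lambda-\lambda')$ explicitly), so you are on equal footing there.
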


The proof follows the same reasoning given in Ref.~\cite{lostaglio2020certifying}, that we will present in detail for the proof of Theorem~\ref{theorem: sequential transformations}, and recovers the results from Ref.~\cite{lostaglio2020certifying} for the case $\alpha = 1/2$. The condition defined by Eq.~\eqref{SR_condition} is called the \emph{stochastic reversibility}, and can be understood as tossing a coin to decide if we transform the system using $T_t$ (with probability $\alpha$) or $T_t^*$ (with probability $1-\alpha$). After many realizations, the effective transformation obtained is operationally indistinguishable from doing nothing with the system with some probability $1-p_d$, and doing some other operation with probability $p_d$. The probability $p_d$ is interpreted as a `probability of disturbance' from the ideal stochastic reversibility with $p_d=0$. {For example, if we consider $T_t^*$ as inverse evolution and $\alpha= 1/2$, stochastic reversibility means that the average evolution after making an equal ensemble of forward and backward evolutions is very close to not having any evolution in the system.} We note that $T_t, T_t^*, T'$ are \emph{in principle} any triplet of transformations satisfying the operational constraint of stochastic reversibility. We assume no other constraint on such transformations. We also note that each transformation separately (at this operational level) play no fundamental role, and it is only the relation between these three operations (together with $T_{\mathrm{id}}$) that is of relevance for the characterization of the scenario. Interestingly, Ref.~\cite{lostaglio2020certifying}  suggested a method to certify if a linear response unitary satisfies
\begin{equation}
    \frac{1}{2}\mathcal{U}_t + \frac{1}{2}\mathcal{U}_t^\dagger = (1-p_d)\mathrm{id} + p_d \mathcal{C}, 
\end{equation}
where $\mathcal{U}_t(\cdot) := U(t)(\cdot)U(t)^\dagger$, with $U(t)$ given by Eq.~\eqref{eq: unitary_weak_def}, $\mathrm{id}(X) = X$ is the identity channel and $\mathcal{C}$ is some  quantum channel.

As a final remark, we would like to note that Ineq.~\eqref{main_inquality} is valid for any fragment of an operational probabilistic theory having preparation procedures $\{P_i\}_i \subseteq \mathcal{P}$, effects $\{k|M\}_k \subseteq \mathcal{M} \times \mathcal{K}$ and at least the transformations $\{T_t, T_t^*, T_t', T_{\mathrm{id}}\}_t \subseteq \mathcal{T}$, where the transformations satisfy Eq.~\eqref{SR_condition}. Here, $t$ is merely a label for the transformations in this fragment. In any such fragment, for each choice of $T_t$ and $P$, the inequality $ -p_d a_{max}/\alpha^2 \leq \langle \Delta \mathcal{A} \rangle_{P,t} \leq p_d a_{max}/\alpha$ is a valid noncontextuality inequality.

\section{Results}
\label{section_3}

We start our results by generalizing the main Theorem of Ref.~\cite{lostaglio2020certifying} (Theorem~\ref{theorem: Lostaglio alpha}, for $\alpha=1/2$) to the case where we have two sequential transformations $T_1$ and $T_2$, and under the assumption that both satisfy operational equivalences. Our main goal is to obtain a characterization of a scenario in which the presence of contextuality is due to anomalous heat flow. It is to that end that we make the aforementioned generalization, stated below in Theorem~\ref{theorem: sequential transformations}. 

\begin{theorem}\label{theorem: sequential transformations}
Let $T_t$ be a sequential transformation given in terms of other two transformation procedures $T_{t_1}$ and $T_{t_2}$, i.e., $T_t = T_{t_2} \circ T_{t_1}$. Moreover, $T_{t}$ satisfy the operational equivalences
\begin{equation}
    \frac{1}{2} T_{t} + \frac{1}{2} T_{t}^* \simeq (1- p_{d_1}) T_{t_2} + p_{d_1} T_1', \label{SR_condition_T1}
\end{equation}
where $0 \leq p_{d_1} \leq 1$, and also 
\begin{equation}
    \frac{1}{2} T_{t_2}  + \frac{1}{2} T_{t_2}^* \simeq (1- p_{d_2}) T_{\mathrm{id}} + p_{d_2} T_2', \label{SR_condition_T2}
\end{equation}
where $0\leq p_{d_2}\leq 1$. Then, any noncontextual ontological model for the average of an observable $\mathcal{A}$ must be bounded by
\begin{equation}\label{eq: sequential inequality}
    -4a_{max}b_-\leq \langle \Delta \mathcal{A} \rangle \leq 2a_{max}b_+,
\end{equation}
where $b_-:=p_{d_1}+3p_{d_2}-3p_{d_1}p_{d_2}$ and $b_+:=p_{d_1}+2p_{d_2}-2p_{d_1}p_{d_2}$, and $a_{max}:= \max \{a_k\}_k$ is the largest value associated with the observable $\mathcal{A}$.
\end{theorem}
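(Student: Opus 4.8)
The plan is to run the ontological-model argument of Ref.~\cite{lostaglio2020certifying} behind Theorem~\ref{theorem: Lostaglio alpha}, but now propagated through \emph{two} operational equivalences rather than one. In any noncontextual model \eqref{SR_condition_T1} and \eqref{SR_condition_T2} become equalities of stochastic maps, $\frac{1}{2}\Gamma_{T_t}+\frac{1}{2}\Gamma_{T_t^*}=(1-p_{d_1})\Gamma_{T_{t_2}}+p_{d_1}\Gamma_{T_1'}$ and $\frac{1}{2}\Gamma_{T_{t_2}}+\frac{1}{2}\Gamma_{T_{t_2}^*}=(1-p_{d_2})\Gamma_{T_{\mathrm{id}}}+p_{d_2}\Gamma_{T_2'}$, with $\Gamma_{T_{\mathrm{id}}}$ the identity kernel. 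Put $f(\lambda):=\sum_k a_k\,\xi_{k|M}(\lambda)$, so that $0\le f(\lambda)\le a_{max}$, and $(\Gamma_T f)(\lambda):=\int f(\lambda')\,\Gamma_T(\lambda'|\lambda)\,\mathrm{d}\lambda'$; then Eq.~\eqref{eq: average ontological model} reads $\langle\Delta\mathcal{A}\rangle=\int\big[(\Gamma_{T_t}f)(\lambda)-f(\lambda)\big]\,\mathrm{d}\mu_P(\lambda)$, and I abbreviate $\langle\mathcal{A}(0)\rangle=\int f\,\mathrm{d}\mu_P$, $\langle\mathcal{A}_2\rangle=\int(\Gamma_{T_{t_2}}f)\,\mathrm{d}\mu_P$, $\langle\mathcal{A}_{1'}\rangle=\int(\Gamma_{T_1'}f)\,\mathrm{d}\mu_P$, $\langle\mathcal{A}(t)^*\rangle=\int(\Gamma_{T_t^*}f)\,\mathrm{d}\mu_P$, all lying in $[0,a_{max}]$. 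Since \eqref{SR_condition_T2} is precisely the hypothesis of Theorem~\ref{theorem: Lostaglio alpha} with $\alpha=1/2$, it already gives $-4a_{max}p_{d_2}\le\langle\mathcal{A}_2\rangle-\langle\mathcal{A}(0)\rangle\le 2a_{max}p_{d_2}$.

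For the upper bound I would show that $\Gamma_{T_t}$ is close to the identity by chaining the two equalities: evaluating the second one on the singleton $\{\lambda\}$ and using $\Gamma_{T_{t_2}^*}(\{\lambda\}|\lambda)\le 1$ gives $\Gamma_{T_{t_2}}(\{\lambda\}|\lambda)\ge 1-2p_{d_2}$; feeding this into the first one on $\{\lambda\}$, with $\Gamma_{T_t^*}(\{\lambda\}|\lambda)\le 1$ and $\Gamma_{T_1'}(\{\lambda\}|\lambda)\ge 0$, yields $\Gamma_{T_t}(\{\lambda\}|\lambda)\ge 2(1-p_{d_1})(1-2p_{d_2})-1=1-2b_+$. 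Hence one can write $\Gamma_{T_t}(\cdot|\lambda)=(1-c(\lambda))\delta_\lambda+c(\lambda)\Sigma(\cdot|\lambda)$ with $0\le c(\lambda)\le 2b_+$ and $\Sigma(\cdot|\lambda)$ a probability kernel, so $\langle\Delta\mathcal{A}\rangle=\int c(\lambda)\big[(\Sigma f)(\lambda)-f(\lambda)\big]\,\mathrm{d}\mu_P(\lambda)$; bounding the bracket above by $a_{max}$ gives $\langle\Delta\mathcal{A}\rangle\le 2a_{max}b_+$. Because $T_t^*$ enters \eqref{SR_condition_T1} symmetrically, the same estimate gives $\langle\mathcal{A}(t)^*\rangle-\langle\mathcal{A}(0)\rangle\le 2a_{max}b_+$ as well, which I reuse next.

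For the lower bound I would rewrite the first equality in averaged form, $\langle\mathcal{A}(t)\rangle=2(1-p_{d_1})\langle\mathcal{A}_2\rangle+2p_{d_1}\langle\mathcal{A}_{1'}\rangle-\langle\mathcal{A}(t)^*\rangle$, and substitute the three bounds now in hand: $\langle\mathcal{A}_2\rangle\ge\langle\mathcal{A}(0)\rangle-4a_{max}p_{d_2}$ from Theorem~\ref{theorem: Lostaglio alpha}, $\langle\mathcal{A}_{1'}\rangle\ge 0$, and $\langle\mathcal{A}(t)^*\rangle\le\langle\mathcal{A}(0)\rangle+2a_{max}b_+$. The coefficient of $\langle\mathcal{A}(0)\rangle$ in the resulting estimate for $\langle\Delta\mathcal{A}\rangle=\langle\mathcal{A}(t)\rangle-\langle\mathcal{A}(0)\rangle$ comes out to $-2p_{d_1}\le 0$, so it can be discarded using $\langle\mathcal{A}(0)\rangle\le a_{max}$; the leftover $a_{max}$-terms collect into $-a_{max}(4p_{d_1}+12p_{d_2}-12p_{d_1}p_{d_2})=-4a_{max}b_-$, which is the claimed lower bound.

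The algebra rearranging $2(1-p_{d_1})(1-2p_{d_2})-1$ into $1-2b_+$ and collecting the final $a_{max}$-coefficients into $b_\pm$ is routine. The step requiring real care, and the only genuinely new ingredient relative to Ref.~\cite{lostaglio2020certifying}, is the two-move propagation of ``closeness to the identity'': in Theorem~\ref{theorem: Lostaglio alpha} the right-hand side of \eqref{SR_condition} already contains $T_{\mathrm{id}}$, whereas in \eqref{SR_condition_T1} it is $T_{t_2}$, so one must first bound the self-transition weight of $\Gamma_{T_{t_2}}$ and only then that of $\Gamma_{T_t}$. A secondary technicality, handled exactly as in Ref.~\cite{lostaglio2020certifying} for the $\alpha=1/2$ case, is that on a continuous ontic space the split $\Gamma_{T_t}(\cdot|\lambda)=(1-c(\lambda))\delta_\lambda+c(\lambda)\Sigma(\cdot|\lambda)$ must be performed with $c$ and $\Sigma$ measurable. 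Finally, note that the composition hypothesis $T_t=T_{t_2}\circ T_{t_1}$ is never used in the inequality itself; it only explains why \eqref{SR_condition_T1}, with $T_{t_2}$ rather than $T_{\mathrm{id}}$ on the right, is the natural stochastic-reversibility condition in the sequential setting.
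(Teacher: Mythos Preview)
Your argument is correct and reaches the same bounds as the paper, but the route is more streamlined. The paper proves the upper bound by explicitly combining the two noncontextuality constraints into the kernel inequality $\Gamma_{T_t}(\lambda'|\lambda)\le 4(1-p_{d_1})(1-p_{d_2})\delta_{\lambda,\lambda'}+4(1-p_{d_1})p_{d_2}\Gamma_{C_2}(\lambda'|\lambda)+2p_{d_1}\Gamma_{A_2}(\lambda'|\lambda)$, restricts to off-diagonal entries, and then bounds each surviving term by $a_{max}$ times a probability; you instead extract the diagonal weight $\Gamma_{T_t}(\{\lambda\}|\lambda)\ge 1-2b_+$ and decompose $\Gamma_{T_t}(\cdot|\lambda)=(1-c(\lambda))\delta_\lambda+c(\lambda)\Sigma(\cdot|\lambda)$, which is the same estimate viewed through the complementary (diagonal) lens and packaged more compactly. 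The real difference is in the lower bound: the paper substitutes the full expression for $\Gamma_{T_t}$ and then \emph{re-derives} from scratch upper bounds of the form~\eqref{main_inquality} for both $T_{A_1}=T_t^*$ and $T_{t_2}^*$ inside the proof (cf.\ Eqs.~\eqref{A_1_last_step_ineq}--\eqref{T_2_last_step_ineq}), whereas you invoke Theorem~\ref{theorem: Lostaglio alpha} directly for $\langle\mathcal{A}_2\rangle-\langle\mathcal{A}(0)\rangle$ and reuse, by the $T_t\leftrightarrow T_t^*$ symmetry of~\eqref{SR_condition_T1}, the upper bound you just proved for $\langle\mathcal{A}(t)^*\rangle-\langle\mathcal{A}(0)\rangle$. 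Your modular approach is shorter and makes the logical dependence on Theorem~\ref{theorem: Lostaglio alpha} explicit; the paper's approach is self-contained and carries out the finite/continuous $\Lambda$ distinction in full, including the $\bar\Lambda_k(\lambda)$ construction that plays the role of your measurable splitting $c,\Sigma$. Your closing observation that the composition $T_t=T_{t_2}\circ T_{t_1}$ is not used in the inequality itself is also correct and worth noting.
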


We prove this Theorem in Appendix~\ref{Thorem_2_proof_sec}. Note that $b_-,b_+ \geq 0$ for all $0 \leq p_{d_1},p_{d_2}\leq 1$. {\color{black}It is worth noting that this result is not limited to discrete systems; as demonstrated in Appendix~\ref{Thorem_2_proof_sec}, our framework extends to scenarios involving a continuous set of ontic states, opening avenues for future application in continuous-variable systems. Consequently, this could be a theme for future research.} This Theorem will be instrumental for cases where the transformation describing the evolution of the system is not capable of satisfying Eq.~\eqref{SR_condition} but can be described as a composition of transformations satisfying this equation. Note that we recover the Theorem of Ref. \cite{lostaglio2020certifying} (or Theorem~\ref{theorem: Lostaglio alpha}, for $\alpha=1/2$) when $p_{d_2} = 0$, or also when $T_{t_2} \simeq T_{t_2}^* \simeq T_{\mathrm{id}}$. Also, when convenient we can write Ineq.~\eqref{eq: sequential inequality} more compactly as $|\langle \Delta \mathcal{A} \rangle |\leq 4a_{max}b_-$, since $b_- \geq b_+$. Relevantly, $t$ can be any label $t=(t_1,t_2)$ for which $\langle \mathcal{A}(t) \rangle = \sum_k a_k p(k|M,T_{t_2}\circ T_{t_1},P)$, and we do not assume operational equivalences to be satisfied by $T_{t_1}$. 

Similarly to Theorem~\ref{theorem: Lostaglio alpha}, our  Theorem~\ref{theorem: sequential transformations} does not depend on the fact that the operations $T_{t}, T_{t}^*$ are the inverses one from another; they work for any set of transformations satisfying these operational equivalences. Also, this result is not only valid under thermodynamic considerations: it will hold for any observable from which its theory-independent average can be described by an equation of the form Eq.~\eqref{eq: average observable}.

\subsection{Two interacting qubits} \label{subsection_qubits}

Let us consider \emph{Zeeman Hamiltonians} to be the class of single-qubit Hamiltonians defined as linear combinations of the identity operator $\mathbb{1}$ and the Pauli matrix $\sigma_z$. We now show that an interesting class of quantum interactions between two-qubit systems must satisfy the operational equivalences defined by stochastic reversibility. {The following theorem is made to obtain a practical noncontextuality inequality between \textit{any} two qubits that interact via a unitary evolution preserving their total energy, and hence can be useful and a large range of applications. Here, we shall use this for our main question, which is the influence of contextuality in the heat flow inversion.}

\begin{theorem}\label{theorem_3}
    Let $\mathcal{H}=\mathbb{C}^2\otimes \mathbb{C}^2$ be the Hilbert space describing a two-qubit system. Consider the evolution $U(t) = e^{-itH}$ with $H$ given by Eq.~\eqref{eq: main_hamiltonian}, with $H_A$ and $H_B$ Zeeman Hamiltonians. Suppose also that the interaction preserves energy,  i.e., $[H_I, H_A \otimes \mathbb{1}_B+\mathbb{1}_A \otimes H_B]=0$. Assuming an interaction picture representation of the dynamics, for every fixed instant $t$:
    \begin{enumerate}
        \item Non-resonant case ($H_A \neq H_B$): There exists a quantum channel $\mathcal{C}$ such that 
        \begin{equation}\label{eq:theorem_nonresonant}
            \frac{1}{2}\mathcal{U}_I + \frac{1}{2}\mathcal{U}_I^\dagger = (1-p_d) \mathrm{id} + p_d\mathcal{C}
        \end{equation}
        where $\mathcal{U}_I(\cdot) = U_I(t)(\cdot)U_I(t)^\dagger$, $U_I(t) = e^{-itH_I}$, and $0 \leq p_d \leq 1$.
        \item Resonant case ($H_A = H_B$): The unitary evolution $U_I(t)$ can be written as a composition of two other unitaries $U_I(t) = U_2\circ U_1$, and there exist quantum channels $\mathcal{C}_1$,  $\mathcal{C}_2$ such that
        \begin{equation}
            \frac{1}{2}\mathcal{U}_1 + \frac{1}{2}\mathcal{U}_1^\dagger = (1-p_{d_1}) \mathrm{id} + p_{d_1}\mathcal{C}_1, \label{theorem_3_resonant_eq1}
        \end{equation}
        \begin{equation}
            \frac{1}{2}\mathcal{U}_2 + \frac{1}{2}\mathcal{U}_2^\dagger = (1-p_{d_2}) \mathrm{id} + p_{d_2}\mathcal{C}_2, \label{theorem_3_resonant_eq2}
        \end{equation}
        where $\mathcal{U}_i(\cdot) = U_i(\cdot)U_i^\dagger$, and $0 \leq p_{d_i} \leq 1$ for $i=1,2$.
    \end{enumerate}
\end{theorem}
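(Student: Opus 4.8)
The plan is to reduce both cases to the constraint that energy conservation, $[H_I,H_A\otimes\mathbb{1}_B+\mathbb{1}_A\otimes H_B]=0$, puts on $H_I$, and then to handle two elementary building blocks—diagonal unitaries and $SU(2)$-type rotations—from which the claimed decompositions follow. Writing $H_A=a_0\mathbb{1}+a_1\sigma_z$ and $H_B=b_0\mathbb{1}+b_1\sigma_z$, the free Hamiltonian $H_0:=H_A\otimes\mathbb{1}_B+\mathbb{1}_A\otimes H_B$ has eigenvalues $a_0+b_0\pm a_1\pm b_1$ in the computational (energy) basis $\{\ket{00},\ket{01},\ket{10},\ket{11}\}$. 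In the non-resonant case the four eigenvalues are distinct (using $a_1\neq b_1$ and, physically, $a_1,b_1>0$ so also $a_1\neq-b_1$), so $H_0$ is non-degenerate and $[H_I,H_0]=0$ forces $H_I$, hence $U_I(t)=e^{-itH_I}$, to be diagonal in that basis. In the resonant case ($a_1=b_1=\omega/2\neq0$) the only degeneracy of $H_0$ is the two-dimensional ``single-excitation'' subspace $V:=\linspan\{\ket{01},\ket{10}\}$, so $[H_I,H_0]=0$ forces $H_I=h_0\ket{00}\bra{00}+h_3\ket{11}\bra{11}+M$ with $M=M^\dagger$ acting on $V$; consequently $U_I(t)=e^{-ith_0}\ket{00}\bra{00}+e^{-ith_3}\ket{11}\bra{11}+W(t)$ with $W(t)=e^{-itM}$ an arbitrary unitary on $V$. (The degenerate edge cases $\omega=0$ or, unphysically, $a_1=-b_1$ only enlarge which subspace $H_I$ may mix inside, and are handled identically.)

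Next I dispose of the diagonal case, which yields Part~1 directly and supplies $\mathcal{C}_1$ in Part~2. For a diagonal unitary $D=\sum_j e^{-i\theta_j}\ket{j}\bra{j}$ a one-line computation gives $\tfrac12(\mathcal{U}_D+\mathcal{U}_D^\dagger)(\rho)=C\odot\rho$, the Hadamard (entrywise) product with $C_{jk}=\cos(\theta_j-\theta_k)$. Since $C$ is the sum of the rank-one matrix $\tfrac12 vv^\dagger$, $v_j=e^{i\theta_j}$, and its complex conjugate, $C$ is positive semidefinite with unit diagonal; by Schur's product theorem $\rho\mapsto C\odot\rho$ is completely positive, and it is trace preserving because $C_{jj}=1$, so $\tfrac12(\mathcal{U}_D+\mathcal{U}_D^\dagger)$ is itself a channel. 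Defining $1-p_d:=\lambda_{\max}:=\max\{\lambda\geq0:\ C\odot(\cdot)-\lambda\,\mathrm{id}\ \text{is CP}\}$—equivalently the largest $\lambda$ with $C-\lambda\,\mathbf{1}\mathbf{1}^\dagger\succeq0$, which lies in $[0,1]$ since $\mathrm{Tr}\,C=4$ makes any $C\succeq\mathbf{1}\mathbf{1}^\dagger$ collapse to the trivial $C=\mathbf{1}\mathbf{1}^\dagger$—and setting $\mathcal{C}:=p_d^{-1}\bigl(\tfrac12(\mathcal{U}_D+\mathcal{U}_D^\dagger)-(1-p_d)\mathrm{id}\bigr)$ (any channel when $p_d=0$), one gets a channel $\mathcal C$ obeying Eq.~\eqref{eq:theorem_nonresonant}. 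For the Ising-type, resonance-free interactions of interest $\mathbf{1}$ is an eigenvector of $C$ and the closed form is clean, e.g.\ $p_d=\sin^2(Jt)\to0$ as $t\to0$; this is the $p_d$ that makes the bound of Theorem~\ref{theorem: Lostaglio alpha} nontrivial.

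For Part~2 I still need the rotation block. Using a ZYZ (Euler) decomposition of the $U(2)$ factor, $W(t)=D_L R D_R$ with $D_L,D_R$ diagonal and $R=e^{-i\beta\sigma_y/2}$ acting on $V$; extending these by the fixed phases on $\ket{00},\ket{11}$ to operators $\tilde D_L,\tilde D_R,\tilde R$ on $\mathbb{C}^2\otimes\mathbb{C}^2$ we have $U_I(t)=\tilde D_L\tilde R\tilde D_R=\tilde R'(\tilde D_L\tilde D_R)$, where $\tilde R':=\tilde D_L\tilde R\tilde D_L^{-1}$ is again supported on $V$ and is a rotation by the same angle $\beta$ about an axis $\hat n$ in the transverse ($xy$) plane (conjugation by a diagonal only tilts the axis, and the global phase drops out). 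Thus $U_I(t)=U_2\circ U_1$ with $U_1:=\tilde D_L\tilde D_R$ diagonal—handled by the previous paragraph, giving Eq.~\eqref{theorem_3_resonant_eq1}—and $U_2:=\tilde R'$ a rotation on $V$. For the latter, $\mathcal{U}_2^\dagger$ is the rotation by $-\beta$ about $\hat n$, so $\tfrac12(\mathcal{U}_2+\mathcal{U}_2^\dagger)$ fixes the $\hat n$-axis, contracts the transverse Bloch directions, and acts trivially on the $\ket{00},\ket{11}$ sector; taking $p_{d_2}:=1-\lambda_{\max}\bigl(\tfrac12(\mathcal{U}_2+\mathcal{U}_2^\dagger)\bigr)\in[0,1]$ and $\mathcal{C}_2$ the corresponding channel (morally the dephasing channel along $\hat n$, with $p_{d_2}=1-\cos\beta$ up to a correction on the coherences between $V$ and $\{\ket{00},\ket{11}\}$) gives Eq.~\eqref{theorem_3_resonant_eq2} and completes the proof.

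The main technical obstacle is the complete-positivity bookkeeping behind the claims $p_d,p_{d_2}\in[0,1]$: one must check that the residual superoperators $C\odot(\cdot)-(1-p_d)\mathrm{id}$ and $\tfrac12(\mathcal{U}_2+\mathcal{U}_2^\dagger)-(1-p_{d_2})\mathrm{id}$ are completely positive, i.e.\ that the optimal disturbance probabilities are well defined, which in the diagonal case is exactly the elementary inequality $C-(1-p_d)\mathbf{1}\mathbf{1}^\dagger\succeq0$ and in the rotation case a positivity check on the small Choi matrix. A secondary, purely organisational obstacle is to make the ZYZ-plus-conjugation manipulation respect the fixed phases on $\ket{00},\ket{11}$ so that $U_1$ is globally diagonal and $U_2$ is genuinely supported on $V$, and to confirm the degenerate edge cases do not disturb the block structure established in the first step.
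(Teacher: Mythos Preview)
Your overall strategy is sound and, for the theorem as literally stated, correct; but it takes a genuinely different route from the paper, and your parenthetical estimate of $p_{d_2}$ is wrong in a way that matters.

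The paper is concrete and computational: it expands $H_I$ in the Pauli basis, solves $[H_I,H_0]=0$ explicitly to obtain $H_{I_{\mathrm{nr}}}=g\,\mathrm{diag}(0,1,1,0)$ and, in the resonant case, the commuting splitting $H_{I_{\mathrm r}}=H_\theta+H_a$ with $H_a$ diagonal. It then acts on a generic $\rho$, reads off $p_d=\sin^2(gt/2)$, $p_{d_1}=\sin^2((a-1)gt/2)$, $p_{d_2}=\sin^2(gt)$, and verifies complete positivity of the residual maps by computing their Choi matrices. Your approach replaces the brute-force Choi checks by the Schur-product theorem for diagonal unitaries (a clean improvement) and handles the $V$-block by a ZYZ Euler decomposition rather than the commuting Hamiltonian split.

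The gap is in your $U_2$. Your ZYZ manipulation produces $U_2=\tilde R'$ acting as a rotation by $\beta$ on $V=\linspan\{\ket{01},\ket{10}\}$ and as the identity on $\ket{00},\ket{11}$. For such a $U_2$ one has $U_2+U_2^\dagger=2\cos(\beta/2)\,\mathbb{1}_V\oplus 2\,\mathbb{1}_{V^\perp}$, so for $\beta\neq0$ the identity is \emph{not} in $\linspan\{U_2,U_2^\dagger\}$; equivalently $\mathrm{vec}(\mathbb{1})$ lies outside the range of the rank-two Choi matrix of $\tfrac12(\mathcal U_2+\mathcal U_2^\dagger)$, forcing the largest $\lambda$ with $\tfrac12(\mathcal U_2+\mathcal U_2^\dagger)-\lambda\,\mathrm{id}$ completely positive to be $\lambda_{\max}=0$, i.e.\ $p_{d_2}=1$. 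This still satisfies Eq.~\eqref{theorem_3_resonant_eq2} as written, but your claim $p_{d_2}\approx1-\cos\beta$ is false, and the resulting bound in Theorem~\ref{theorem: sequential transformations} is vacuous. The paper avoids this precisely by the splitting $H_{I_{\mathrm r}}=H_\theta+H_a$: the point is that $H_\theta/g$ has spectrum $\{0,0,0,2\}$, so $H_\theta/g-\mathbb{1}$ is involutory on the \emph{full} four-dimensional space, and Proposition~\ref{proposition: involutory} then gives the nontrivial $p_{d_2}=\sin^2(gt)$ for $U_\theta$. The useful factorisation is therefore not the generic ZYZ but the special commuting one in which each factor is (up to a global shift) generated by an involutory operator.
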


We prove this result in Appendix~\ref{Appendix_Theorem_Quantum}. We now comment on some aspects of this result that can be relevant. First, we \emph{do not} need to assume a linear response regime for the validity of the stochastic reversibility equations. The values of $p_d, p_{d_1}, p_{d_2}$ and the quantum channels $\mathcal{C}, \mathcal{C}_1, \mathcal{C}_2$ always exist. This immediately implies that such a dynamical evolution always satisfies the operational assumptions necessary for Theorem~\ref{theorem: Lostaglio alpha} and Theorem~\ref{theorem: sequential transformations}. 

We can use Theorem~\ref{theorem_3}, together with Theorem \ref{theorem: Lostaglio alpha} and Theorem~\ref{theorem: sequential transformations}, to draw the following conclusion: Let $\mathcal{H} = \mathbb{C}^2 \otimes \mathbb{C}^2$ and $U(t) = e^{-itH} = U_0(t)U_I(t)$, where $U_0(t) = e^{-it(H_A+H_B)}$ and $U_I(t) = e^{-it H_I}$ be as described in Theorem~\ref{theorem_3}. Then, \textit{any noncontextual ontological model} for a fragment of quantum theory (viewed as an operational theory) described by states $\{\rho\} \subseteq \mathcal{D}(\mathcal{H})$, POVM elements $\{E_k(t)\}_{k,t} = \{U_0(t)^\dagger E_k U_0(t)\}_{k,t} \subseteq \mathcal{B}(\mathcal{H})^+$ and unitary transformations $\{U_I(t)\}_t \subseteq \mathbb{U}(\mathcal{H})$, such that $$\Delta A_{\text{NC}} := \sum_k a_k \Big(\mathrm{Tr}\{U_I(t) \rho U_I(t)^\dagger E_k(t)\}  - \text{Tr}\{\rho E_k(t)\}\Big),$$ must satisfy the inequality 
    \begin{align}
      -4a_{max}b_- \leq \Delta A_{\text{NC}} \leq  2a_{max}b_+, \label{ineq: quantum noncontextual}
    \end{align}
where $b_-=p_{d_1}+3p_{d_2}-3p_{d_1}p_{d_2}$ and $b_+=p_{d_1}+2p_{d_2}-2p_{d_1}p_{d_2}$, for some $0 \leq p_{d_1}, p_{d_2} \leq 1$ fixed by the transformation $U(t)$ and where, $\{a_k\} \subseteq \mathbb{R}^+$ is any finite set of real positive numbers having $a_{max} := \max \{a_k\}$. {Note that Equation~\eqref{ineq: quantum noncontextual} is an inequality for when we assume the validity of quantum theory while Eq.~\eqref{eq: sequential inequality} is valid for any operational theory.} The average $\langle \Delta A(t) \rangle$ predicted by quantum theory for a positive observable $A$ is recovered when we let $E_k := \Pi^A_k $ correspond to the projections onto the eigenspace of $A$, associated with eigenvalue $a_k$. 

To show why Ineq.~\eqref{ineq: quantum noncontextual} holds, for the \textit{non-resonant} case, we use Eq.~\eqref{eq:theorem_nonresonant} as the quantum theoretical version of the operational equivalence from Theorem~\ref{theorem: Lostaglio alpha}, from which we obtain \eqref{ineq: quantum noncontextual} with $p_{d_2}=0$. As for the \textit{resonant} case, we can apply $\mathcal{U}_2$ to both sides of  Eq.~\eqref{theorem_3_resonant_eq1} and obtain a quantum theoretical version of the operational equivalence from Theorem~\ref{theorem: sequential transformations} since $\mathcal{U}_2$ is linear. This is an important use of Theorem~\ref{theorem: sequential transformations}, which can be applied in a large scope of different situations. Indeed, given \emph{any} noncontextual ontological model for a fragment of quantum theory, if two unitary transformations $\mathcal{U}_1$ and $\mathcal{U}_2$, in this fragment, satisfy Eqs.~\eqref{theorem_3_resonant_eq1} and~\eqref{theorem_3_resonant_eq2} respectively, then the procedure above can be done to obtain the Ineq.~\eqref{ineq: quantum noncontextual}.

Let us make a clarification regarding the interaction picture and the results in Theorems~\ref{theorem: Lostaglio alpha} and Theorem~\ref{theorem: sequential transformations}. Note that in the interaction picture when assuming the validity of quantum theory, it is sufficient that the unitary evolution satisfying Eq.~\eqref{SR_condition} is the one related to the \emph{interaction Hamiltonian} (regardless of energy preservation) and one simply evolves the observable with the non-interacting evolution, hence considering $E_k(t)$ instead of $E_k$. In the theory-independent setting, one can simply define the average observable to be estimated with respect to a new set of measurements $k_t|M_t$ and values $a_{k_t}$ for a fixed label $t$. This is, however, not necessary for the conclusions following from Theorems~\ref{theorem: Lostaglio alpha} and Theorem~\ref{theorem: sequential transformations}, and it is a feature of applying these findings when assuming quantum theory as an operational theory.
    
{Therefore, we have the average}
\begin{align}
    &\langle \Delta A(t) \rangle = \nonumber \\
    &= \sum_k a_k \left(\mathrm{Tr}\{U_I\rho U_I^\dagger U_0^\dagger \Pi_k^A U_0\} - \mathrm{Tr}\{\rho U_0^\dagger \Pi_k^A U_0\}\right) \nonumber
    \\&= \text{Tr}(U_I(t)\rho U_I(t)^\dagger A(t)) - \text{Tr}(\rho A(t)) \nonumber
    \\&=\text{Tr}(\rho(t)A(t))-\text{Tr}(\rho A(t)).
\end{align}
We can see why Ineq.~\eqref{ineq: quantum noncontextual} bounds noncontextual explanations for such fragments of quantum theory. In any such fragment, the dynamics characterizing any transformation procedure represented in quantum theory by the interaction unitary $U_I$ as in Theorem~\ref{theorem_3} satisfy the operational equivalences from Theorem~\ref{theorem: sequential transformations}. These equivalences are the requirements for the noncontextuality inequality from Theorem~\ref{theorem: sequential transformations} to hold for theory-independent observables given by Eq.~\eqref{eq: average observable}. When applied to any fragment of quantum theory respecting these equivalences, such observables take the form described by Ineq.~\eqref{ineq: quantum noncontextual}. As a final remark, note that in general $p_d, p_{d_1}$ and $p_{d_2}$ will depend on the parameters (and the time parameter $t$) describing the interaction Hamiltonian $H_I$. Their exact dependence is given in Appendix~\ref{Appendix_Theorem_Quantum}.

\subsubsection{Generalized contextuality without measurement incompatibility}

The above calculations allow us to draw a simple yet remarkable conclusion. We can consider a fragment of quantum theory that has (i) unitary evolutions $U(t), U(t)^\dagger$, together with the identity map and some other channel $\mathcal{C}$, for a fixed time $t$, (ii) \emph{a single measurement protocol} characterized by the effects $\{\Pi_k(t)\}_k$, and (iii) a tomographically complete set of states $\rho$ for $\mathcal{H}$. Even though such a fragment has obviously no \emph{measurement incompatibility} (as there is, in fact, a single measurement) such a fragment may lead to a violation of the inequality above (as we will show later). Indeed, that proofs of the failure of generalized noncontextuality can be found with a single measurement is a theoretical prediction from Ref.~\cite{selby2023incompatibility}. Our construction provides a concrete fragment of quantum mechanics in which this can be verified by the violation of a noncontextuality inequality, obtained from Theorem~\ref{theorem: sequential transformations}. 

\subsubsection{{Contextuality as a necessary condition for anomalous heat flow}}

We will now show that contextuality allows for anomalous quantum heat transfer beyond what \emph{any} noncontextual model is capable of. Using Theorem \ref{theorem_3}, we now study violations of the generalized noncontextuality obtained when there is heat exchanged between two qubits, $\mathcal{H}_A = \mathbb{C}^2 = \mathcal{H}_B$. They are described locally in terms of Zeeman Hamiltonians which, without loss of generality, we chose to be $H_A = \frac{\omega_A}{2}(\mathbb{1}-\sigma_z) $ and $ H_B = \frac{\omega_B}{2}(\mathbb{1}-\sigma_z)$, and they interact via a energy-preserving unitary. From the definition of Eq.~\eqref{heat_A_def}, we see that if we choose $A:= H_A \otimes \mathbb{1}_B$, then 
\begin{equation}
    \langle \mathcal{Q}_A \rangle =  \langle A(t) - A(0)\rangle = \langle H_A \otimes \mathbb{1}_B(t)-H_A \otimes \mathbb{1}_B \rangle,
\end{equation}
and we can investigate the average heat flow from quantum system $\mathcal{H}_B$ to system $\mathcal{H}_A$ (or vice-versa). From Theorem~\ref{theorem_3}, the noncontextuality inequalities (Eqs.~\eqref{main_inquality} and~\eqref{eq: sequential inequality}) bound the heat received by $A$ explainable by noncontextual models. Inequality~\eqref{main_inquality} bounds the non-resonant case while inequality~\eqref{eq: sequential inequality} bounds the resonant case.

For the non-resonant case $\omega_A \neq \omega_B$ the study is trivial since no heat is transferred. This happens because by a direct calculation, for all times 
\begin{equation}
    \langle \mathcal{Q}_A \rangle = 0.
\end{equation}
It is simple to see that this holds because the most general non-resonant interaction Hamiltonian $H_{I_{\mathbf{nr}}}$ in this case that is capable of satisfying $[H_{I_{\mathbf{nr}}},H_A + H_B] = 0$ must have the form  $H_{I_{\mathbf{nr}}} = g (\vert 01\rangle \langle 01 \vert + \vert 10\rangle \langle 10 \vert )$ (see Appendix \ref{Appendix_Theorem_Quantum}).

For the resonant case $\omega_A=\omega_B=\omega$, and
\begin{equation}
    H_A = \frac{\omega}{2} (\mathbb{1} - \sigma_z^A). \label{local_qubit_Hamiltonian}
\end{equation}
We now consider the most general case of an initial two-qubit density matrix (in the eigenbasis of $\sigma_z^A \otimes \sigma_z^B$)  written as
\begin{equation}
   \rho = \begin{pmatrix}
        p_{00} & \nu_1^* & \nu_2^* & \gamma^* \\
        \nu_1  &  p_{01} & \eta e^{i \xi} & \nu_3^* \\
        \nu_2 & \eta e^{-i \xi} &  p_{10} & \nu_4^* \\
        \gamma & \nu_3  & \nu_4  &  p_{11}
    \end{pmatrix}, 
\end{equation}
where $0 \leq p_{00}$, $p_{01}$, $p_{10}$, $p_{11} \leq 1$, $\eta$ and $\xi$ are real numbers, and the remaining parameters are complex numbers, constrained to satisfy that $\rho$ is positive semidefinite and  $p_{00}+p_{01}+p_{10}+p_{11} = 1$.  Note that we choose a commonly used basis representation to make our calculations, but our results are basis-independent, as is usually true for proofs of the failure of noncontextuality~\cite{wagner2024inequalities}.  Here, $\nu^*$ denotes the complex conjugate of the complex number $\nu$. The most general energy-conserving unitary interaction for the resonant case is given by $U_{I_\text{r}} (t)= e^{-i H_{I_\text{r}} t}$, where $H_{I_\text{r}}$ is a Hamiltonian given by 
\begin{equation}
    H_{I_\text{r}}= g \begin{pmatrix}
        0 & 0 & 0 & 0 \\
        0 & a & e^{i \theta} & 0 \\
        0 & e^{-i \theta} & a & 0 \\
        0 & 0 & 0 & 0
    \end{pmatrix}, \label{qubit_general_energy_conserving_Hamiltonian}
\end{equation}
where $g>0$ and $0 \leq \theta \leq 2\pi$. As demonstrated in the Appendix \ref{Appendix_Theorem_Quantum} in the proof of Theorem~\ref{theorem_3}, we find that $U_{I_{\mathrm{r}}} = U_2\circ U_1$ and the unitary evolutions characterized by such Hamiltonians satisfy the operational equivalences shown in Theorem~\ref{theorem_3} for $p_{d_1} = \sin^2[(a-1)gt/2]$ and $p_{d_2} = \sin^2(gt)$. Note that in both cases $p_{d_i} = O(g^2t^2)$ when $gt \ll 1$.

For this situation, a direct computation of the average heat (Eq.~\eqref{heat_A_def}) results in
\begin{equation}
{    \!\!\langle \mathcal{Q}_A \rangle \! = \! \omega \! \left[ (p_{01}\!-\!p_{10})\sin^2(gt) \!+\! \eta \sin(2gt) \sin(\xi\!-\!\theta) \right].}\label{general_heat_2qubits}
\end{equation}
This equation asserts that, for the case of resonant qubits,  transformation contextuality happens for sufficiently small $gt$  \textit{if and only if} the presence of coherence in the initial density matrix has some effect on the heat flow. This is because, using Theorem \ref{theorem_3} and Theorem \ref{theorem: sequential transformations} with $A:= H_A \otimes \mathbb{1}_B$, the absolute value of the heat must be bounded by a quantity of order $O(g^2t^2)$, to allow for a noncontextual explanation of the dynamics (Ineq.~\eqref{eq: sequential inequality}). However, the heat contribution caused by coherence, namely, the term $\eta \sin(2gt) \sin(\theta +\xi)$, is $O(gt)$, which can always violate the noncontextual bound for small enough $gt$. Ref.~\cite{lostaglio2020certifying} called this the contextuality of quantum linear response.

For small enough $gt$, whenever the variation of the observable (in this case, the energy) is $O(gt)$, a quantum violation of the noncontextuality inequality will always exist. More concretely, we can find the range for values $gt$ in which anomaly must imply quantum contextuality.  Since in an experimental setup, $g$ is normally a fixed parameter, this means that we are interested in finding the interval $0<t<\tau_c$, for some \emph{critical time} $\tau_c$ such that any anomalous heat flow cannot be explained by noncontextual models. Note that in general, $t=0$ starts with no heat flow, and we let $t=\tau_c$ be the regime where we lose a violation of the noncontextuality inequality~\eqref{eq: sequential inequality}. 

While our main focus has been on the earliest critical time $\tau_c$ at which contextuality due to anomalous heat flow ceases to be guaranteed, this is by no means the \textit{only} interval in which noncontextuality inequalities can be violated. In the most general setting, our results imply that, for certain evolutions, multiple disjoint time windows exist in which a violation of the noncontextuality inequality signals anomalous heat flow. For concreteness---and to facilitate comparison with experimental data---we then concentrate on this specific first time interval.

If we impose for the initial state $\rho$ that the local states are thermal with inverse temperatures $\beta_A$ and $\beta_B$ (see Eq.~\eqref{local_thermal_states}) then the most general global state $\rho$ will have the form
\begin{align}
    \rho = \begin{pmatrix}
        \nu_0 & \nu_1^* & \nu_2^* & \gamma^* \\
        \nu_1  &  \frac{1}{Z_A} - \nu_0 & \eta e^{i \xi} & -\nu_2^* \\
        \nu_2 & \eta e^{-i \xi} &  \frac{1}{Z_B} - \nu_0 & -\nu_1^* \\
        \gamma & -\nu_2  & -\nu_1  &  \frac{e^{-\omega \beta_A - \omega \beta_B}-1}{Z_A Z_B} + \nu_0
    \end{pmatrix}, \label{general_rho_2qubits_thermal}
\end{align}
where $\nu_0$ is a real number and $Z_i = 1+e^{-\omega \beta_i}$, $i=A,B$. This is the case considered in the Sec.~\ref{subsec_gen_setup}, with two qubit systems. The average heat  (Eq.~\eqref{general_heat_2qubits}), after some manipulations, becomes 
\begin{align}
    \langle \mathcal{Q}_A \rangle & = \omega \Bigg\{ \frac{1}{2}\sin^2(gt)\left[ \tanh\left(\frac{\omega \beta_A}{2}\right) - \tanh\left(\frac{\omega \beta_B}{2}\right) \right] \nonumber \\
    & + \eta \sin(2gt) \sin(\xi-\theta) \Bigg\}. \label{general_heat_2qubits_thermal}
\end{align}

The term 
{$(\omega/2)\sin^2(gt)\left[ \tanh\left(\omega \beta_A/ 2\right)\!-\!\tanh\left(\omega \beta_B/2\right) \right]$}
in the equation above is of order $O(g^2t^2)$, and is responsible for the standard heat flow in the absence of correlations. For instance, if $T_A < T_B$, then $\beta_A = 1/T_A > 1/T_B = \beta_B$ and therefore  
{$\tanh\left(\omega \beta_A/2\right) > \tanh\left(\omega \beta_B/2\right)$}, meaning that quantum system $\mathcal{H}_A$, in this case the colder system, receives heat as expected.

Concurrently, the term $\omega \eta \sin(2gt) \sin(\xi-\theta)$ has sign and magnitude depending only on the coherence term $\eta e^{i\xi}$ and the interaction parameters $g,~\theta$. This allows the latter term to be capable of causing anomalous heat flow or to increase the standard heat flow depending on the type of coherence and interactions. Importantly, this last term is of order $O(gt)$, which (from Theorem 3) implies that, whenever this term is non-zero, there will be a contextuality witness for small enough $gt$, or equivalently for \emph{some} interval of time $0 < t < \tau_c$. Since any heat flow inversion is possible only due to this last term, we conclude that:

\begin{corollary} \label{corollary_1}
For any two qubits (having local Zeeman Hamiltonians, and associated Gibbs states) interacting via an energy-preserving unitary, anomalous heat flow is possible for $0<t<\tau_c$ only if noncontextual models fail to explain the data.
\end{corollary}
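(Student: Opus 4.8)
The plan is to reduce the statement to the closed-form heat expression of Eq.~\eqref{general_heat_2qubits_thermal} together with the noncontextuality bound that Theorems~\ref{theorem_3} and~\ref{theorem: sequential transformations} impose on this dynamics. First I would dispose of the degenerate cases. In the non-resonant regime $\omega_A\neq\omega_B$ the most general energy-preserving interaction is diagonal in the energy eigenbasis, hence commutes with $H_A\otimes\mathbb{1}_B$ and gives $\langle\mathcal{Q}_A\rangle\equiv 0$ (as already noted above); so anomalous heat flow never occurs and the implication is vacuous. Similarly, when $\omega_A=\omega_B$ but $\beta_A=\beta_B$ there is no hot/cold distinction and $(\beta_A-\beta_B)\langle\mathcal{Q}_A\rangle\equiv 0$, so again no anomaly is possible. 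This leaves the resonant case $\omega_A=\omega_B=\omega$ with $\beta_A\neq\beta_B$; relabeling the qubits, I may assume $\beta_A>\beta_B$, so that $\mathcal{H}_A$ is colder, standard heat flow means $\langle\mathcal{Q}_A\rangle\geq 0$, and anomalous heat flow means $\langle\mathcal{Q}_A\rangle<0$.

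For this case I would split the heat of Eq.~\eqref{general_heat_2qubits_thermal} as $\langle\mathcal{Q}_A\rangle=C(t)+D(t)$, where $C(t):=\frac{\omega}{2}\sin^2(gt)\,[\tanh(\omega\beta_A/2)-\tanh(\omega\beta_B/2)]\geq 0$ is the correlation-free contribution (of order $O(g^2t^2)$) and $D(t):=\omega\eta\sin(2gt)\sin(\xi-\theta)$ is the coherence contribution (of order $O(gt)$). Because $C(t)\geq 0$, any anomalous instant forces $D(t)<-C(t)\leq 0$, so in particular $\eta\sin(\xi-\theta)\neq 0$ and $|D(t)|=\omega|\eta||\sin(\xi-\theta)||\sin(2gt)|$ has the nonzero leading term $2\omega|\eta||\sin(\xi-\theta)|\,gt$ near $t=0$. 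On the other hand, Theorem~\ref{theorem_3} factors the interaction unitary as $U_I(t)=U_2\circ U_1$ with stochastic reversibility holding for $p_{d_1}=\sin^2((a-1)gt/2)$ and $p_{d_2}=\sin^2(gt)$; composing the first relation with $\mathcal{U}_2$ puts the dynamics in the form required by Theorem~\ref{theorem: sequential transformations}, and choosing $\mathcal{A}=H_A\otimes\mathbb{1}_B$ (eigenvalues $0,\omega$, so $a_{max}=\omega$ after the harmless shift making all $a_k>0$) yields Ineq.~\eqref{ineq: quantum noncontextual}. Hence any noncontextual model of the relevant fragment of quantum theory (states $\rho$, the single interaction-picture measurement $\{\Pi_k^A(t)\}_k$, and the transformations $U_I(t),U_I(t)^\dagger,\mathrm{id},\mathcal{C}_i$) must obey $\langle\mathcal{Q}_A\rangle\geq -4\omega b_-$ with $b_-=p_{d_1}+3p_{d_2}-3p_{d_1}p_{d_2}=O(g^2t^2)$.

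It then remains to combine the two ingredients. If a noncontextual model existed and the flow were anomalous at time $t$, then $-4\omega b_-(t)\leq C(t)+D(t)<0$, which rearranges to $|D(t)|\leq C(t)+4\omega b_-(t)$; using $\sin^2 x\leq x^2$ the right-hand side is $O(g^2t^2)$, whereas (since anomaly forces $\eta\sin(\xi-\theta)\neq 0$) the left-hand side equals $2\omega|\eta||\sin(\xi-\theta)|\,gt+O(t^3)$ with positive leading coefficient. Dividing by $t$ and letting $t\to 0^+$ shows this inequality must fail for all sufficiently small $t>0$, so there is a critical time $\tau_c>0$ such that for every $t\in(0,\tau_c)$ an anomalous value forces $\langle\mathcal{Q}_A\rangle<-4\omega b_-(t)$, violating Ineq.~\eqref{ineq: quantum noncontextual} and thereby excluding any noncontextual model; for states with $\eta\sin(\xi-\theta)=0$ no anomaly ever occurs, so the claim holds vacuously with any $\tau_c>0$. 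Together with the degenerate cases, this proves the corollary. I expect the main obstacle to be precisely this comparison: since $C$ and $D$ partially cancel in the anomalous window one cannot lower-bound $|\langle\mathcal{Q}_A\rangle|$ directly, and the fix is the reformulation $|D(t)|\leq C(t)+4\omega b_-(t)$, after which the order separation between the $O(gt)$ coherence term and the $O(g^2t^2)$ corrections produces a nonempty window; checking that the zero eigenvalue of $H_A$ leaves $a_{max}=\omega$ intact (shift $H_A\mapsto H_A+c\mathbb{1}$, let $c\to 0^+$) is a minor technical point.
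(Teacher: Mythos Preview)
Your proposal is correct and follows essentially the same route as the paper: both arguments hinge on the closed-form heat~\eqref{general_heat_2qubits_thermal}, the observation that anomaly is impossible unless the coherence contribution $\omega\eta\sin(2gt)\sin(\xi-\theta)$ is nonzero, and the order comparison between this $O(gt)$ term and the $O(g^2t^2)$ noncontextual bound coming from Theorems~\ref{theorem_3} and~\ref{theorem: sequential transformations}. Your treatment is somewhat more explicit than the paper's (you separate out the non-resonant and equal-temperature cases, rearrange to $|D(t)|\leq C(t)+4\omega b_-(t)$ rather than bounding $\langle\mathcal{Q}_A\rangle$ directly, and flag the $a_k>0$ technicality via the shift $H_A\mapsto H_A+c\mathbb{1}$), but these are refinements of the same argument rather than a different approach.
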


This corollary is our main result. It shows a contextuality signature for a fairly broad class of thermodynamic scenarios. It is also important to note that the term $\nu_0$ in Eq.~\eqref{general_rho_2qubits_thermal} carries all the information about the incoherent correlations in the initial density matrix of the two qubits. However, this term has no role in the heat exchanged (Eq.~\eqref{general_heat_2qubits_thermal}). This leads us to conclude that \emph{only} coherent correlations and entanglement are responsible for the anomalous heat flow, for the interacting two-qubits case. This relation between heat flow anomaly and the aforementioned quantum resources was already known from Ref.~\cite{lostaglio2020quasiprobability}. We add the statement that it is both these resources, together with the dependency on $gt$ and the validity of the operational equivalences we investigate, that make these correlations inexplicable by means of noncontextual ontological models.

{
In addition, we can point out a relation between the violation of the standard Clausius inequality (Eq.~\eqref{heat_flow_uncorrelated}) and the presence of contextuality when $gt \ll 1$. That is, for short times, the
same heat flow term contributing to the anomaly, and hence to the violation to the noncon-
textuality inequality, is responsible for the initial negativity of the mutual information change
in Eq.~\eqref{second_law_gen} (see Appendix \ref{qubits_mutual_information}). This suggests a direct relationship between contextuality and the emergence of correlation effects in the second law of thermodynamics.
}

\subsubsection{Connection with experimental results}
\label{experimental_qubit_subsection}

We claimed that for fixed values of $g$ there are critical times $\tau_c$ that can be found, such that for any $0<t<\tau_c$ the noncontextual inequality is violated for this class of two-qubit interactions we have just studied. Small $gt$ bounds the amount of time in which this contextuality witness happens in real experiments, and we wish this time interval to be experimentally accessible. Therefore, we now estimate such a critical time $\tau_c$ using the parameters from the recent experimental result of Ref.~\cite{Micadei_2019}.

\begin{figure}[t]
    \centering
   \includegraphics[width = \columnwidth]{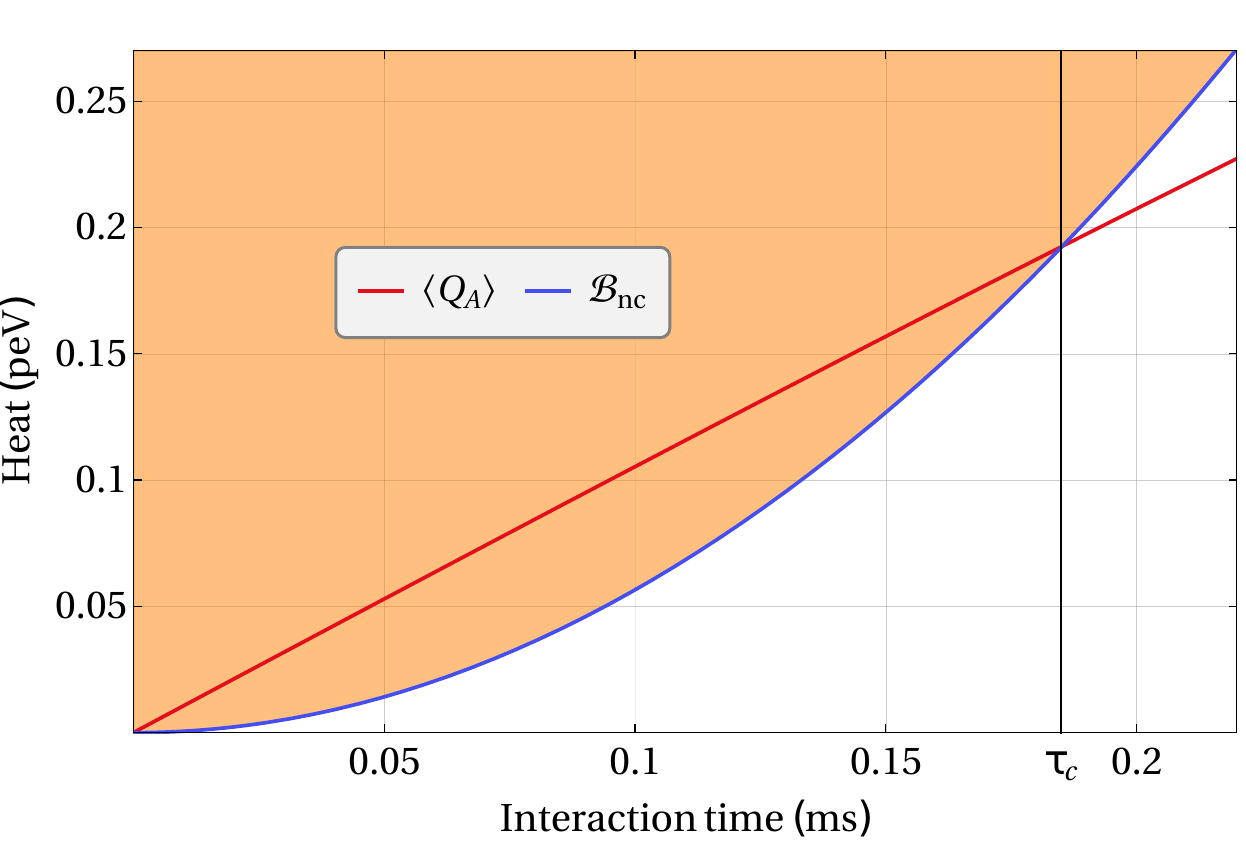}
\caption{\textbf{Anomalous heat flow with $0<t<\tau_c$ implies a violation of a noncontextuality inequality.} Average heat flow predicted by quantum theory (red line) and noncontextuality bound (blue curve). The noncontextuality bound is given by  Ineq.~\eqref{eq: sequential inequality}. The region above the curve (orange) corresponds to the values in which the heat transfer \emph{cannot} be described by a noncontextual model. Any anomalous heat flow violates the noncontextuality inequality for $0 < t < \tau_c \approx 1.85\times 10^{-4}$s. In this interval, heat averages predicted by quantum theory are greater than those achievable by noncontextual models. Interaction Hamiltonian given by Eq.~\eqref{V_interaction_2qubits_exp}, and initial quantum state given by Eq.~\eqref{general_rho_2qubits_thermal}. Parameters: $J = 215.1$ Hz,  $\omega = 4.135\times 10^{-12}~ \text{eV}$, $T_A = 4.3$ peV, $T_B = 3.66$ peV, $\gamma = \xi = \nu_0 = \nu_1 = \nu_2 = 0$, $g= J \pi$, $a=0$, $\theta=\pi/2$ and $\eta = -0.19$, taken from Ref.~\cite{Micadei_2019}. (color online)} 
\label{plot_heat_exp}
\end{figure}

In this experiment, the two qubits are realized by spin-1/2 systems in a nuclear magnetic resonance (NMR) setup and are resonant with local Hamiltonians in the form of Eq.~\eqref{local_qubit_Hamiltonian} with $\omega=h\nu_{\text{exp}}$, where $h \approx 4.135\times 10^{-15}~ \text{eV}.\text{Hz}^{-1}$ is the Plank constant and $\nu_\text{exp} = 1$ kHz. The interaction is set by the following effective interaction Hamiltonian
\begin{equation}
    H_I = \frac{\pi J}{2} (\sigma_x^A \otimes \sigma_y^B - \sigma_y^A \otimes \sigma_x^B), \label{V_interaction_2qubits_exp}
\end{equation}
where $J = 215.1$ Hz. The initial temperature (in units of energy) of the qubit $\mathcal{H}_A$ is $T_A = 4.3$ peV, and of the qubit $\mathcal{H}_B$ is $T_B = 3.66$ peV (recalling, $T_i = \beta_i^{-1}$). Note therefore that in their experiment $\mathcal{H}_A$ is the \emph{hotter} system, so any anomalous heat flow happens when this system \emph{receives heat}, which in our convention is characterized by $\langle \mathcal{Q}_A \rangle > 0$.

Additionally, the experiment considers the initial state in the form of Eq.~\eqref{general_rho_2qubits_thermal} with $\gamma = \xi = \nu_0 = \nu_1 = \nu_2 = 0$, and $\eta = -0.19 $. This situation corresponds to the resonant case with heat given by Eq.~\eqref{general_heat_2qubits_thermal} and an interaction Hamiltonian having parameters $g = J \pi,~a = 0$, and $\theta = \pi/2$.

Theorem~\ref{theorem: sequential transformations} implies that the absolute value of the heat received by the qubit $\mathcal{H}_A$ must be bounded by the function 
\begin{align}
\mathcal{B}_\text{nc} 
:&= 
2a_{max}b_+ = 2a_{max}(p_{d_1}+2p_{d_2}-2p_{d_1}p_{d_2}) \nonumber\\
&=
2h \nu_0[\sin^2(J\pi t) + 2 \sin^2(J\pi t/2) 
\nonumber
\\
&\hspace{2cm}
- 2\sin^2(J\pi t)\sin^2(J \pi t/2)]
\nonumber\\
&{~= 2h \nu_0[1 - \cos^{3}{(J\pi t)}]},
\end{align}
defining the noncontextual bound. In Fig.~\ref{plot_heat_exp} we see the region where the heat flow predicted by quantum theory (red curve), given by Eq.~\eqref{general_heat_2qubits_thermal} with the parameters of the experiment of Ref. \cite{Micadei_2019}, violates such bound for small $t$ (the anomalous heat flow here is positive since $T_A > T_B$, a standard heat flow would be negative). Therefore, we obtain the approximate value for critical time to be $\tau_c \approx 1.85 \times 10^{-4}$ s. 

{
As mentioned previously, during the reversal of heat flow, the initial correlations are `consumed', and as a result, the variation of mutual information $\Delta \mathcal{I}(A :B)$ is negative. A clear image for this consumption is presented in Figure 2 of Micadei et al.~\cite{Micadei_2019} for this experimental case. As we show in Appendix~\ref{qubits_mutual_information}, this consumption has a negative slope boosted by the same heat flow term that violates the noncontextuality inequality.
}

We can also see that, from Fig.~\ref{plot_heat_exp}, when $t\geq\tau_c$ anomalous heat flow can happen, yet without the violation of the noncontextuality inequality (it is important to stress that respecting this noncontextuality inequality is a necessary but not a sufficient condition for noncontextuality). We can also comment that the plot in Fig.~\ref{plot_heat_exp} is zoomed in so that the values of $\langle \mathcal{Q}_A \rangle$ appear to be a line when they are described by a periodic function. This implies that in fact there are various critical time intervals $\Delta\tau_{ij} = (\tau_{c}^i,\tau_{c}^j)$ in which the reversal of heat flow witnesses quantum contextuality. Finally, while we have focused on the linear regime due to the experimental parameters, \emph{in principle} different choices of parameters could allow for the anomaly to witness contextuality \emph{beyond} a linear response regime.

\subsection{{Two interacting qudit systems: the partial SWAP as a case study}} \label{Subsection_PS_qutrits}

Our main theoretical result from Theorem~\ref{theorem: sequential transformations} is not dependent on the dimensionality of the physical system considered. To highlight this fact, we explore similar conclusions as in the two qubits case, using the methods of Theorem~\ref{theorem: Lostaglio alpha} or Theorem~\ref{theorem: sequential transformations}, {by finding interactions that respect the stochastic reversibility condition (Eq.~\eqref{SR_condition}) valid to any bipartite system $\mathbb{C}^d \otimes \mathbb{C}^d$ for some dimension $d$}. { In what follows we make statements valid for any Hilbert space $\mathcal{H}$, and specify to the case of dynamics governed by the partial SWAP acting on $\mathcal{H} = \mathbb{C}^d \otimes \mathbb{C}^d$. For concreteness, we conclude with an analysis of the case $d=3$. }


\subsubsection{A simple  family of unitary evolutions satisfying stochastic reversibility}

We say that a Hermitian operator $K=K^\dagger$ is also \emph{involutory} when $K^2 = \mathbb{1}$. From Stones theorem, any family of unitary operators $\{U(t)\}_t$ can be written as $U=e^{-itK}$ for some $K$ self-adjoint, and for the cases when $K$ is also involutory we can show that it satisfies stochastic reversibility.

\begin{proposition}\label{proposition: involutory}
    Let $\mathcal{H}$ be any  Hilbert space. Let $K$ be an involutory selfadjoint operator from $\mathcal{B}(\mathcal{H})$ and $U_K(t) := e^{-igtK}$. Then, $\mathcal{U}_K (\cdot) := U_K(t)(\cdot)U_K(t)^\dagger $ satisfies the stochastic reversibility condition~\eqref{SR_condition}, with $\mathcal{C}(\cdot) = K (\cdot) K$, $\alpha=1/2$ and $p_d = \sin^2(gt)$.
\end{proposition}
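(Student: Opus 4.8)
The plan is to exploit the fact that an involutory self-adjoint $K$ has spectrum contained in $\{+1,-1\}$, which makes the exponential $U_K(t) = e^{-igtK}$ collapse into a simple closed form. First I would write $\mathbb{1} = \Pi_+ + \Pi_-$ and $K = \Pi_+ - \Pi_-$ for the spectral projectors $\Pi_\pm$ onto the $\pm 1$ eigenspaces (or, more directly, use the identity $e^{-igtK} = \cos(gt)\mathbb{1} - i\sin(gt)K$, which follows immediately from $K^2 = \mathbb{1}$ by summing the even and odd parts of the power series — exactly as for a single Pauli matrix). So $U_K(t) = \cos(gt)\,\mathbb{1} - i\sin(gt)\,K$ and $U_K(t)^\dagger = \cos(gt)\,\mathbb{1} + i\sin(gt)\,K$.

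Next I would compute the channel $\mathcal{U}_K(\cdot) = U_K(t)(\cdot)U_K(t)^\dagger$ explicitly on an arbitrary input $X$. Expanding the product gives
\begin{equation}
\mathcal{U}_K(X) = \cos^2(gt)\,X + \sin^2(gt)\,KXK - i\sin(gt)\cos(gt)\,[K,X],
\end{equation}
where the cross terms combine into the commutator. Then $\mathcal{U}_K^\dagger(\cdot) = U_K(t)^\dagger(\cdot)U_K(t)$ is obtained by flipping the sign of the odd term, i.e.
\begin{equation}
\mathcal{U}_K^\dagger(X) = \cos^2(gt)\,X + \sin^2(gt)\,KXK + i\sin(gt)\cos(gt)\,[K,X].
\end{equation}
Averaging the two kills the commutator term exactly, leaving $\tfrac{1}{2}\mathcal{U}_K(X) + \tfrac{1}{2}\mathcal{U}_K^\dagger(X) = \cos^2(gt)\,X + \sin^2(gt)\,KXK$. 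Setting $p_d := \sin^2(gt)$, $\alpha := 1/2$, and $\mathcal{C}(\cdot) := K(\cdot)K$, this is precisely $(1-p_d)\,\mathrm{id} + p_d\,\mathcal{C}$, which is the stochastic reversibility condition of Eq.~\eqref{SR_condition} at the level of quantum channels (as used in Theorem~\ref{theorem: Lostaglio alpha}).

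The only remaining point is to check that $\mathcal{C}(\cdot) = K(\cdot)K$ is a legitimate quantum channel, which is immediate: since $K = K^\dagger$ is unitary (because $K^\dagger K = K^2 = \mathbb{1}$), the map $X \mapsto KXK^\dagger = KXK$ is a unitary conjugation, hence completely positive and trace-preserving. I do not anticipate a serious obstacle here — the argument is essentially the observation that involutory self-adjoint operators behave like a generalized Pauli matrix, so the $2\times 2$ computation underlying the standard qubit case goes through verbatim in arbitrary dimension. If anything, the one place to be slightly careful is the operator-ordering in the expansion of $U_K(t)(\cdot)U_K(t)^\dagger$, making sure the signs of the $i\sin(gt)\cos(gt)$ terms are opposite in $\mathcal{U}_K$ and $\mathcal{U}_K^\dagger$ so that they cancel in the symmetric combination; once that is verified the result follows.
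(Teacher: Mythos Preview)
Your proof is correct and follows essentially the same approach as the paper: both rely on the closed form $e^{-igtK} = \cos(gt)\,\mathbb{1} - i\sin(gt)\,K$ (what the paper calls ``Taylor expansion since $K$ is bounded and involutory'') and both verify that $\mathcal{C}(\cdot)=K(\cdot)K$ is a channel via the Kraus/unitary-conjugation argument. Your version is simply more explicit in carrying out the expansion and cancellation of the commutator term.
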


\begin{proof}
    Note that  $\mathcal{C}(X) = K X K^\dagger$ and $K^\dagger K = KK = \mathbb{1}$ and hence $\mathcal{C}$ is in a Kraus representation form, and therefore it is a quantum channel. The validity of stochastic reversibility with this specific $p_d$ follows from Taylor expansion since $K$ is bounded and involutory.
\end{proof}

Examples of Hermitian involutory matrices are the SWAP operator, the Pauli matrices, the identity matrix, CNOT gates, conjugations~\cite{miyazaki2022imaginarityfree}, and the Hadamard matrix. 

\subsubsection{Partial SWAP}

A unitary that satisfies this condition and energy conservation (Eq.~\eqref{eq: energy_conservation1}) is the \emph{partial SWAP} defined as 
\begin{equation}
    U_{PS}(t) := e^{-i g t S},
\end{equation}
where $S$ is the SWAP operator~\cite{nielsen2000quantum}, and $g > 0$. Since the SWAP operator is described by an involutory matrix, i.e., $S^2 = \mathbb{1}$, Proposition~\ref{proposition: involutory} applies to $e^{-i g t S}$, where 
\begin{equation}
    U_{PS}(t) = \cos(gt) \mathbb{1} - i \sin(gt) S. \label{u to general swap}
\end{equation}
By direct calculation, we have 
\begin{equation}
    \frac{1}{2}\mathcal{U}_{PS} + \frac{1}{2}\mathcal{U}_{PS}^\dagger = \cos^2(gt) \mathrm{id} + \sin^2(gt) \mathcal{C}_S,
\end{equation}
where $\mathcal{U}_{PS}[X] := U_{PS}(t) X U_{PS}^\dagger(t)$ and $\mathcal{U}^\dagger_{PS}[X] = U_{PS} (t)^\dagger X^\dagger U_{PS} (t)$, $p_d = \sin^2(gt)$ and the quantum channel $\mathcal{C}_S[X] = S X S$. { Note that the partial SWAP dynamics is valid for any $S$ acting on $\mathbb{C}^d \otimes \mathbb{C}^d$. In what follows, we investigate heat flow in this general case.} 

{
\subsubsection{Heat flow during a partial SWAP of two qudits}
The partial SWAP unitary can be useful to obtain noncontextuality inequalities for higher dimensional systems since it satisfies the stochastic reversibility condition in any dimension where $S$ is well-defined. An example that may find fruitful applications is any two-qudits interacting via a partial SWAP. We find conclusions similar to the two-interacting qubits of Subsection \ref{subsection_qubits} for these cases.

Consider two single-qudit quantum systems $\mathcal{H}_A$ and $\mathcal{H}_B$ each with a Hilbert space of dimension $d \geq 2$. The qudit $\mathcal{H}_A$ has a local Hamiltonian 
\begin{equation}
    H_A = \sum_{k=0}^{d-1} \omega_k \ket{k}_A \bra{k}_A, \label{qudit_A_local_hamiltonian}
\end{equation}
where $\omega_k$ are positive numbers (the so-called {Bohr frequencies}) and $\ket{k}_A$ the eigenvectors of $H_A$. If the global state of the composite Hilbert space  $\mathcal{H}_A \otimes \mathcal{H}_B$ is initially 
\begin{equation}
    \rho = \sum_{n,m,n',m'} p_{n,m,n',m'} \ket{n,m}\bra{n',m'}, \label{initial_state_qudit_AB}
\end{equation}
where $p_{n,m,n',m'}$ are the density matrix elements respecting $\rho >0$ and $\tr \{ \rho \} = 1$, with marginals $p_{n,n'}^A = \sum_{m} p_{n,m,n',m}$ and $p_{m,m'}^B = \sum_{n} p_{n,m,n,m'}$. Note that we have defined $\ket{n,m} := \ket{n}_A \otimes \ket{m}_B$, where $\ket{n}_A$ is an eigenvector of $H_A$ with eigenvalue $\omega_n$ and $\ket{m}_B$, is an eigenvector of the local Hamiltonian acting on the Hilbert space of $B$. Given these conditions, we show (see Appendix \ref{Appendix_qudits}) that the heat \emph{received} by $\mathcal{H}_A$, after interacting with $\mathcal{H}_B$ via a partial SWAP (given by Eq. \eqref{u to general swap}) is 
\begin{align}
   \langle \mathcal{Q}_A \rangle & = \sin^2 (g t) \left( \sum_m p_{m,m}^B \omega_m - \sum_n p_{n,n}^A \omega_n \right) \nonumber \\
   & - \sin (2 g t) \sum_{n,m} \text{Im}(p_{n,m,m,n}) \omega_n. \label{heat_qudit_PSWAP}
\end{align}

This equation maintains the same characteristics as the heat exchanged between any two qubits (Eq. \eqref{general_heat_2qubits}), and, consistently, it is equivalent to Eq.~\eqref{general_heat_2qubits} for $d=2$ and $\theta=0$. The following three aspects are of main importance: First, the first term of the rhs of the equation above, namely $\sin^2 (g t) \left( \sum_m p_{m,m}^B \omega_m - \sum_n p_{n,n}^A \omega_n \right)$, which depends only on the diagonal terms of the global density matrix (and, more specifically, only on the diagonal terms of the reduced density matrices), is responsible for the standard heat flow. For instance, if the local states are thermal $\rho_A = \frac{e^{-\beta_A H_A}}{Z_A}$ and $\rho_B = \frac{e^{-\beta_B H_B}}{Z_B}$, for $\beta_{A(B)}$ inverse temperatures, $Z_{A(B)}$ partition functions, and the local Hamiltonians $H_{A(B)} = \sum_{k=0}^{d-1} \omega_k \ket{k}_{A(B)} \bra{k}_{A(B)}$, with $\omega_k = k \Delta \omega > 0$ (equally spaced Böhr frequencies), this first term will be

\begin{align}
   & \Delta \omega \frac{\sin^2 (g t)}{2} \Bigg[ \coth(\beta_B \Delta \omega/2) - d \coth (d \beta_B \Delta \omega/2)  \nonumber \\
    & - \left( \coth(\beta_A \Delta \omega/2) - d \coth (d \beta_A \Delta \omega/2) \right) \Bigg],
\end{align}
which is always positive for $\beta_A > \beta_B$, meaning that it contributes positively to the heat received by $\mathcal{H}_A$ when the quantum system  $\mathcal{H}_A$ is locally in a colder state than the quantum system $\mathcal{H}_B$. Clearly, this term is also consistent with the case of two qubits (for $d=2$ and $\theta =0$, this gives the first term of the rhs of Eq.~\eqref{general_heat_2qubits_thermal}). 

Second, the second term of the rhs of Eq.~\eqref{heat_qudit_PSWAP}, namely $- \sin (2 g t) \sum_{n,m} \text{Im}(p_{n,m,m,n}) \omega_n$, is the term responsible for the possibility of anomalous heat flow and depends only on the coherent terms of the initial density matrix (in the energy eigenbasis). 

Third, this second term of the rhs of Eq.~\eqref{heat_qudit_PSWAP} is of type ${O}(gt)$ for small values of $gt$, while the first term (responsible for the standard heat flow) is of type ${O}(gt)$. Therefore, this second coherent term is always capable of violating a non-contextuality inequality for a sufficiently small $gt$. This follows directly from Proposition~\ref{proposition: involutory} (which is valid for the partial SWAP unitary) and Theorem~\ref{theorem: Lostaglio alpha}. 

Hence, we showed that the density matrix coherence terms can cause an effect on the heat flow exchanged between two qudits, interacting via a partial SWAP, \emph{if and only if} a non-contextuality inequality can be violated. Also, for this case, the anomalous heat flow happens \emph{only if} a noncontextuality inequality can be violated. These facts happen in complete analogy with the heat exchanged between two qubits of Subsec.~\ref{subsection_qubits}, and show that the same conclusions can be extended for a case of higher dimension. 

Let us investigate the violation of the noncontextuality inequality (Theorem~\ref{theorem: Lostaglio alpha}) using Eq.~\eqref{heat_qudit_PSWAP}. Let $\omega_{max} = \max \{\omega_k\}_{k=0,d-1}$. Then, we can find the critical times $\tau_c^l,\tau_c^u$ assuming that $\omega_{max}, g$ are all fixed parameters and $\sum_{n,m} \text{Im}(p_{n,m,m,n}) \omega_n \neq 0$, given by
\begin{align}
    \tau_c^u &= \frac{1}{g}\cot^{-1}\left(\frac{2\omega_{max}-\left( \sum_m p_{m,m}^B \omega_m - \sum_n p_{n,n}^A \omega_n \right)}{ 2\sum_{n,m} \text{Im}(p_{n,m,m,n}) \omega_n}\right),\label{eq: tau upper u}\\
    \tau_c^l &= \frac{1}{g}\cot^{-1}\left(\frac{-4\omega_{max}-\left( \sum_m p_{m,m}^B \omega_m - \sum_n p_{n,n}^A \omega_n \right)}{ 2\sum_{n,m} \text{Im}(p_{n,m,m,n}) \omega_n}\right), \label{eq: tau lower l}
\end{align}
where above we have considered $\tau_c^u$ the critical time associated with the noncontextuality inequality upper bound, and $\tau_c^l$ for the lower bound. These bounds can be used to directly infer the relationship between contextuality and anomalous heat flow. For a concrete example, we will focus on $d=3$ in what follows.

}

\subsubsection{{Exemplification: Heat flow during a partial SWAP of two qutrits}}

The partial SWAP unitary can be useful to obtain noncontextuality inequalities for higher dimensional systems since it satisfies the stochastic reversibility condition in any dimension where $S$ is well-defined. An intriguing model for examination involves the interplay between two qutrits mediated by a SWAP gate. This framework finds utility in elucidating experimental implementations of quantum gates and quantum engines, while concurrently facilitating extensions to qudit systems~\cite{scovil1959three, garcia2013swap, wilmott2011swapping}. Each qutrit manifests solely three perfectly distinguishable states, denoted as $\ket{0}$, $\ket{1}$, and $\ket{2}$, that we write as the canonical basis
\begin{equation}
    \ket{0} = \begin{pmatrix}
             1 \\ 0 \\ 0  
            \end{pmatrix}, \quad 
            \ket{1} = \begin{pmatrix}
             0 \\ 1 \\ 0 \\ \end{pmatrix}\quad \text{and} \quad 
             \ket{2} = \begin{pmatrix}
             0 \\ 0 \\ 1  
        
            \end{pmatrix}.
\end{equation}
It is ensured that the composite state of the system adheres to the form $\ket{i}\otimes\ket{j}$ ($i,j=0,1,2$). Consequently, the SWAP operator $S_3$ is characterized by the conditions
\begin{equation}
    S_3\ket{i}\otimes\ket{j} = \ket{j}\otimes\ket{i},\quad S_3^2 = \mathbb{1}, 
\end{equation}
Thus, the SWAP operator assumes the form
\begin{align}
    S_3 = \begin{pmatrix}
             1 & 0 & 0 & 0 & 0 & 0 & 0 & 0 & 0\\
             0 & 0 & 0 & 1 & 0 & 0 & 0 & 0 & 0\\
             0 & 0 & 0 & 0 & 0 & 0 & 1 & 0 & 0\\
             0 & 1 & 0 & 0 & 0 & 0 & 0 & 0 & 0\\
             0 & 0 & 0 & 0 & 1 & 0 & 0 & 0 & 0\\
             0 & 0 & 0 & 0 & 0 & 0 & 0 & 1 & 0\\
             0 & 0 & 1 & 0 & 0 & 0 & 0 & 0 & 0\\
             0 & 0 & 0 & 0 & 0 & 1 & 0 & 0 & 0\\
             0 & 0 & 0 & 0 & 0 & 0 & 0 & 0 & 1         
            \end{pmatrix}
\end{align}
and the unitary operator governing the system's dynamics arising from this interaction is expressed as (see Eq.~\eqref{u to general swap})
\begin{equation}
    U_3 (t) = e^{-igtS_3} = \cos(gt)\mathbb{1} - i\sin (gt)S_3. \label{U for qutrit}
\end{equation}
As before, $g$ denotes the strength of the interaction. The local Hamiltonian governing a single qutrit system is expressed as
\begin{equation}
    H_A = \omega_0 \ket{0}\bra{0}+ \omega_1 \ket{1}\bra{1} + \omega_2 \ket{2}\bra{2}, \label{qutrit hamiltonian}
\end{equation}
where $\omega_i \in \mathbb{R}^+$ denotes the energy associated with eigenstate $\ket{i}$,{ this Hamiltonian is the realization of the qudit Hamiltonian of Eq.~\eqref{qudit_A_local_hamiltonian}}. Also, we consider both local Hamiltonians to be equal, i.e., $H_A = H_B$. Consequently, a density matrix, with thermal local states (Eq.~\eqref{local_thermal_states}), describing the composite system while conserving energy is generically represented as
\begin{widetext}
\begin{align}\label{eq: state qutrit}
    \rho_{3 \otimes 3} = \begin{pmatrix}
             p_0 & 0 & 0 & 0 & 0 & 0 & 0 & 0 & 0\\
             0 & p_1 & 0 & \eta_{31}e^{i\theta_{31}} & 0 & 0 & 0 & 0 & 0\\
             0 & 0 & p_2 & 0 & 0 & 0 & \eta_{62}e^{i\theta_{62}} & 0 & 0\\
             0 & \eta_{31}e^{-i\theta_{31}} & 0 & p_3 & 0 & 0 & 0 & 0 & 0\\
             0 & 0 & 0 & 0 & p_4 & 0 & 0 & 0 & 0\\
             0 & 0 & 0 & 0 & 0 & p_5 & 0 & \eta_{75}e^{i\theta_{75}} & 0\\
             0 & 0 & \eta_{62}e^{-i\theta_{62}} & 0 & 0 & 0 & p_6 & 0 & 0\\
             0 & 0 & 0 & 0 & 0 & \eta_{75}e^{-i\theta_{75}} & 0 & p_7 & 0\\
             0 & 0 & 0 & 0 & 0 & 0 & 0 & 0 & p_8        
            \end{pmatrix},
\end{align}
\end{widetext}
where the off-diagonal elements denote correlations between the qutrits {(note that the nonzero terms are the ones relevant to the heat received by $A$, according to Eq. \eqref{heat_qudit_PSWAP})}and the diagonal elements, denoted as $\{p_i\}_i$, form a classical probability distribution contingent upon the inverse temperatures $\beta_A$ and $\beta_B$ of qutrits $\mathcal{H}_A$ and $\mathcal{H}_B$, respectively (refer to Appendix~\ref{Appendix_C} for their specific formulations). Note that the parameters $\eta_{31},\eta_{62},\eta_{75},\theta_{31},\theta_{62},\theta_{75}$ are \emph{not} free, and must be such that $\rho_{3 \otimes 3}$ is positive semidefinite, together with the fact that $\{p_i\}_{i=0}^8$ must define a probability distribution. The heat content of qutrit $\mathcal{H}_A$, expressed as $\langle \mathcal{Q}_A \rangle$, can be evaluated { directly} from Eq.~\eqref{heat_A_def}   { or using Eq. \eqref{heat_qudit_PSWAP}}  
\begin{align}
     \langle \mathcal{Q}_A \rangle &= \Tr \left\{ \rho_{3 \otimes 3} (U_3 (t) H_A \otimes \mathbb{1}_B U^\dagger_3 (t) - H_A\otimes \mathbb{1}_B) \right\}\nonumber \\
    &= \zeta \sin^2(gt) + \xi\sin (gt)\cos(gt) , 
    \nonumber \\
    &{= \zeta \sin^2(gt) + \frac{\xi}{2}\sin (2gt) ,}
    \label{heat for qutrit}
\end{align}
where $\zeta$ { is equivalent to the sum $\sum_m p_{m,m}^B \omega_m - \sum_n p_{n,n}^A \omega_n$ of Eq. \eqref{heat_qudit_PSWAP} and} depends on the inverse temperatures $\beta_i$ and energies $\omega_i$ of the qutrits, while $\xi$ { is equivalent to the sum $\sum_{n,m} \text{Im}(p_{n,m,m,n}) \omega_n/2$ of Eq. \eqref{heat_qudit_PSWAP} and} encapsulates the correlations $\eta_{ij}$ and phases $\theta_{ij}$ (see Appendix \ref{Appendix_C} for details). { Exemplifying our discussion on qudits,} the term $\zeta$ is responsible for the standard heat flow (meaning that $\beta_A < \beta_B$ implies $\zeta>0$), while $\xi$ is the term that makes the heat flow inversion possible. For small values of $gt$, contextual effects are assured for non-null values of $\xi$. It is noteworthy that while $\xi$ can be utilized to reverse the heat flux, such inversion is not a prerequisite for contextuality, since this term can also cause an increase in the standard heat flow. Let us investigate the violation of the noncontextuality inequality (found in Ref.~\cite{lostaglio2020certifying}) using Eq.~\eqref{heat for qutrit}. Let $\omega_{max} = \max \{\omega_0,\omega_1,\omega_2\}$. Then, we can find the critical times $\tau_c^l,\tau_c^u$ assuming that $\zeta, \xi, \omega_{max}, g$ are all fixed parameters and $\xi \neq 0$, given by
\begin{align}
    \tau_c^u &= \frac{1}{g}\cot^{-1}\left(\frac{2\omega_{max}-\zeta}{\xi}\right),\label{eq: tau upper}\\
    \tau_c^l &= \frac{1}{g}\cot^{-1}\left(\frac{-4\omega_{max}-\zeta}{\xi}\right), \label{eq: tau lower}
\end{align}
where above we have considered $\tau_c^u$ the critical time associated with the noncontextuality inequality upper bound, and $\tau_c^l$ for the lower bound. These bounds can be used to directly infer the relationship between contextuality and anomalous heat flow.

{Moreover, the change of the mutual information in this situation has an analogous result as in the two-qubits case. Again, the heat term responsible for the anomalous heat flow and the violation of our noncontextuality inequality is the same that causes the change of the mutual information to be negative for $gt \ll 1$ (see Appendix~\ref{Appendix_C} for details). Hence, the effects of correlations on the Clausius inequality also have a direct connection with contextuality certification in this case, which can indicate the same pattern to the higher-dimensional case.
}

\section{Summary and outlook}\label{sec: conclusion}
In this paper, we demonstrated that heat flow inversion, caused by initial correlations between two-qubit unitary interactions that conserve total energy \textit{cannot} be described by generalized noncontextual models when the interaction happens in certain intervals of time $0<t<\tau_c$. These results introduce the notion of critical times $\tau_c$ governing dynamical nonclassicality. 

For the kind of qubit interactions we have investigated, a coherent effect in the heat flow (either reversing the flow or increasing it in the standard direction) must be present to allow for a violation of the noncontextual inequalities we have introduced. We also show analogous results for {two qudits, with any dimension $d$,} interacting via a partial SWAP, { indicating that similar conclusions in higher dimensional systems are true for a large class of applicable interactions, and can be further explored for more general interactions}.  

As an application of our findings, we use the experimental parameters of Ref.~\cite{Micadei_2019} to show that the connection we find between contextuality and reversal of the spontaneous direction of heat flow can be tested by existing quantum hardware and state-of-the-art manipulation of quantum resources. For an experimental test, a robust account on the critical time $\tau_c$ could be found by extending Theorems~\ref{theorem: Lostaglio alpha},~\ref{theorem: sequential transformations}, and~\ref{theorem_3} to consider experimental imperfections. 

It is worth pointing out that Ref.~\cite{lostaglio2020quasiprobability} showed connections between the reversal of the direction of heat flow and another form of nonclassicality, defined to be the negativity of the real part of Kirkwood-Dirac quasiprobability distributions~\cite{wagner2024circuits,lostaglio2022kirkwood,arvidssonshukur2024properties,halpern2018quasiprobability,gherardini2024quasiprobabilities}, known as the Terletsky-Margenau-Hill quasiprobability distribution~\cite{margenau1961correlation,terletsky1937limiting,halpern2018quasiprobability,lostaglio2020quasiprobability}. As they pointed out, negativity in such distributions is a proxy for generalized contextuality~\cite{spekkens2008negativity,schmid2024structuretheorem,schmid2022uniqueness}, but known to \emph{not} be sufficient for contextuality in general~\cite{schmid2024kirkwooddirac}. 

Furthermore, one can also use the combination of Theorem \ref{theorem: sequential transformations} and Theorem \ref{theorem_3} to explore the violation of noncontextuality inequalities for the variation of other observables, rather than energy, for energy-conserving interactions between qubits. For instance, one can study the { population inversion} by choosing the observable $A = \sigma_z$ (for density matrices in the eigenbasis of $\sigma_z$) and obtain noncontextuality inequalities in this case. This scenario is useful for certifying the presence of contextuality in open quantum systems with the collisional models approach~\cite{Rau_PhysRev.129.1880,Campbell_2021_collisional,Ciccarello_2022}, where the models often involve two-qubits interactions (see \cite{Buzek_PhysRevLett.88.097905,Ziman_PhysRevA.65.042105,Ciccarello_2022,Cusumano,Comar_PhysRevA.104.032217,Campbell_2019} for examples).

Similarly to Ref.~\cite{Jennings_2010} which showed that the occurrence of \textit{strong heat backflow} is a signature of entanglement, our results argue that the presence of heat flow inversion caused by correlations -- or, more generally, any coherent correlation effect in the heat flow -- in energy-conserving two-qubits interactions{, and partial SWAP interactions between two qudits,} imply a signature of generalized contextuality for a time interval between the beginning of the interaction and a critical time $\tau_c$, with the value of $\tau_c$ depending on the parameters of the scenario. Similar scenarios are ubiquitous in quantum thermodynamics~\cite{deffner2019quantum,binder2019thermodynamics,Goold_2016,Vinjanampathy_2016,alicki2018introduction,Kosloff_2013,Millen_2016}. We believe our results can be useful to certify contextuality in a variety of models as well as to indicate contextuality as a resource for genuinely nonclassical phenomena in quantum thermodynamics.

\section*{Acknowledgments}
We acknowledge Matteo Lostaglio for his valuable suggestions, including the estimation of the critical time $\tau_c$ in comparison with experimental parameters. We also acknowledge the suggestions from PRX Quantum reviewers, which, among other things, led us to the results concerning to the two interacting qudits via a Partial SWAP. Furthermore, we acknowledge Rodrigo Isaú Ramos Moreira for helping us with the SWAP operator for qudits. 

BA and NEC acknowledge financial support from Instituto Serrapilheira, Chamada n.~4 2020. {\color{black}NEC also acknowledges the support of the Financiadora de Estudos
e Projetos (grant 1699/24 IIF-FINEP)}. BA and DC acknowledge Pró-Reitoria de Pesquisa e Inovação (PRPI) from the Universidade de São Paulo (USP) for financial support through the Programa de Estímulo à Supervisão de Pós-Doutorandos por Jovens Pesquisadores. BA also acknowledges Fundação de Amparo à Pesquisa do Estado de São
Paulo, Auxílio à Pesquisa - Jovem Pesquisador, grant number 2020/06454-7. 
RW acknowledges support from FCT -- Fundação para a Ciência e a Tecnologia (Portugal) through PhD Grant SFRH/BD/151199/2021 and from the European Research Council (ERC) under the European Union’s Horizon 2020 research and innovation programme (grant agreement No. 856432, HyperQ). LFS acknowledges the financial support of Coordenação de Aperfeiçoamento de Pessoal de Nível
Superior (CAPES) – Brazil, Finance Code 001, and the financial support of São Paulo Research Foundation (FAPESP) under grant number 2024/08114-0. This work was also supported by the Digital Horizon Europe project FoQaCiA, GA no.101070558, funded by the European Union, NSERC (Canada), and UKRI (U.K.). 

\bibliography{references}

\appendix
\onecolumngrid

\section{Proof of Equation~\eqref{second_law_gen}}
\label{Appendix_A}

This proof follows the pedagogical compendium of Ref.~\cite{Landi_2021}. Assume the conditions presented in Sec.~\ref{subsec_gen_setup}. Let $\rho' \equiv U \rho U^\dagger$ and $\rho'_i := \Tr_{\setminus \{i\}}\{\rho'\}$. The quantity 
\begin{equation}
    \mathcal{S} =  S(\rho_A'|| \rho_A) + S(\rho_B'|| \rho_B), \label{S_definition}
\end{equation}
is always non-negative since it is a sum of relative entropies $S(\rho||\sigma) = \Tr \left\{ \rho \log(\rho) - \rho \log(\sigma) \right\} \geq 0$. Now, we can rewrite 
\begin{align}
    S(\rho_A' || \rho_A ) & =  \Tr\left\{ \rho_A' \log(\rho_A') - \rho_A' \log(\rho_A) \right\} \nonumber \\
    & = S(\rho_A) - S(\rho_A') + \Tr\left\{\rho_A \log(\rho_A) - \rho_A' \log(\rho_A) \right\} \nonumber \\
    & = -\Delta S_A + \Tr\left\{\rho_A \log(\rho_A) - \rho_A' \log(\rho_A) \right\}, \nonumber
\end{align}
where we used the definition of the von Neumann entropy $S(\rho) = -\Tr \{ \rho \log(\rho) \}$ and defined $\Delta S_A = S(\rho_A') - S(\rho_A)$.

Furthermore, using Eq.~\eqref{local_thermal_states}, we have $\log (\rho_A) = -\beta_A H_A - \log(Z_A)$. Using this in the equation above, we obtain
\begin{align}
    S(\rho_A' || \rho_A ) & =  -\Delta S_A + \beta_A \Tr \left\{ (\rho_A'-\rho_A)H_A \right\} \nonumber \\
    & = -\Delta S_A + \beta_A \langle Q_A \rangle, \label{relative_ent_A}
\end{align}
where in the last equation we used Eq.~\eqref{heat_A_def}. Analogously, we have 
\begin{equation}
    S(\rho_B' || \rho_B ) = -\Delta S_B + \beta_B \langle Q_B \rangle, \label{relative_ent_B}
\end{equation}
where $\Delta S_B = S(\rho_B') - S(\rho_B)$.

Given the mutual information $\mathcal{I}_\rho(A:B) = -S(\rho) + S(\rho_A) + S(\rho_B)$ between the systems $\mathcal{H}_A$ and $\mathcal{H}_B$, for a unitary evolution, the variation of the mutual information $\Delta\mathcal{I}(A:B) = \mathcal{I}_{\rho'}(A:B) - \mathcal{I}_{\rho}(A:B)$ will be
\begin{equation*}
    \Delta\mathcal{I}(A:B) = \Delta S_A +  \Delta S_B,
\end{equation*}
since $S(\rho) = S(\rho')$. Therefore, by summing Eqs.~\eqref{relative_ent_A} and~\eqref{relative_ent_B} and using the equation above, we obtain
\begin{align}
    \mathcal{S} = &  S(\rho_A'|| \rho_A) + S(\rho_B'|| \rho_B) \nonumber \\
    = & \beta_A \langle \mathcal{Q}_A \rangle + \beta_B \langle \mathcal{Q}_B \rangle - \Delta \mathcal{I}(A:B) \geq 0, \label{second_law_gen_deduction}
\end{align}
which implies 
\begin{equation}
    (\beta_A - \beta_B) \langle \mathcal{Q}_A \rangle \geq \Delta \mathcal{I}(A:B),
\end{equation}
where we used energy conservation $\langle \mathcal{Q}_A \rangle = - \langle \mathcal{Q}_B \rangle$. 

Notice that, in order to deduce Eq.~\eqref{second_law_gen}, we have only used that the unitary evolution does not change the von Neumann entropy, energy conservation, and $\mathcal{S} \geq 0$. This points out that, to enunciate a second law, we may only need to find a suitable always non-decreasing `irreversible' quantity, which, in this case, was $\mathcal{S}$. The fact that this quantity is always non-negative is a mathematical fact. However, it implies some bounds on the physical quantities given the specific evolution studied.

\section{Proof of Theorem~\ref{theorem: sequential transformations}}\label{Thorem_2_proof_sec}

In what follows we simply write $T_{t_1} \equiv T_1$ and $T_{t_2} \equiv T_2$ for simplicity.  We start by considering the operational equivalence
\begin{align*}
    \frac{1}{2} T_2 \circ T_{1} + \frac{1}{2} T_2 \circ T_{1}^* \simeq (1- p_{d_1}) T_2 + p_{d_1} T_2 \circ T_1'
\end{align*}
that we re-write as
\begin{equation}
  \frac{1}{2}T + \frac{1}{2}T_{A_1}  \simeq (1-p_{d_1}) T_2 + p_{d_1} T_{A_2}. \label{op_equiv_lemma2}
\end{equation}
where $T=T_2\circ T_1$, $T_{A_1}  = T_2 \circ T_1^*$ and $T_{A_2} = T_2 \circ T'$. Our proof will follow the exact same methodology as the one of Theorem 1 in Ref.~\cite{lostaglio2020certifying}. We devide the proof into two cases, where we first consider ontological models where $\Lambda$ is a finite set of ontic states, and later we consider a more involved proof for the case where $\Lambda$ is a continuous measurable set of ontic states.

\textit{Finite set $\Lambda$ of ontic states.}-- Consider the ontological models description for the average  difference $\langle \Delta \mathcal{A}\rangle$ in the observable $\mathcal{A} = \{(a_k,k|M)\}_k$ given by
\begin{equation}
  \langle \Delta \mathcal{A}\rangle = \sum_k a_k \bigr(p(k|M,T,P)-p(k|M,P)\bigr) = \sum_k a_k \left( \sum_{\lambda,\lambda'} \mu_P (\lambda) \Gamma_{T} (\lambda'|\lambda) \xi_{M}(k|\lambda') - \sum_\lambda \mu_P(\lambda) \xi_{M}(k | \lambda) \right), \label{op_variation_ont_model_finite}
\end{equation}
where $a_k>0$ are the possible values assigned to $\mathcal{A}$ from effects  $k|M$, $\mu_P(\lambda)$ is the probability distribution associated by the model to the preparation procedure $P$, $\Gamma_{T} (\lambda'|\lambda)$ is the transition matrix representing the probability of the ontic states to change from $\lambda$ to $\lambda'$, and $\xi_{M}(k | \lambda)$ is the response function associated to $k|M$ which gives the probability distribution of obtaining the outcome $k$, given that the ontic state was $\lambda$ and that a measurement $M$ was performed. Without loss of generality, we consider $\Lambda = \Lambda'$.

Since $\Gamma_{T} (\lambda'|\lambda) \leq 1, \forall \lambda,\lambda' \in \Lambda$ in the equation above, we have
\begin{equation}
    \langle \Delta \mathcal{A}\rangle \leq \sum_k a_k \left( \sum_{\lambda\neq \lambda'} \mu_P (\lambda) \Gamma_{T} (\lambda'|\lambda) \xi_{M}(k|\lambda') \right).  \label{lemma2_ineq_step1}
\end{equation}

We now consider that the ontological models must respect Eq.~\eqref{op_equiv_lemma2}, constraining the model to be  noncontextual, implying that for all $\lambda,\lambda' \in \Lambda$ 
\begin{equation}
\frac{1}{2} \Gamma_{T} (\lambda'|\lambda) + \frac{1}{2}\Gamma_{A_1} (\lambda'|\lambda) = (1-p_{d_1}) \Gamma_{T_2} (\lambda'|\lambda)+ p_{d_1} \Gamma_{A_2} (\lambda'|\lambda). \label{noncotext_condition_lemma2}
\end{equation}

Moreover, the assumption of transformation noncontextuality applied to Eq.~\eqref{SR_condition_T2} (stochastic reversibility for $T_{2}$), also requires that for all $\lambda,\lambda' \in \Lambda$
\begin{equation}
    \frac{1}{2} \Gamma_{T_2} (\lambda'|\lambda) + \frac{1}{2} \Gamma_{T_2^*} (\lambda'|\lambda) = (1- p_{d_2}) \delta_{\lambda',\lambda} + p_{d_2} \Gamma_{C_2}(\lambda'|\lambda), \label{noncotext_condition_lemma2_T2}
\end{equation}
where $\delta_{\lambda, \lambda'} = \Gamma_{T_{\mathrm{id}}}(\lambda|\lambda')$ is the transition matrix representation of the identity transformation by the ontological model. Using the two equations above, and the fact that the transition matrices are always non-negative, we obtain 
\begin{equation*}
 \Gamma_{T} (\lambda'|\lambda) \leq 4(1-p_{d_1})(1-p_{d_2}) \delta_{\lambda, \lambda'} + 4(1-p_{d_1})p_{d_2} \Gamma_{C_2} (\lambda'|\lambda) + 2 p_{d_1} \Gamma_{A_2} (\lambda'|\lambda). 
\end{equation*}
Using this inequality in the inequality~\eqref{lemma2_ineq_step1}, we obtain 
\begin{align}
    \langle \Delta \mathcal{A}\rangle  &\leq \sum_k a_k \left( \sum_{\lambda\neq \lambda'} \mu_P (\lambda) \Bigr(4(1-p_{d_1})p_{d_2} \Gamma_{C_2} (\lambda'|\lambda) + 2 p_{d_1} \Gamma_{A_2} (\lambda'|\lambda)\Bigr) \xi_{M}(k|\lambda') \right) \nonumber \\
    &\leq \sum_k a_k \left( \sum_{\lambda, \lambda'} \mu_P (\lambda) \Bigr(4(1-p_{d_1})p_{d_2} \Gamma_{C_2} (\lambda'|\lambda) + 2 p_{d_1} \Gamma_{A_2} (\lambda'|\lambda)\Bigr) \xi_{M}(k|\lambda') \right). \label{lemma2_ineq_step2}
\end{align}
Where in the last inequality we have summed back the terms $\lambda = \lambda'$, that are always non-negative.

Now we notice that the term $\sum_{\lambda, \lambda'} \mu_P (\lambda) \Gamma_{C_2} (\lambda'|\lambda) \xi_{M}(k|\lambda') = p(k|M, T_{C_2}, M)$ is just a probability distribution of having an outcome $k$ under a given evolution, therefore
\begin{equation}
    \sum_k a_k \sum_{\lambda, \lambda'} \mu_P (\lambda) \Gamma_{C_2} (\lambda'|\lambda) \xi_{M}(k|\lambda') = \sum_k a_k p(k|M, T_{C_2}, M) \leq a_{max} \sum_k p(k|M, T_{C_2}, M) = a_{max}, \label{a_max_argument_lemma2}
\end{equation}
since $a_{max}\geq a_k, \forall k$ and $\sum_kp(k|M,T,P) =1, \forall T \in \mathcal{T}, P \in \mathcal{P}$. The same  is also valid for the term $\sum_{\lambda, \lambda'} \mu_P (\lambda) \Gamma_{A_2} (\lambda'|\lambda) \xi_{M}(k|\lambda')= p(k|M, T_{A_2}, M)$, and therefore using these inequalities in the inequality~\eqref{lemma2_ineq_step2}, we obtain
\begin{align}
    \langle \Delta \mathcal{A}\rangle \leq  \Bigr(4(1-p_{d_1})p_{d_2} + 2 p_{d_1} \Bigr) a_{max} = 2(p_{d_1} + 2p_{d_2} - 2p_{d_1}p_{d_2}) a_{max}. \label{lemma2_first_ineq}
\end{align}

To complete the proof for the finite $\Lambda$ case, we analyze the converse. From equation Eq.~\eqref{op_variation_ont_model_finite} we have
\begin{equation*}
    -\langle \Delta \mathcal{A}\rangle = - \langle \mathcal{A}(t) - \mathcal{A}(0) \rangle = - \sum_k a_k \left( \sum_{\lambda,\lambda'} \mu_P (\lambda) \Gamma_{T} (\lambda'|\lambda) \xi_{M}(k|\lambda') - \sum_\lambda \mu_P(\lambda) \xi_{M}(k | \lambda) \right).
\end{equation*}
Proceeding similarly as before, isolating $\Gamma_T$ using Eqs.~\eqref{noncotext_condition_lemma2} and~\eqref{noncotext_condition_lemma2_T2} and substituting it in the definition of $\langle \Delta \mathcal{A}\rangle$ we obtain
\begin{align}
     &- \langle \Delta \mathcal{A}\rangle = - \sum_k a_k \Bigg( \sum_{\lambda,\lambda'} \mu_P (\lambda) \Big[ -2(1-p_{d_1})\Gamma_{T_2^*} (\lambda'|\lambda) - \Gamma_{A_1} (\lambda'|\lambda) + 4(1-p_{d_1})(1-p_{d_2}) \delta_{\lambda, \lambda'} \nonumber \\
   & +4(1-p_{d_1})p_{d_2} \Gamma_{C_2} (\lambda'|\lambda) + 2p_{d_1} \Gamma_{A_2} (\lambda'|\lambda) \Big] \xi_{M}(k|\lambda') - \sum_\lambda \mu_P(\lambda) \xi_{M}(k|\lambda) \Bigg) \nonumber \\
   &\leq \sum_k a_k \Bigg( \sum_{\lambda,\lambda'} \mu_P (\lambda) \left[ 2(1-p_{d_1})\Gamma_{T_2^*} (\lambda'|\lambda) + \Gamma_{A_1} (\lambda'|\lambda) - 4(1-p_{d_1})(1-p_{d_2}) \delta_{\lambda, \lambda'} \right] \xi_{M}(k|\lambda')   + \sum_\lambda \mu_P(\lambda) \xi_{M}(k | \lambda) \Bigg) \nonumber \\
   = & ~  \sum_k a_k \Bigg[  2(p_{d_1}+2p_{d_2}-2p_{d_1}p_{d_2})  \sum_{\lambda} \mu_P (\lambda) \xi_{M}(k|\lambda) + 2(1-p_{d_1})\left( \sum_{\lambda,\lambda'}  \mu_P (\lambda)\Gamma_{T_2^*} (\lambda'|\lambda)\xi_{M}(k|\lambda') - \sum_{\lambda} \mu_P (\lambda) \xi_{M}(k|\lambda) \right) \nonumber \\
   & + \left( \sum_{\lambda,\lambda'}  \mu_P (\lambda)\Gamma_{A_1} (\lambda'|\lambda)\xi_{M}(k|\lambda')  - \sum_{\lambda} \mu_P (\lambda) \xi_{M}(k|\lambda) \right) \Bigg], \label{big_computation_finite_case_Lemma2}
\end{align}
where in the inequality we used the fact that the terms $ -\sum_k a_k \sum_{\lambda,\lambda'} \mu_P (\lambda)4(1-p_{d_1})p_{d_2} \Gamma_{C_2} (\lambda'|\lambda)\xi_{M}(k|\lambda') $ and $-\sum_k a_k \sum_{\lambda,\lambda'} \mu_P (\lambda) 4p_{d_1} \Gamma_{A_2} (\lambda'|\lambda)\xi_{M}(k|\lambda')$ are never positive, and in the last equality we summed over the Kronecker delta and redistributed the terms. Now, notice that the term $\sum_k a_k \left( \sum_{\lambda,\lambda'}  \mu_P (\lambda)\Gamma_{A_1} (\lambda'|\lambda)\xi_{M}(k|\lambda') - \sum_{\lambda} \mu_P (\lambda) \xi_{M}(k|\lambda) \right)$ is analogous to the r.h.s. of Eq.~\eqref{op_variation_ont_model_finite} by exchanging $\Gamma_{T} (\lambda'|\lambda)$ for $\Gamma_{A_1} (\lambda'|\lambda)$, and from Eqs.~\eqref{noncotext_condition_lemma2} and~\eqref{noncotext_condition_lemma2_T2}, we have that
\begin{equation*}
    \Gamma_{A_1} (\lambda'|\lambda) \leq 4(1-p_{d_1})(1-p_{d_2}) \delta_{\lambda, \lambda'} + 4(1-p_{d_1})p_{d_2} \Gamma_{C_2} (\lambda'|\lambda) + 2 p_{d_1} \Gamma_{A_2} (\lambda'|\lambda),
\end{equation*}
from which, analogous to the deduction of the inequality~\eqref{lemma2_first_ineq}, we obtain
\begin{equation}
    \sum_k a_k \left( \sum_{\lambda,\lambda'}  \mu_P (\lambda) \Gamma_{A_1}(\lambda'|\lambda)\xi_{M}(k|\lambda') - \sum_{\lambda} \mu_P (\lambda) \xi_{M}(k|\lambda) \right) \leq 2(p_{d_1} + 2p_{d_2} - 2p_{d_1}p_{d_2}) a_{max}. \label{A_1_last_step_ineq}
\end{equation}
In a similar way, the term $\sum_k a_k \left( \sum_{\lambda,\lambda'}  \mu_P (\lambda)\Gamma_{T_2} (\lambda'|\lambda)\xi_{M}(k|\lambda') - \sum_{\lambda} \mu_P (\lambda) \xi_{M}(k|\lambda) \right)$ is also analogous to the r.h.s. of Eq.~\eqref{op_variation_ont_model_finite}, and from Eq.~\eqref{noncotext_condition_lemma2_T2}, we obtain
\begin{equation*}
    \Gamma_{T_2} (\lambda'|\lambda) \leq 2(1- p_{d_2}) \delta_{\lambda',\lambda} + 2p_{d_2} \Gamma_{C_2}(\lambda'|\lambda),
\end{equation*}
from which, using the same arguments that leads to~\eqref{lemma2_first_ineq}, we obtain
\begin{equation}
    \sum_k a_k \left( \sum_{\lambda,\lambda'}  \mu_P (\lambda) \Gamma_{T_2}(\lambda'|\lambda)\xi_{M}(k|\lambda') - \sum_{\lambda} \mu_P (\lambda) \xi_{M}(k|\lambda) \right) \leq 2 p_{d_2} a_{max}. \label{T_2_last_step_ineq}
\end{equation}
Using Ineqs.~\eqref{A_1_last_step_ineq} and~\eqref{T_2_last_step_ineq} in Ineq.~\eqref{big_computation_finite_case_Lemma2}, we have
\begin{align}
    - \langle \Delta \mathcal{A}\rangle \leq  2(p_{d_1}+2p_{d_2}-2p_{d_1}p_{d_2}) \sum_k a_k \sum_{\lambda} \mu_P (\lambda) \xi_{M}(k|\lambda) \nonumber \\
    + 4(1-p_{d_1})p_{d_2} a_{max} + 2(p_{d_1} + 2p_{d_2} - 2p_{d_1}p_{d_2}) a_{max}. \label{proof_lemma2_ineq_last_step_finite}
\end{align}
Finally, noticing that the term $\sum_{\lambda} \mu_P (\lambda) \xi_{M}(k|\lambda) = p(k|M,P)$, we have as before $\sum_k a_k \sum_{\lambda} \mu_P (\lambda) \xi_{M}(k|\lambda) = \sum_k a_k p(k|M,P) \leq a_{max}.$ Using this in Ineq.~\eqref{proof_lemma2_ineq_last_step_finite}, we obtain
\begin{align}
    - \langle \Delta \mathcal{A}\rangle & \leq  2(p_{d_1}+2p_{d_2}-2p_{d_1}p_{d_2}) a_{max} + 4(1-p_{d_1})p_{d_2} a_{max} + 2(p_{d_1} + 2p_{d_2} - 2p_{d_1}p_{d_2}) a_{max} \nonumber \\
    & = 4(p_{d_1}+3 p_{d_2} - 3 p_{d_1}p_{d_2}) a_{max},
\end{align}
which completes the proof for the finite $\Lambda$ case. 

\textit{Continuous set $\Lambda$ of ontic states.}-- For an \emph{infinite} (and possibly \emph{continuous)} set $\Lambda$, we have the following version of Eq.~\eqref{op_variation_ont_model_finite}
\begin{equation}
  \langle \Delta \mathcal{A}\rangle = \langle \mathcal{A}(t) - \mathcal{A}(0) \rangle = \sum_k a_k \left( \int_\Lambda d \lambda' \int_\Lambda d \lambda ~ \mu_P (\lambda) \Gamma_{T} (\lambda'|\lambda) \xi_{M}(k|\lambda') - \int_\Lambda d\lambda ~ \mu_P(\lambda) \xi_{M}(k | \lambda) \right), \label{op_variation_ont_model_infinite}
\end{equation}
where the transition matrices are probability densities in $\Lambda$, that is a measure space with measure $d\lambda$, and the integrals are {Lebesgue integrals.} Moreover, generalized noncontextuality requirements (such as $P\simeq P' \implies \mu_P = \mu_{P'}$) are now defined up to sets of measure zero. Then, we define the set 
\begin{equation}
    \Bar{\Lambda}_k(\lambda)=\left \{ \lambda'|\xi(k|\lambda')>\xi(k|\lambda)\right \}, \label{Lambda_bar_def}
\end{equation}
 for each outcome $k$ and ontological state $\lambda$. Additionally, we have $\Bar{\Lambda}_k^c(\lambda) = \Lambda/\Bar{\Lambda}_k(\lambda)$. Now, Eq.~\eqref{op_variation_ont_model_infinite} implies 
\begin{align}
     \langle \Delta \mathcal{A}\rangle & = \sum_k a_k \Bigg( \int_{\Bar{\Lambda}_k(\lambda)} d\lambda' \int_\Lambda d\lambda \mu_P (\lambda) \Gamma_T (\lambda'|\lambda) \xi_{M}(k|\lambda') \nonumber \\
    & + \int_{\Bar{\Lambda}_k^c(\lambda)} d\lambda' \int_\Lambda d\lambda \mu_P (\lambda) \Gamma_T (\lambda'|\lambda) \xi_{M}(k|\lambda') - \int_\Lambda d\lambda \mu_P(\lambda) \xi_{M}(k | \lambda) \Bigg) \nonumber \\
    & \leq \sum_k a_k \Bigg( \int_{\Bar{\Lambda}_k(\lambda)} d\lambda' \int_\Lambda d\lambda \mu_P (\lambda) \Gamma_T (\lambda'|\lambda) \xi_{M}(k|\lambda') \nonumber \\
    & + \int_{\Bar{\Lambda}_k^c(\lambda)} d\lambda' \int_\Lambda d\lambda \mu_P (\lambda) \Gamma_T (\lambda'|\lambda) \xi_{M}(k|\lambda) - \int_\Lambda d\lambda \mu_P(\lambda) \xi_{M}(k | \lambda) \Bigg), 
\end{align}
where in the inequality we used that $\xi_{M}(k|\lambda') \leq \xi_{M}(k|\lambda)$ for $\lambda' \in \Bar{\Lambda}_k^c(\lambda)$. 

Therefore,
\begin{align}
     \langle \Delta \mathcal{A}\rangle & \leq \sum_k a_k \Bigg( \int_{\Bar{\Lambda}_k(\lambda)} d\lambda' \int_\Lambda d\lambda ~\mu_P (\lambda) \Gamma_T (\lambda'|\lambda) \xi_{M}(k|\lambda') \nonumber \\
    & + \int_{\Bar{\Lambda}_k^c(\lambda)} d\lambda' \int_\Lambda d\lambda ~\mu_P (\lambda) \Gamma_T (\lambda'|\lambda) \xi_{M}(k|\lambda) - \int_\Lambda d\lambda ~\mu_P(\lambda) \xi_{M}(k | \lambda) \Bigg) \nonumber \\
    & = \sum_k a_k \Bigg( \int_{\Bar{\Lambda}_k(\lambda)} d\lambda' \int_\Lambda d\lambda ~ \mu_P (\lambda) \Gamma_T (\lambda'|\lambda)(\xi_{M}(k|\lambda') -\xi_{M}(k|\lambda))  \nonumber \\
    & + \int_\Lambda d\lambda' \int_\Lambda d\lambda ~ \mu_P (\lambda) \Gamma_T (\lambda'|\lambda) \xi_{M}(k|\lambda) - \int_\Lambda d\lambda ~ \mu_P(\lambda) \xi_{M}(k | \lambda) \Bigg) \nonumber \\
    & = \sum_k a_k \left( \int_{\Bar{\Lambda}_k(\lambda)} d\lambda' \int_\Lambda d\lambda ~ \mu_P (\lambda) \Gamma_T \left(\lambda'|\lambda)(\xi_{M}(k|\lambda') -\xi_{M}(k|\lambda)\right)\right), \label{lemma2_continuous_step1}
\end{align}
 where the term $\sum_k a_ k\int_{\Bar{\Lambda}_k(\lambda)} d\lambda' \int_\Lambda d\lambda \mu_P (\lambda) \Gamma_T (\lambda'|\lambda) \xi_{M}(k|\lambda)$ was summed and subtracted in the first equality above, and in the last equality, we used that $\int_\Lambda d\lambda' \Gamma_T(\lambda'|\lambda)=1$.

 At this point, we make the \textit{transformation noncontextuality} assumption in the operational equivalences of Eqs.~\eqref{SR_condition_T2} and~\eqref{op_equiv_lemma2}. This results in equations identical to Eqs.~\eqref{noncotext_condition_lemma2} and~\eqref{noncotext_condition_lemma2_T2}, where the transition matrices are now probability densities, and where $\delta_{\lambda,\lambda'}$ is substituted by $\delta(\lambda-\lambda')$, which is a Dirac delta function. Using the combination of the continuous form of Eqs.~\eqref{noncotext_condition_lemma2} and~\eqref{noncotext_condition_lemma2_T2}, and the fact that $\Gamma_{A_1} (\lambda'|\lambda) \geq 0$, we have
\begin{align}
    \langle \Delta \mathcal{A}\rangle & \leq \sum_k a_k \Bigg( \int_{\Bar{\Lambda}_k(\lambda)} d\lambda' \int_\Lambda d\lambda \mu_P (\lambda) \Big(4(1-p_{d_1})(1-p_{d_1}) \delta(\lambda - \lambda') \nonumber \\
    & + 4(1-p_{d_1})p_{d_2} \Gamma_{C_2} (\lambda'|\lambda) + 2 p_{d_1} \Gamma_{A_2} (\lambda'|\lambda) \Big)(\xi_{M}(k|\lambda') -\xi_{M}(k|\lambda))\Bigg),
\end{align}

The term with the Dirac delta in the inequality above is null after performing the double integral, since the space $\Bar{\Lambda}_k(\lambda)$ has only terms with $\lambda' \neq \lambda$. Additionally, since $\xi_{M} (k|\lambda) \geq 0$, we have
\begin{align}
   \langle \Delta \mathcal{A}\rangle & \leq \sum_k a_k \Bigg( \int_{\Bar{\Lambda}_k(\lambda)} d\lambda' \int_\Lambda d\lambda ~ \mu_P (\lambda)\Big( 4(1-p_{d_1})p_{d_2} \Gamma_{C_2} (\lambda'|\lambda) + 2 p_{d_1} \Gamma_{A_2} (\lambda'|\lambda) \Big)\xi_{M}(k|\lambda') \Bigg) \nonumber \\
    & \leq \sum_k a_k \Bigg( \int_{\Lambda} d\lambda' \int_\Lambda d\lambda~ \mu_P (\lambda)\Big( 4(1-p_{d_1})p_{d_2} \Gamma_{C_2} (\lambda'|\lambda) + 2 p_{d_1} \Gamma_{A_2} (\lambda'|\lambda) \Big)\xi_{M}(k|\lambda') \Bigg),   \label{continuous_lambda_delta_A_step1}
\end{align}
where in the last inequality we summed the term $ \int_{\Bar{\Lambda}_k^c(\lambda)} d\lambda' \int_\Lambda d\lambda~\mu_P (\lambda)\Big( (1-p_d)\epsilon \Gamma'(\lambda'|\lambda) + p_d \Gamma_C(\lambda'|\lambda)\Big)\xi_{M}(k|\lambda'),$ which is non-negative. 

Noticing that $\int_\Lambda d\lambda' \int_\Lambda d\lambda \mu_P\Gamma_{C_2}(\lambda'|\lambda)\xi_{M}(k|\lambda') = p(k|M,T_{C_2},P) $ and $\int_\Lambda d\lambda' \int_\Lambda d\lambda \mu_P\Gamma_{A_2}(\lambda'|\lambda)\xi_{M}(k|\lambda') = p(k|M,T_{A_2},P)$ are probabilities obtaining outcome $k$, given a set of operational equivalences, we have
\begin{align}
 \sum_k a_k \int_\Lambda d\lambda' \int_\Lambda d\lambda \mu_P\Gamma_{C_2(A_2)}(\lambda'|\lambda)\xi_{M}(k|\lambda')  = \sum_k a_k p(k|M,T_{C_2(A_2)},P) \leq a_{max}. \label{probability_ineq_infinte}
\end{align}
Using the inequality~\eqref{continuous_lambda_delta_A_step1}, we obtain
\begin{equation}
    \langle \Delta \mathcal{A}\rangle  \leq (4(1-p_{d_1})p_{d_2} + 2 p_{d_1}) a_{max} = 2(p_{d_1}+ 2p_{d_2}- 2p_{d_1}p_{d_2}) a_{max}. \label{continuous_inequality_result1}
\end{equation}

To prove the converse, we start again from Eq.~\eqref{op_variation_ont_model_infinite}, which implies 
\begin{equation*}
    -\langle \Delta \mathcal{A}\rangle = - \langle \mathcal{A}(t) - \mathcal{A}(0) \rangle = \sum_k a_k \left(\int_\Lambda d\lambda \mu_P(\lambda) \xi_{M}(k | \lambda) - \int_\Lambda d\lambda \int_\Lambda d\lambda' \mu_P (\lambda) \Gamma_T (\lambda'|\lambda) \xi_{M}(k|\lambda') \right). 
\end{equation*}
Now we make the transformation noncontextuality assumption, using the infinite version of Eqs.~\eqref{noncotext_condition_lemma2} and~\eqref{noncotext_condition_lemma2_T2}. Hence, we obtain
\begin{align}
    - \langle \Delta \mathcal{A}\rangle \rangle & = \sum_k a_k \Bigg(\int_\Lambda d\lambda \mu_P(\lambda) \xi_{M}(k | \lambda) - \int_\Lambda d\lambda \int_\Lambda d\lambda' \mu_P (\lambda) \Big[ -2(1-p_{d_1})\Gamma_{T_2^*} (\lambda'|\lambda) - \Gamma_{A_1} (\lambda'|\lambda)  \nonumber \\
    & + 4(1-p_{d_1})(1-p_{d_2}) \delta(\lambda-\lambda') +4(1-p_{d_1})p_{d_2} \Gamma_{C_2} (\lambda'|\lambda)+ 2p_{d_1} \Gamma_{A_2} (\lambda'|\lambda) \Big] \xi_{M}(k|\lambda') \Bigg).
\end{align}
From the fact that $ -4(1-p_{d_1})p_{d_2}\int_\Lambda d\lambda \int_\Lambda d\lambda' \mu_P (\lambda) \Gamma_{C_2} \xi_{M}(k | \lambda')$ and 
$ -2p_{d_1}\int_\Lambda d\lambda \int_\Lambda d\lambda' \mu_P (\lambda) \Gamma_{A_2} (\lambda'|\lambda) \xi_{M}(k|\lambda')$ are never positive, we conclude that
\begin{align}
     - \langle \Delta \mathcal{A}\rangle & \leq \sum_k a_k \Bigg(\int_\Lambda d\lambda \mu_P(\lambda) \xi_{M}(k | \lambda) - \int_\Lambda d\lambda \int_\Lambda d\lambda' \mu_P (\lambda) \Big( -2(1-p_{d_1})\Gamma_{T_2^*} (\lambda'|\lambda) - \Gamma_{A_1} (\lambda'|\lambda)  \nonumber \\
    & + 4(1-p_{d_1})(1-p_{d_2}) \delta(\lambda-\lambda') \Big) \Bigg) \nonumber \\
    & = ~ \sum_k a_k \Bigg[  2(p_{d_1}+2p_{d_2}-2p_{d_1}p_{d_2})  \int_\Lambda d\lambda \mu_P (\lambda) \xi_{M}(k|\lambda) \nonumber \\
   & + 2(1-p_{d_1})\left( \int_\Lambda d\lambda \int_\Lambda d\lambda'  \mu_P (\lambda)\Gamma_{T_2^*} (\lambda'|\lambda)\xi_{M}(k|\lambda') - \int_\Lambda d\lambda \mu_P (\lambda) \xi_{M}(k|\lambda) \right) \nonumber \\
   & + \left( \int_\Lambda d\lambda \int_\Lambda d\lambda'  \mu_P (\lambda)\Gamma_{A_1} (\lambda'|\lambda)\xi_{M}(k|\lambda')  - \int_\Lambda d\lambda \mu_P (\lambda) \xi_{M}(k|\lambda) \right) \Bigg],\label{inequality_continuous_big_factorization}
\end{align}
where in the last step we only evaluated the integral over the $\delta(\lambda-\lambda')$ and reorganized the terms. 

Moreover, the term $\sum_k a_k \Big(\int_\Lambda d\lambda \int_\Lambda d\lambda' \mu_P (\lambda) \Gamma_{T_{A_1}} (\lambda'|\lambda) \xi_{M}(k | \lambda') -\int_\Lambda d\lambda \mu_P(\lambda) \xi_{M}(k | \lambda) \Big)$ is analogous to the r.h.s. of Eq.~\eqref{op_variation_ont_model_infinite} with the exchange of $\Gamma_T(\lambda'|\lambda)$ by $\Gamma_{T_{A_1}}(\lambda'|\lambda)$. Since these terms are symmetric in the noncontextuality condition (the continuous version of Eq.~\eqref{noncotext_condition_lemma2}), we can repeat the same arguments from Eq.~\eqref{op_variation_ont_model_infinite} to the inequality~\eqref{continuous_inequality_result1} and conclude that 
\begin{equation}
    \sum_k a_k \Bigg(\int_\Lambda d\lambda \int_\Lambda d\lambda' \mu_P (\lambda) \Gamma_{A_1} (\lambda'|\lambda) \xi_{M}(k | \lambda') -\int_\Lambda d\lambda \mu_P(\lambda) \xi_{M}(k | \lambda) \Bigg) \leq 2(p_{d_1} + 2p_{d_2} - 2p_{d_1}p_{d_2}) a_{max}. \label{lemma2_almost_last_setp}
\end{equation}
Similarly, the term $\sum_k a_k \Big(\int_\Lambda d\lambda \int_\Lambda d\lambda' \mu_P (\lambda) \Gamma_{T_2^*} (\lambda'|\lambda) \xi_{M}(k | \lambda') -\int_\Lambda d\lambda \mu_P(\lambda) \xi_{M}(k | \lambda) \Big)$ is also analogous to the r.h.s. of Eq.~\eqref{op_variation_ont_model_infinite} with the exchange of $\Gamma_T(\lambda'|\lambda)$ by $\Gamma_{T_2^*}(\lambda'|\lambda)$. Therefore, we repeat the same arguments from Eq.~\eqref{op_variation_ont_model_infinite} to the inequality of~\eqref{lemma2_continuous_step1} to obtain
\begin{align}
   & \sum_k a_k \Bigg(\int_\Lambda d\lambda \int_\Lambda d\lambda' \mu_P (\lambda) \Gamma_{T_2^*} (\lambda'|\lambda) \xi_{M}(k | \lambda') -\int_\Lambda d\lambda \mu_P(\lambda) \xi_{M}(k | \lambda) \Bigg) \nonumber \\ &  \leq \sum_k a_k \left( \int_{\Bar{\Lambda}_k(\lambda)} d\lambda' \int_\Lambda d\lambda ~ \mu_P (\lambda) \Gamma_{T_2^*} \left(\lambda'|\lambda)(\xi_{M}(k|\lambda') -\xi_{M}(k|\lambda)\right)\right).
\end{align}
Now we use the continuous version of the noncontextuality condition of Eq.~\eqref{noncotext_condition_lemma2_T2} and use the fact that $\Gamma_{T_2}(\lambda'|\lambda) \geq 0$ to obtain
\begin{align}
   & \sum_k a_k \Bigg(\int_\Lambda d\lambda \int_\Lambda d\lambda' \mu_P (\lambda) \Gamma_{T_2^*} (\lambda'|\lambda) \xi_{M}(k | \lambda') -\int_\Lambda d\lambda \mu_P(\lambda) \xi_{M}(k | \lambda) \Bigg) \nonumber \\ &  \leq \sum_k a_k \left( \int_{\Bar{\Lambda}_k(\lambda)} d\lambda' \int_\Lambda d\lambda ~ \mu_P (\lambda) \left(2(1- p_{d_2}) \delta(\lambda-\lambda') + 2p_{d_2} \Gamma_{C_2}(\lambda'|\lambda) \right) (\xi_{M}(k|\lambda') -\xi_{M}(k|\lambda))\right).
\end{align}
Again, the double integral over the $\delta(\lambda-\lambda')$ will be null due to the integration over the space $\Bar{\Lambda}_k(\lambda)$, and we use that $\xi_{M}(k|\lambda) \geq 0$ to obtain
\begin{align}
   & \sum_k a_k \Bigg(\int_\Lambda d\lambda \int_\Lambda d\lambda' \mu_P (\lambda) \Gamma_{T_2^*} (\lambda'|\lambda) \xi_{M}(k | \lambda') -\int_\Lambda d\lambda \mu_P(\lambda) \xi_{M}(k | \lambda) \Bigg) \nonumber \\ &  \leq 2p_{d_2}\sum_k a_k \left( \int_{\Bar{\Lambda}_k(\lambda)} d\lambda' \int_\Lambda d\lambda ~ \mu_P (\lambda) \Gamma_{C_2}(\lambda'|\lambda) \xi_{M}(k|\lambda')\right) \nonumber \\
   &  \leq 2p_{d_2}\sum_k a_k \left( \int_\Lambda d\lambda' \int_\Lambda d\lambda ~ \mu_P (\lambda) \Gamma_{C_2}(\lambda'|\lambda) \xi_{M}(k|\lambda')\right),
\end{align}
where in the last inequality we used that fact that $\int_{\Bar{\Lambda}_k(\lambda)} d\lambda' \int_\Lambda d\lambda ~ \mu_P (\lambda) \Gamma_{C_2}(\lambda'|\lambda) \xi_{M}(k|\lambda') \leq \int_\Lambda d\lambda' \int_\Lambda d\lambda ~ \mu_P (\lambda) \Gamma_{C_2}(\lambda'|\lambda) \xi_{M}(k|\lambda')$. Now, again we notice that $\int_\Lambda d\lambda' \int_\Lambda d\lambda ~ \mu_P (\lambda) \Gamma_{C_2}(\lambda'|\lambda) \xi_{M}(k|\lambda') = p(k| M,T_{C_2},P)$ is a probability of obtaining the outcome $k$. Hence, this quantity must satisfy an inequality similar to~\eqref{probability_ineq_infinte}, from which we obtain
\begin{align}
    \sum_k a_k \Bigg(\int_\Lambda d\lambda \int_\Lambda d\lambda' \mu_P (\lambda) \Gamma_{T_2^*} (\lambda'|\lambda) \xi_{M}(k | \lambda') -\int_\Lambda d\lambda \mu_P(\lambda) \xi_{M}(k | \lambda) \Bigg)  \leq 2 p_{d_2} a_{max}. \label{lemma2_last_step}
\end{align}

Finally, since $\int_\Lambda d\lambda \mu_P(\lambda) \xi_{M}(k | \lambda) = p(k|M,P)$ is the probability distribution for obtaining the outcome $k$ in the prepare and measure case $(P,M)$, it must also satisfy $$\sum_k a_k \int_\Lambda d\lambda \mu_P(\lambda) \xi_{M}(k | \lambda) = \sum_k a_k p(k|M,P) \leq a_{max}.$$ Using the inequality above, and the inequalities~\eqref{lemma2_almost_last_setp} and~\eqref{lemma2_last_step} in~\eqref{inequality_continuous_big_factorization}, we obtain
\begin{equation}
     - \langle \Delta \mathcal{A}\rangle \leq ~ 2(p_{d_1}+2p_{d_2}-2p_{d_1}p_{d_2}) a_{max} +  4(1-p_{d_1})p_{d_2} a_{max} + 3(p_{d_1}+2p_{d_2}-2p_{d_1}p_{d_2})a_{max} \leq 4( p_{d_1} + 3 p_{d_2} - 3 p_{d_1} p_{d_2}), \nonumber
\end{equation}
which completes the proof for the inequality for infinite $\Lambda$.

\section{Proof of Theorem \ref{theorem_3}}\label{Appendix_Theorem_Quantum}


We start showing that a general interaction Hamiltonian between two \textit{non-resonant} qubits, which conserves the total energy of the qubits, must be of the following form
\begin{equation}
    H_{I_\text{nr}}= g \begin{pmatrix}
        0 & 0 & 0 & 0 \\
        0 & 1 & 0 & 0 \\
        0 & 0 & 1 & 0 \\
        0 & 0 & 0 & 0
    \end{pmatrix}, \label{general_hi_non_resonant2}
\end{equation}
for a real number, $g$. In the case of an  interaction Hamiltonian between two \textit{resonant} qubits, it must have the following form
\begin{equation}
    H_{I_\text{r}}= g \begin{pmatrix}
        0 & 0 & 0 & 0 \\
        0 & a & e^{i \theta} & 0 \\
        0 & e^{-i \theta} & a & 0 \\
        0 & 0 & 0 & 0
    \end{pmatrix}, \label{general_hi_resonant2}
\end{equation}
where $a$, $g$ and $\theta$ are real parameters.

Note that the most general interaction Hamiltonian between two qubits is the most general linear combination of tensor products of Pauli matrices acting on the Hilbert spaces $\mathcal{H}_A = \mathbb{C}^2$ and $\mathcal{H}_B = \mathbb{C}^2$
\begin{align}
    H_{I} = & ~ \alpha_0 \mathbb{1}^A \otimes \mathbb{1}^B + \alpha_{11} \sigma_x^A \otimes \sigma_x^B + \alpha_{12} \sigma_x^A \otimes \sigma_y^B + \alpha_{13} \sigma_x^A \otimes \sigma_z^B \nonumber \\
    & + \alpha_{21} \sigma_y^A \otimes \sigma_x^B + \alpha_{22} \sigma_y^A \otimes \sigma_y^B + \alpha_{23} \sigma_y^A \otimes \sigma_z^B \nonumber \\
    & + \alpha_{31} \sigma_z^A \otimes \sigma_x^B + \alpha_{32} \sigma_z^A \otimes \sigma_y^B + \alpha_{33} \sigma_z^A \otimes \sigma_z^B, \label{most_general_Hi}
\end{align}
where all coefficients must be real numbers.

Furthermore, the assumption that the interaction Hamiltonian conserves the sum of the two local energies is equivalent to 
\begin{equation}
    [H_{I}, H_A \otimes \mathbb{1}_B + \mathbb{1}_A \otimes H_B] = 0, \label{energy_conservation2}
\end{equation}
where $H_A$ is the local Hamiltonian of the qubit $A$ and $H_B$ is the local Hamiltonian of the qubit $B$.

We suppose the qubits have local Hamiltonians $H_A,H_B$ described by what we refer to as \emph{Zeeman Hamiltonians}. If the two qubits are \textit{non-resonant}, without loss of generality, we say that $H_A= \frac{\omega_A}{2}(c_1 \mathbb{1}^A+c_2 \sigma_z^A)$ and $H_B =\frac{\omega_B}{2}(c_3 \mathbb{1}^B+c_4 \sigma_z^B)$, for $\omega_A \neq \omega_B$ and $c_i \in \mathbb{R}$. With these assumptions, Eq.~\eqref{energy_conservation2} results in a set of $16$ equations, whose solutions imply in the following interaction Hamiltonian 
\begin{equation*}
    H_{I_\text{nr}} = \begin{pmatrix}
        \alpha_0 + \alpha_{33} & 0 & 0 & 0 \\
        0 & \alpha_0 - \alpha_{33} & 0 & 0 \\
        0 & 0 & \alpha_0 - \alpha_{33} & 0 \\
        0 & 0 & 0 & \alpha_0 + \alpha_{33}
    \end{pmatrix}.
\end{equation*}
Since the sum of identity terms on the Hamiltonian does not change the dynamics, we are free to select the value of $\alpha_0$. We choose $\alpha_0 = -\alpha_{33}$ {and define $g  = - 2 \alpha_{33}$}, which recovers Eq.~\eqref{general_hi_non_resonant2}.

Similarly, if the qubits are \textit{resonant}, we suppose $H_A = H_B = \frac{\omega}{2}(c_1 \mathbb{1}^{A(B)}+c_2 \sigma_z^{A(B)})$. For this case, Eq.~\eqref{energy_conservation2} results in a set of $16$ equations whose solutions imply the following interaction Hamiltonian
\begin{equation*}
    H_{I_\text{r}} = \begin{pmatrix}
        \alpha_0 + \alpha_{33} & 0 & 0 & 0 \\
        0 & \alpha_0 - \alpha_{33} & 2\alpha_{22} - 2 i\alpha_{21} & 0 \\
        0 & 2\alpha_{22} + 2 i\alpha_{21} & \alpha_0 - \alpha_{33} & 0 \\
        0 & 0 & 0 & \alpha_0 + \alpha_{33}
    \end{pmatrix}.
\end{equation*}
Again, we select $\alpha_0 = -\alpha_{33}$ and define $a, g$, and $\theta$, such that $g a = - 2 \alpha_{33}$, and $g e^{-i \theta} = 2\alpha_{22} + 2 i\alpha_{21}$, which imply Eq.~\eqref{general_hi_resonant2}. 


To prove the main result of the theorem, we start with the {\textit{non-resonant}} qubits case. We desire to prove that the unitary $U_{I_\text{nr}} = e^{-i H_{I_\text{nr}}}$ satisfies an equation in the form of Eq.~\eqref{eq:theorem_nonresonant}. To do this, we write the most general form of a two qubits density matrix
\begin{equation}
    \rho_{\text{gen}} = \begin{pmatrix}
        p_{00} & \nu_1^* & \nu_2^* & \gamma^* \\
        \nu_1  &  p_{01} & \eta e^{i \xi} & \nu_3^* \\
        \nu_2 & \eta e^{-i \xi} &  p_{10} & \nu_4^* \\
        \gamma & \nu_3  & \nu_4  &  p_{11}
    \end{pmatrix}, \label{2_qubit_general_rho}
\end{equation} with $p_{00}+p_{01}+p_{10}+p_{11}=1$, where diagonal terms are real non-negative numbers while  $\nu_1,~\nu_2,~\nu_3,~\nu_4,~\gamma$ are complex numbers, and $\eta,~\xi$ are real; all parameters are constrained such that $\rho_{\text{gen}} \geq 0$. And now, given the unitary evolution, $U_\text{nr} = e^{-i t H_\text{nr}}$, generated by the interaction Hamiltonian given by Eq.~\eqref{general_hi_non_resonant2}, we have 
\begin{align}
    & \frac{1}{2} \mathcal{U}_\text{nr}[\rho_\text{gen}]+\frac{1}{2} \mathcal{U}^\dagger_\text{nr}[\rho_\text{gen}]  = \frac{1}{2}U_\text{nr} \rho_{\text{gen}} U_\text{nr}^\dagger + \frac{1}{2}U_\text{nr}^\dagger \rho_{\text{gen}} U_\text{nr} = \begin{pmatrix}
        p_{00} & \nu_1^*\cos(gt) & \nu_2^*\cos(gt) & \gamma^* \\
        \nu_1\cos(gt)  &  p_{01} & \eta e^{i \xi} & \nu_3^*\cos(gt) \\
        \nu_2\cos(gt) & \eta e^{-i \xi} &  p_{10} & \nu_4^*\cos(gt) \\
        \gamma & \nu_3\cos(gt)  & \nu_4\cos(gt)  &  p_{11}
    \end{pmatrix}. 
\end{align}
This can be factorized as 
\begin{align}
      \frac{1}{2} \mathcal{U}_\text{nr}[\rho_\text{gen}]+\frac{1}{2} \mathcal{U}^\dagger_\text{nr}[\rho_\text{gen}] = (1-p_{d_\text{nr}}) \rho_{\text{gen}} + p_{d_\text{nr}} \mathcal{C}_\text{nr}[\rho_\text{gen}],  \label{SR_condition_proof_nr}
\end{align}
where $p_{d_\text{nr}} = \sin^2(gt/2)$, and 
\begin{equation}
    \mathcal{C}_\text{nr}[\rho_\text{gen}] := \begin{pmatrix}
        p_{00} & -\nu_1^* & -\nu_2^* & \gamma^* \\
        -\nu_1  &  p_{01} & \eta e^{i \xi} & -\nu_3^* \\
        -\nu_2 & \eta e^{-i \xi} &  p_{10} & -\nu_4^* \\
        \gamma & -\nu_3  & -\nu_4  &  p_{11} \label{C_nr_map_def}
    \end{pmatrix}.
\end{equation}
To prove that Eq.~\eqref{SR_condition_proof_nr} is really as Eq.~\eqref{eq:theorem_nonresonant}, we must prove that the channel $\mathcal{C}_\text{nr}(\bullet)$ defined above is a completely positive and trace-preserving map. To verify  \textit{complete positivity}, we can compute its Choi-Jamiolkowski matrix~\cite{nielsen2010quantum,wilde2013quantum} and prove that this matrix is positive semi-definite. The Choi-Jamiolkowski matrix for a map $\mathcal{C}(\bullet)$ is defined as the following
\begin{equation}
    \Lambda_{\mathcal{C}} := \mathcal{I}^\mathcal{R} \otimes \mathcal{C}^\mathcal{S} \left( \ket{\Tilde{\Psi}}\bra{\Tilde{\Psi}}\right) = \sum_{i=0,j=0}^{3,3} \ket{i}\bra{j}^\mathcal{R} \otimes \mathcal{C} \left( \ket{i}\bra{j} \right)^\mathcal{S} , \label{CJ_matrix_def}
\end{equation}
where the index $\mathcal{R}$ stands for the channel acting on an auxiliary Hilbert space $\mathcal{H}_\mathcal{R}$ with dimension $4$, the index $\mathcal{S}$ stands for the channel acting on the global two-qubits system Hilbert space $\mathcal{H}_\mathcal{S}$, $ \mathcal{I}$ is the identity channel, and $\ket{\Tilde{\Psi}} = \sum_{j=0}^3 \ket{j}^\mathcal{R} \otimes \ket{j}^\mathcal{S}$ is the unnormalized maximally entangled state in $\mathcal{H}_\mathcal{R} \otimes \mathcal{H}_\mathcal{S}$.

Computing explicitly the Choi-Jamiolkowski matrix $\Lambda_{\mathcal{C}_\text{nr}}$ for the map $\mathcal{C}_\text{nr}(\bullet)$, we obtain 
\begin{equation*}
    \Lambda_{\mathcal{C}_\text{nr}} =\left(
\begin{array}{cccccccccccccccc}
 1 & 0 & 0 & 0 & 0 & 0 & -1 & 0 & 0 & -1 & 0 & 0 & 0 & 0 & 0 & 1 \\
 0 & 0 & 0 & 0 & 0 & 0 & 0 & 0 & 0 & 0 & 0 & 0 & 0 & 0 & 0 & 0 \\
 0 & 0 & 0 & 0 & 0 & 0 & 0 & 0 & 0 & 0 & 0 & 0 & 0 & 0 & 0 & 0 \\
 0 & 0 & 0 & 0 & 0 & 0 & 0 & 0 & 0 & 0 & 0 & 0 & 0 & 0 & 0 & 0 \\
 0 & 0 & 0 & 0 & 0 & 0 & 0 & 0 & 0 & 0 & 0 & 0 & 0 & 0 & 0 & 0 \\
 0 & 0 & 0 & 0 & 0 & 0 & 0 & 0 & 0 & 0 & 0 & 0 & 0 & 0 & 0 & 0 \\
 -1 & 0 & 0 & 0 & 0 & 0 & 1 & 0 & 0 & 1 & 0 & 0 & 0 & 0 & 0 & -1 \\
 0 & 0 & 0 & 0 & 0 & 0 & 0 & 0 & 0 & 0 & 0 & 0 & 0 & 0 & 0 & 0 \\
 0 & 0 & 0 & 0 & 0 & 0 & 0 & 0 & 0 & 0 & 0 & 0 & 0 & 0 & 0 & 0 \\
 -1 & 0 & 0 & 0 & 0 & 0 & 1 & 0 & 0 & 1 & 0 & 0 & 0 & 0 & 0 & -1 \\
 0 & 0 & 0 & 0 & 0 & 0 & 0 & 0 & 0 & 0 & 0 & 0 & 0 & 0 & 0 & 0 \\
 0 & 0 & 0 & 0 & 0 & 0 & 0 & 0 & 0 & 0 & 0 & 0 & 0 & 0 & 0 & 0 \\
 0 & 0 & 0 & 0 & 0 & 0 & 0 & 0 & 0 & 0 & 0 & 0 & 0 & 0 & 0 & 0 \\
 0 & 0 & 0 & 0 & 0 & 0 & 0 & 0 & 0 & 0 & 0 & 0 & 0 & 0 & 0 & 0 \\
 0 & 0 & 0 & 0 & 0 & 0 & 0 & 0 & 0 & 0 & 0 & 0 & 0 & 0 & 0 & 0 \\
 1 & 0 & 0 & 0 & 0 & 0 & -1 & 0 & 0 & -1 & 0 & 0 & 0 & 0 & 0 & 1 \\
\end{array}
\right).
\end{equation*}
This matrix has eigenvalues $\{ 0, 0, 0, 0, 0, 0, 0, 0, 0, 0, 0, 0, 0, 0, 0, 4 \}$, therefore, it is positive semi-definite. Hence, the channel $\mathcal{C}_\text{nr}(\bullet)$ is completely positive. Since the trace of the map described in Eq.~\eqref{C_nr_map_def} is the same as the initial density matrix in Eq.~\eqref{2_qubit_general_rho}, the channel is also clearly \textit{trace-preserving}. Therefore, Eq.~\eqref{eq:theorem_nonresonant} holds, as we wanted to show. 


To prove the resonant  case, we start by noticing that the interaction Hamiltonian of Eq.~\eqref{general_hi_resonant2} can be written in the following form
\begin{equation*}
    H_{I_\textbf{r}} = H_\theta + H_a,
\end{equation*}
where
\begin{equation*}
    H_\theta = g \begin{pmatrix}
        0 & 0 & 0 & 0 \\
        0 & 1 & e^{i \theta} & 0 \\
        0 & e^{-i \theta} & 1 & 0 \\
        0 & 0 & 0 & 0
    \end{pmatrix},
\end{equation*}
and
\begin{equation*}
    H_a= g \begin{pmatrix}
        0 & 0 & 0 & 0 \\
        0 & a-1 & 0 & 0 \\
        0 & 0 & a-1 & 0 \\
        0 & 0 & 0 & 0
    \end{pmatrix}.
\end{equation*}

Since $U_\text{r} = e^{-i t H_{I_\text{r}}}$,  because $[H_\theta, H_a]=0$ we have that $U_\text{r}$ can be decomposed as 
\begin{equation*}
    U_\text{r} = U_\theta U_a,
\end{equation*}
where $U_\theta = e^{-i t H_\theta}$ and $U_a = e^{-i t H_a}$. For the rest of the proof, we show that $U_a$ and $U_\theta$ each satisfy  Eqs.~\eqref{theorem_3_resonant_eq1} and~\eqref{theorem_3_resonant_eq2}, respectively.

Again, we suppose the initial global state starts in its general form Eq.~\eqref{2_qubit_general_rho}. For the case of $U_a$, notice that $H_a$ has the same form as $H_{I_\text{nr}}$ (Eq.~\eqref{general_hi_non_resonant2}) with the exchange of $g$ by $g(a-1)$. Therefore, a similar equation to Eq.~\eqref{SR_condition_proof_nr} is immediately valid, namely 
\begin{equation*}
    \frac{1}{2} \mathcal{U}_\theta [\rho_\text{gen}]+\frac{1}{2} \mathcal{U}^\dagger_\theta[\rho_\text{gen}] = (1-p_1) \rho_{\text{gen}} + p_1 \mathcal{C}_1[\rho_\text{gen}], 
\end{equation*}
where $p_1= \sin^2((a-1)gt/2)$, and $\mathcal{C}_1[\rho_\text{gen}]$ is defined exactly as in Eq.~\eqref{C_nr_map_def}, which we already proved is a CPTP map.

As for $U_\theta$, we compute
\begin{align}
    & \frac{1}{2} \mathcal{U}_\theta[\rho_\text{gen}]+\frac{1}{2} \mathcal{U}^\dagger_\theta [\rho_\text{gen}] = \frac{1}{2}U_\theta \rho_{\text{gen}} U_\theta^\dagger + \frac{1}{2}U_\theta^\dagger \rho_{\text{gen}} U_\theta = \begin{pmatrix}
        p_{00} &  f_\theta(2,1)^* & f_\theta(3,1)^* & \gamma^* \\
        f_\theta(2,1) &  f_\theta(2,2) & f_\theta(3,2)^* & f_\theta(4,2)^* \\
        f_\theta(3,1) & f_\theta(3,2) &  f_\theta(3,3) & f_\theta(4,3)^* \\
       \gamma & f_\theta(4,2)  & f_\theta(4,3) &  p_{11}
    \end{pmatrix}, \label{SR_general_qubits_step1}
\end{align}
where now we have a more complicated matrix defined via the functions
\begin{align}
    f_\theta(2,2) & =\frac{1}{2} \left((p_{01}-p_{10}) \cos (2 g  t)+p_{01}+p_{10} \right), \nonumber \\
    f_\theta(3,3) & =\frac{1}{2} \left(p_{01}+p_{10}+(p_{10}-p_{01}) \cos(2 g t) \right), \nonumber \\
    f_\theta(2,1) & = \nu_1 \cos^2 (g t)-\nu_2 e^{i \theta} \sin^2 (g  t),\nonumber \\
    f_\theta(3,1) & = \nu_2 \cos^2 (g  t)-\nu_1 e^{-i \theta } \sin^2 (g  t),\nonumber \\
    f_\theta(3,2) & = \eta e^{-i \theta } (\cos (\xi-\theta)-i \cos (2 g  t) \sin (\xi -\theta )), \nonumber \\
    f_\theta(4,2) & = \nu_3 \cos^2 (g  t)- \nu_4 e^{-i \theta } \sin^2 (g  t),\nonumber \\
    \text{and} & \nonumber \\
    f_\theta(4,3) & =\nu_4 \cos^2 (g  t)-\nu_3 e^{i \theta } \sin^2 (g  t). \nonumber
\end{align}
Eq.~\eqref{SR_general_qubits_step1} can be factorized as 
\begin{align}
     \frac{1}{2} \mathcal{U}_\theta[\rho_\text{gen}]+\frac{1}{2} \mathcal{U}^\dagger_\theta [\rho_\text{gen}] = (1-p_{d_2}) \rho_{\text{gen}} + p_{d_2} \mathcal{C}_2[\rho_\text{gen}], \label{SR_qubits_theta}
\end{align}
where
\begin{equation*}
    p_{d_2} = \sin^2(gt), 
\end{equation*}
and 
\begin{align}
    \mathcal{C}_2[\rho_\text{gen}] = \begin{pmatrix}
        p_{00} &  -\nu_2^* e^{-i \theta }  &  -\nu_1^* e^{i \theta} & \gamma^* \\
        -\nu_2 e^{i \theta }  &  p_{10} & \eta e^{-i \xi}e^{i 2\theta} & -\nu_4^* e^{i \theta} \\
        -\nu_1 e^{-i \theta} & \eta e^{i \xi} e^{-i 2\theta} &  p_{01} & -\nu_3^* e^{-i \theta} \\
        \gamma & -\nu_4 e^{-i \theta}  & -\nu_3 e^{i \theta} &  p_{11}
    \end{pmatrix}. \label{C_definition}
\end{align}

Eq.~\eqref{SR_qubits_theta} gives us the desired form of factorization (i.e., the form of Eq.~\eqref{SR_condition_T2}). The remaining step is to prove that the map $\mathcal{C}_2(\bullet)$, described in~\eqref{C_definition}, is a completely positive trace-preserving map. Again, trace preservation follows trivially, so we proceed to show complete positivity.

Computing explicitly $\Lambda_{\mathcal{C}_2}$ (defined in Eq.~\eqref{CJ_matrix_def}), we obtain
\begin{equation*}
    \Lambda_{\mathcal{C}_2} =\left(
\begin{array}{cccccccccccccccc}
 1 & 0 & 0 & 0 & 0 & 0 & -e^{i \theta } & 0 & 0 & -e^{-i \theta } & 0 & 0 & 0 & 0 & 0 & 1 \\
 0 & 0 & 0 & 0 & 0 & 0 & 0 & 0 & 0 & 0 & 0 & 0 & 0 & 0 & 0 & 0 \\
 0 & 0 & 0 & 0 & 0 & 0 & 0 & 0 & 0 & 0 & 0 & 0 & 0 & 0 & 0 & 0 \\
 0 & 0 & 0 & 0 & 0 & 0 & 0 & 0 & 0 & 0 & 0 & 0 & 0 & 0 & 0 & 0 \\
 0 & 0 & 0 & 0 & 0 & 0 & 0 & 0 & 0 & 0 & 0 & 0 & 0 & 0 & 0 & 0 \\
 0 & 0 & 0 & 0 & 0 & 0 & 0 & 0 & 0 & 0 & 0 & 0 & 0 & 0 & 0 & 0 \\
 -e^{-i \theta } & 0 & 0 & 0 & 0 & 0 & 1 & 0 & 0 & e^{-2 i \theta } & 0 & 0 & 0 & 0 & 0 & -e^{-i \theta } \\
 0 & 0 & 0 & 0 & 0 & 0 & 0 & 0 & 0 & 0 & 0 & 0 & 0 & 0 & 0 & 0 \\
 0 & 0 & 0 & 0 & 0 & 0 & 0 & 0 & 0 & 0 & 0 & 0 & 0 & 0 & 0 & 0 \\
 -e^{i \theta } & 0 & 0 & 0 & 0 & 0 & e^{2 i \theta } & 0 & 0 & 1 & 0 & 0 & 0 & 0 & 0 & -e^{i \theta } \\
 0 & 0 & 0 & 0 & 0 & 0 & 0 & 0 & 0 & 0 & 0 & 0 & 0 & 0 & 0 & 0 \\
 0 & 0 & 0 & 0 & 0 & 0 & 0 & 0 & 0 & 0 & 0 & 0 & 0 & 0 & 0 & 0 \\
 0 & 0 & 0 & 0 & 0 & 0 & 0 & 0 & 0 & 0 & 0 & 0 & 0 & 0 & 0 & 0 \\
 0 & 0 & 0 & 0 & 0 & 0 & 0 & 0 & 0 & 0 & 0 & 0 & 0 & 0 & 0 & 0 \\
 0 & 0 & 0 & 0 & 0 & 0 & 0 & 0 & 0 & 0 & 0 & 0 & 0 & 0 & 0 & 0 \\
 1 & 0 & 0 & 0 & 0 & 0 & -e^{i \theta } & 0 & 0 & -e^{-i \theta } & 0 & 0 & 0 & 0 & 0 & 1 \\
\end{array}
\right).
\end{equation*}
This matrix has eigenvalues $\{ 0, 0, 0, 0, 0, 0, 0, 0, 0, 0, 0, 0, 0, 0, 0, 4 \}$. Therefore, it is positive semi-definite and the channel $\mathcal{C}_2(\bullet)$ is completely positive, and this concludes the proof. 

{
\section{$\Delta \mathcal{I}(A:B)$ for qubits and the Clausius Inequality}
\label{qubits_mutual_information}
The relation between heat flow and the change in mutual information comes from    
\begin{equation}
    \Delta \mathcal{I}(A:B) = (\beta_A - \beta_B) \langle \mathcal{Q}_A \rangle - S(\rho_A'|| \rho_A) - S(\rho_B'||\rho_B), \label{mutual information change}
\end{equation}
which is a consequence of Eq.~\eqref{second_law_gen_deduction}, derived in Appendix \ref{Appendix_A}.
To show that, for short times, the main term contributing to an anomalous heat flow is also the one which contributes to a negative mutual information difference $\Delta \mathcal{I}$, we proceed as follows. Let the initial global two-qubit density matrix be given by Eq.~\eqref{general_rho_2qubits_thermal} (with $\nu_0=\nu_1=\nu_2=0$ for simplicity, since they are irrelevant for heat exchange). In this case, if the global state evolves according to the general resonant energy-conserving Hamiltonian (Eq.~\eqref{qubit_general_energy_conserving_Hamiltonian}), then for short times we obtain the following expression the change in mutual information:
\begin{align}
   & \Delta\mathcal{I}(A:B) = (\beta_A - \beta_B) 2 \eta \omega \sin(\xi - \theta) g t \nonumber \\
    & +  \left(\omega  (\beta_A - \beta_B) \left(\frac{1}{e^{\beta_B \omega }+1}-\frac{1}{e^{\beta_A \omega }+1}\right)-4 \eta ^2 \sin ^2(\theta -\xi) (\cosh (\omega \beta_A)+\cosh (\omega \beta_B)+2)\right) g^2 t^2  + \mathcal{O}(g^3 t^3).
\end{align}
The full analytic expression for the mutual information is impractical to display here, but for our analysis, it is sufficient to consider the first two orders in $gt$. 

The first-order term, $(\beta_A - \beta_B) 2 \eta \omega \sin(\xi - \theta) g t$, is the leading contribution responsible for the negativity of $\Delta \mathcal{I}$, and is exactly $(\beta_A - \beta_B)$ times the first-order term of the heat average shown in Eq.~\eqref{general_heat_2qubits_thermal}. This comes from the heat contribution to the mutual information change, as given in Eq.~\eqref{mutual information change}, and is of the same order $gt$ heat flow term responsible for violating the noncontextuality inequality. Therefore, the same term that signals quantum contextuality and the reversal of heat flow is the one that contributes, in leading order, to the negative change in the mutual information at short times. 
}
{
\section{Proof of Eq. \eqref{heat_qudit_PSWAP}} \label{Appendix_qudits}
As described in the main text, we compute the heat received by a system $A$ which is a qudit with Hilbert space dimension $d \geq 2$. The local Hamiltonian of the qudit $A$ is given by Eq. \eqref{qudit_A_local_hamiltonian}. This qudit interacts with another qudit $B$, with Hilbert space dimension $d$, via a Partial SWAP unitary, given by \eqref{u to general swap}. Let the initial density matrix of the joint system $AB$ be given by Eq. \eqref{initial_state_qudit_AB}, then the heat received by $A$, after it interacts during an interval of time $t$ with $B$ is 
\begin{equation}
    \langle \mathcal{Q}_A \rangle = \tr \{ H_A(U_{PS} \rho U_{PS}^\dagger - \rho) \}. \label{qudit_heat_start}
\end{equation}

To compute this quantity, we make the following computation, using Eq. \eqref{u to general swap} 
\begin{align}
    U_{PS} \rho U_{PS}^\dagger = \cos^2(gt) \rho + i \sin (gt) \cos(gt) \rho S - i \sin(gt)\cos(gt) S \rho + \sin^2(gt) S \rho S.
\end{align}
This implies that
\begin{align}
    U_{PS} \rho U_{PS}^\dagger - \rho = \sin^2(gt) \left( S \rho S - \rho \right) + i \sin(gt) \cos(gt) \left( \rho S - S \rho \right).
\end{align}
Using this and Eqs. \eqref{qudit_A_local_hamiltonian} and \eqref{initial_state_qudit_AB} in Eq. \eqref{qudit_heat_start}, we obtain
\begin{align}
    \langle \mathcal{Q}_A \rangle & = \tr \Bigg( \sum_{k=0}^{d-1} \omega_k \ket{k}_A \bra{k}_A \otimes \mathbb{1}_B \Big[ \sin^2(gt)  \sum_{n,m,n',m'} p_{n,m,n',m'} ( S \ket{n,m}\bra{n',m'} S - \ket{n,m}\bra{n',m'}) \nonumber \\
    & + i \sin(gt) \cos(gt) \sum_{n,m,n',m'} p_{n,m,n',m'} (\ket{n,m}\bra{n',m'}S - S\ket{n,m}\bra{n',m'}) \Big] \Bigg) \nonumber \\
     & = \tr \Bigg( \sum_{k=0}^{d-1} \omega_k \ket{k}_A \bra{k}_A \otimes \mathbb{1}_B \Big[ \sin^2(gt)  \sum_{n,m,n',m'} p_{n,m,n',m'} ( \ket{m,n}\bra{m',n'} - \ket{n,m}\bra{n',m'}) \nonumber \\
    & + i \sin(gt) \cos(gt) \sum_{n,m,n',m'} p_{n,m,n',m'} (\ket{n,m}\bra{m',n'} - \ket{m,n}\bra{n',m'}) \Big] \Bigg),
\end{align}
where in the last equality we just used $S\ket{a,b} = \ket{b,a}$ and $\bra{a,b}S = \bra{b,a}$. Continuing with the computation, applying $\sum_k \omega_k \ket{k}_A \bra{k}_A \otimes \mathbb{1}_B$ in the operators at its right and summing over the Kronecker-deltas, we have
\begin{align}
    \langle \mathcal{Q}_A \rangle & = \tr \Bigg( \sin^2(gt)  \sum_{n,m,n',m'} p_{n,m,n',m'} ( \omega_m \ket{m,n}\bra{m',n'} - \omega_n \ket{n,m}\bra{n',m'}) \nonumber \\
    & + i \sin(gt) \cos(gt) \sum_{n,m,n',m'} p_{n,m,n',m'} (\omega_n \ket{n,m}\bra{m',n'} - \omega_m \ket{m,n}\bra{n',m'}) \Bigg) \nonumber \\
    & = \sin^2(gt) \sum_{n,m} p_{n,m,n,m} ( \omega_m - \omega_n) + i \sin(gt) \cos(gt) \sum_{n,m} p_{n,m,m,n} (\omega_n  - \omega_m), \nonumber
\end{align}
where in the last equality we used that $\tr(\ket{a,b} \bra{c,d}) = \delta_{a,c}\delta_{b,d}$. The equation above is the same as Eq. \eqref{heat_qudit_PSWAP}, this becomes clear if we use the marginals $p_{n,n}^A = \sum_m p_{n,m,n,m}$, $p_{m,m}^B = \sum_n p_{n,m,n,m}$ in the first term of r.h.s. of the equation. For the second term, we use that $\rho$ is hermitian, therefore $p_{n,m,m,n}^* = p_{m,n,n,m}$, and this implies that 
\begin{align}
    \sum_{n,m} p_{n,m,m,n} (\omega_n  - \omega_m) = \sum_{n,m} \omega_n (p_{n,m,m,n}-p_{n,m,m,n}^*) = 2i \sum_{n,m} \omega_n \text{Im}(p_{n,m,m,n}),
\end{align}
where in the first equality we used that $\sum_{n,m} p_{n,m,m,n} \omega_m = \sum_{m,n} p_{m,n,n,m} \omega_n$.
}

\section{Relevant parameters for the two qutrits and partial SWAP interactions case}\label{Appendix_C}

In Section \ref{Subsection_PS_qutrits}, we employed local thermal states. This implies that the diagonal terms $p_i$ ($i=0,1, \dots ,8$) correspond to those in the thermal density matrix $
\rho_{3 \otimes 3}$,
where $
\rho_{J} = \frac{e^{-\beta_{J}H_A}}{Z_J},$ with $H_A=H_B$ given by Eq. (\ref{qutrit hamiltonian}). Consequently, we can explicitly express $p_i$ as follows:
\begin{equation*}
p_{0} = \frac{e^{-\left(\beta_A+\beta_B\right)\omega_0}}{Z_AZ_B}, \quad p_{1} = \frac{e^{\left(-\beta_A\omega_0-\beta_B\omega_1\right)}}{Z_AZ_B}, \quad p_{2} = \frac{e^{\left(-\beta_A\omega_0-\beta_B\omega_2\right)}}{Z_AZ_B}, \quad p_{3} = \frac{e^{\left(-\beta_B\omega_0-\beta_A\omega_1\right)}}{Z_AZ_B},
\end{equation*}

\begin{equation*}
p_{4} = \frac{e^{-\left(\beta_A+\beta_B\right)\omega_1}}{Z_AZ_B}, \quad p_{5} = \frac{e^{\left(-\beta_A\omega_1-\beta_B\omega_2\right)}}{Z_AZ_B}, \quad p_{6} = \frac{e^{\left(-\beta_B\omega_0-\beta_A\omega_2\right)}}{Z_AZ_B}, \quad p_{7} = \frac{e^{\left(-\beta_B\omega_1-\beta_A\omega_2\right)}}{Z_AZ_B},
\end{equation*}
and
\begin{equation*}
p_{8}= \frac{e^{-\left(\beta_A+\beta_B\right)\omega_2}}{Z_AZ_B}.
\end{equation*}
Utilizing these results in Eq.~\eqref{heat for qutrit}, we obtain that
\begin{equation}
\langle \mathcal{Q}_A \rangle = \zeta \sin^2(gt) + \xi \sin(gt) \cos(gt), \label{qutrit_heat_thermal}
\end{equation}
where $\zeta$ is given by
\begin{equation*}
\zeta = -\omega_0 \left(p_3 + p_6 - p_1 - p_2\right)+ \omega_1 \left(p_1 - p_3 - p_5 + p_7\right) + \omega_2 \left(p_2 + p_5 - p_6 - p_7\right),
\end{equation*}
indicating that $\zeta$ is solely a function of the energies $\omega_i$ and temperatures $\beta_J$. In contrast,
\begin{equation*}
\xi = \eta_{31}\left(\omega_2 - \omega_1\right) \sin(\theta_{31})
+\eta_{62} \left(\omega_3 - \omega_1\right) \sin(\theta_{62}) + \eta_{75} \left(\omega_3 - \omega_2\right) \sin(\theta_{75}),
\end{equation*}
encodes the system's correlations, represented by $\eta_{ij}$ and phases $\theta_{ij}$. Notably, if we select three states separated by the same quantum number, such that $\omega_1 - \omega_0 = \omega_2 - \omega_1 \equiv \Delta\omega$, and set the phases $\theta_{ij}=\pi/2$, we achieve the simplest form
\begin{equation*}
\xi = \Delta\omega \left( \eta_{31} + 2\eta_{62} + \eta_{75} \right),
\end{equation*}
which indicates that the inversion of heat flux is strictly determined by negative values of the correlations $\eta_{ij}$.

{
We have also computed the first-order term, in $gt$, for the change of the mutual information (the higher-order terms are too large and give no additional information for our purpose) {\color{black}
\begin{equation}
   \Delta\mathcal{I}(A:B) =  2 g t (\beta_A-\beta_B) \xi + \mathcal{O}(g^2 t^2).
\end{equation}}
This term is, as in the two-qubits case, equal to $(\beta_A - \beta_B)$ times the first-order term of the heat average of Eq.~\eqref{qutrit_heat_thermal}. It comes from the heat contribution for the change of the mutual information of Eq.~\eqref{mutual information change} and is the leading order term causing the initial negativity in the change of the mutual information.
}

\end{document}